\newcommand{\removep}{\texttt{removeP}}
\newcommand{\removes}{\texttt{removeS}}
\newcommand{\remover}{\texttt{removeR}}
\def\relaxtheorem#1{%
\expandafter\let\csname#1\endcsname\relax
\expandafter\let\csname c@#1\endcsname\relax
}
\declaretheorem[name={Theorem},style=plain]{theorem}
\declaretheorem[name={Lemma},sibling=theorem]{lemma}
\declaretheorem[name={Observation},sibling=theorem]{observation}
\declaretheorem[name={Definition},sibling=theorem,style=definition]{definition}
\newcommand{\dual}[1]{\ensuremath{#1^{*}}}
\newcommand{\fv}[1]{\ensuremath{#1^{\diamond}}}
\newcommand{\contr}[2]{\ensuremath{#1 / #2}}
\newcommand{\remove}[2]{\ensuremath{#1 - #2}}
\newcommand{\abs}[1]{\left\lvert #1 \right\rvert}
\newcommand{\set}[1]{\left\{ #1 \right\}}
\newcommand{\sizeof}[1]{\left\lvert #1 \right\rvert}
\newcommand{\cond}{\mathrel{}\middle\vert\mathrel{}}
\newcommand{\eid}{\ensuremath{\mathrm{id}}}
\author{Jacob Holm}{University of Copenhagen, Denmark}{jaho@di.ku.dk}{http://orcid.org/0000-0001-6997-9251}{Jacob Holm is supported by Mikkel Thorup's Advanced Grant DFF-0602-02499B from the Danish Council for Independent Research under the Sapere Aude research career programme.}
\author{Giuseppe F. Italiano}{University of Rome Tor Vergata, Italy}{giuseppe.italiano@uniroma2.it}{https://orcid.org/0000-0002-9492-9894}{Giuseppe F. Italiano is partially supported by the Italian Ministry of Education, University and Research under Project AMANDA (Algorithmics for MAssive and Networked DAta).}
\author{Adam Karczmarz}{University of Warsaw, Poland}{a.karczmarz@mimuw.edu.pl}{https://orcid.org/0000-0002-2693-8713}{Adam Karczmarz is supported by the grants 2014/13/B/ST6/01811 and 2017/24/T/ST6/00036 of the Polish National Science Center.}
\author{Jakub Łącki}{Google Research, USA}{jlacki@google.com}{https://orcid.org/0000-0001-9347-0041}{When working on this paper Jakub Łącki was partly supported by the EU FET project MULTIPLEX no. 317532 and the Google Focused Award on "Algorithms for Large-scale Data Analysis" and Polish National Science Center grant number 2014/13/B/ST6/01811. Part of this work was done while Jakub Łącki was visiting the Simons Institute for the Theory of Computing.}
\author{Eva Rotenberg}{Technical University of Denmark, Denmark}{erot@dtu.dk}{http://orcid.org/0000-0001-5853-7909}{}
\titlerunning{Decremental SPQR-trees for Planar Graphs}
\authorrunning{J. Holm, G.\,F. Italiano, A. Karczmarz, J. Łącki, and E. Rotenberg}
\subjclass{\ccsdesc[500]{Theory of computation~Dynamic graph algorithms}, 
\ccsdesc[300]{Theory of computation~Graph algorithms analysis}, 
\ccsdesc[300]{Theory of computation~Data structures design and analysis}}
\keywords{Graph embeddings,
	data structures,
	graph algorithms,
	planar graphs,
	SPQR-trees,
	triconnectivity.}
\begin{document}

\title{Decremental SPQR-trees for Planar Graphs}

\maketitle

\begin{abstract}
We present a decremental data structure for maintaining the SPQR-tree of a planar graph subject to edge contractions and deletions. The update time, amortized over $\Omega(n)$ operations, is $O(\log^2 n)$.

Via SPQR-trees, we give a decremental data structure for maintaining $3$-vertex connectivity in planar graphs. It answers queries in $O(1)$ time and processes edge deletions and contractions in $O(\log^2 n)$ amortized time.
This is an exponential improvement over the previous best bound of $O(\sqrt{n}\,)$ that has stood for over 20 years. In addition, the previous data structures only supported edge deletions.
\end{abstract}

\section{Introduction}

A graph algorithm is called \emph{dynamic} if it is able to answer queries about a given property while the graph is undergoing a sequence of updates, such as edge insertions and deletions. It is \emph{incremental} if it handles only insertions,
\emph{decremental} if it handles only deletions,
and \emph{fully dynamic} if it handles both insertions and deletions.
In designing dynamic graph algorithms, one is typically interested in achieving fast query times (either constant or polylogarithmic),
while minimizing the update times. The
ultimate goal is to perform fast both queries and updates, i.e., to have both query and update times either constant or polylogarithmic.
So far, the quest for obtaining polylogarithmic time algorithms has been successful only in few cases.
Indeed, efficient dynamic algorithms with \emph{polylogarithmic} time per update are known
only for few problems, such as dynamic connectivity,
$2$-connectivity,  minimum spanning tree and maximal matchings in undirected graphs
(see, e.g., \cite{bgs2015,HeTh97,HoLiTh01,Holm15,KaKiMo13,S16,Thorup00, Nilsen13}).
On the other hand, some dynamic problems appear to be inherently  harder. For example, the fastest known algorithms for basic dynamic problems, such as reachability,
transitive closure, and dynamic shortest paths have only \emph{polynomial} times per update
(see, e.g., \cite{CHILP16,DI04,DI08,K99,RZ08,S04,T05}).

A similar situation holds for planar graphs where dynamic problems have been studied extensively,
see e.g.~\cite{Abraham2012, Diks2007,  Eppstein96, Eppstein92,  Giammarresi:96, Gustedt98, 2017arXiv170610228H, INSW11, scc-decomposition, steiner-tree, Lacki2011, decremental-connectivity,  Sub-ESA-93}.
Despite this long-time effort, the best algorithms known for some basic problems on planar graphs,
such as dynamic shortest paths and dynamic planarity testing, 
still have polynomial update time bounds.
For instance, for fully dynamic shortest paths on planar graphs the best known bound per operation
is\footnote{Throughout the paper, we use the notation 
$\widetilde{O}(f(n))$ to hide polylogarithmic factors.} $\widetilde{O}(n^{2/3})$
amortized~\cite{FR06,INSW11,KMNS12,K05}, while for fully dynamic planarity testing the best known bound per operation is $O(\sqrt{n}\,)$ amortized~\cite{Eppstein96}.

In the last years, this exponential gap between polynomial and polylogarithmic bounds has sparkled some new exciting research. On one hand, it was shown that there are dynamic graph problems,
including fully dynamic shortest paths, fully dynamic single-source reachability and fully dynamic strong connectivity, 
 for which it may be difficult to achieve subpolynomial update bounds. 
 This started with the pioneering work by Abboud and Vassilevska-Williams~\cite{AW14}, who proved 
conditional lower bounds
based on popular conjectures. Very recently, Abboud and Dahlgaard~\cite{AD16} proved polynomial update time lower bounds
for dynamic shortest paths also on planar graphs, again based on popular conjectures.

On the other hand, the question of improving the 
update bounds from polynomial to polylogarithmic, has, for several other dynamic graph problems, received much attention in the last years. For instance, there was a very recent improvement from polynomial to polylogarithmic bounds for decremental single-source reachability (and strongly connected components) on planar graphs: more precisely, the improvement was from $O(\sqrt{n}\,)$ amortized~\cite{scc-decomposition} to $O(\log^2 n\log\log n)$ amortized~\cite{IKLS17} (both amortizations are over sequences of $\Omega(n)$ updates). Other problems that received a lot of attention are fully dynamic connectivity and minimum spanning tree in general graphs.
Up to very recently, the best worst-case bound for both problems was $O(\sqrt{n}\,)$ per update~\cite{EGIN97}: since then, much effort has been devoted towards improving this bound (see e.g., ~\cite{KaKiMo13,KR16,NS17,NSW17,W17}). 

In this paper, we follow the ambitious goal of achieving polylogarithmic update bounds for dynamic graph problems.
In particular, we show how to improve the update times from polynomial to polylogarithmic for another important problem on planar graphs: decremental $3$-vertex connectivity. Given a graph $G=(V,E)$ and two vertices $x,y\in V$ we say that $x$ and $y$ are $2$-vertex connected (or, as we say in the following, \emph{biconnected}) if there are at least two vertex-disjoint paths between $x$ and $y$ in $G$. We say that $x$ and $y$ are
$3$-vertex connected (or, as we say in the following, \emph{triconnected}) if there are at least three vertex-disjoint paths between $x$ and $y$ in $G$. The decremental planar triconnectivity problem consists of maintaining a planar graph $G$ subject to an arbitrary sequence of edge deletions, edge contractions, and query operations which test whether two arbitrary input vertices are triconnected. 
We remark that decremental triconnectivity on planar graphs is of particular importance. Apart from being a fundamental graph property, a triconnected planar graph has only one planar embedding, a property which is heavily used in graph drawing, planarity testing and testing for isomorphism~\cite{Hopcroft73, Hopcroft74, kant2001}. %
Furthermore, our extended repertoire of operations, which includes edge contractions, contains all operations needed to obtain a graph minor, which is another important notion for planar graphs. 

While polylogarithmic update bounds for decremental $2$-edge and $3$-edge  connectivity, and for decremental biconnectivity on planar graphs have been known for more than two decades~\cite{Giammarresi:96}, decremental triconnectivity on planar graphs presents some special challenges. 
Indeed, while
connectivity cuts for $2$-edge and $3$-edge connectivity, and for biconnectivity have simple counterparts in the dual graph or in the vertex-face graph (see Section~\ref{sec:preliminaries} for a formal definition of vertex-face graph), triconnectivity cuts (separation pairs, i.e., pairs of vertices whose removal disconnects the graph) have a much more complicated structure in planar graphs. Roughly speaking, maintaining $2$-edge and $3$-edge connectivity cuts in a planar graph under edge deletions corresponds to maintaining respectively self-loops and cycles of length 2 (pairs of parallel edges) in the dual graph under edge contractions. On the other side, maintaining biconnectivity and triconnectivity cuts in a planar graph under edge deletions corresponds to maintaining, respectively, cycles of length 2 and cycles of length 4 in the vertex-face graph.
While detecting cycles of length 2 boils down to finding duplicates in the multiset of all edges, detecting cycles of length 4 under edge contractions is far more complex.
We believe that this is the reason why designing a fast solution for decremental triconnectivity on planar graphs has been an elusive goal, and the best bound known of $O(\sqrt{n}\,)$ per update~\cite{Eppstein:1998} has been standing for over two decades.

\subparagraph*{Our results and techniques.} In this paper, we show how to solve the decremental triconnectivity problem on planar graphs in constant time per query and  $O(\log ^2 n)$ amortized time per edge deletion or contraction, over any sequence of $\Omega (n)$ deletions and contractions. This is an exponential speed-up over the previous $O(\sqrt{n}\,)$ long-standing bound~\cite{Eppstein:1998}. To obtain our bounds,  we also need to solve decremental biconnectivity on planar graphs in constant time per query and  $O(\log ^2 n)$ amortized time per edge deletion or contraction.
(A better $O(\log n)$ amortized bound can be obtained if no contractions are allowed~\cite{2017arXiv170610228H}).
Our results are achieved with the help of two new tools, which may be of independent interest. 

The first tool is an algorithm for efficiently detecting and reporting cycles of length 4 as they arise in a dynamic planar embedded graph subject to edge contractions and insertions. The algorithm works for a graph with bounded face-degree, i.e, where each face is delimited by at most some constant number of edges. Specifically, given a plane embedded graph with bounded face-degree subject to edge-contractions and edge-insertions across a face, after each dynamic operation we can report all edges that lie on a length-$4$ cycle because of this dynamic operation. 
The total running time is $O(n \log n)$. One of the challenges that we face is that a planar graph may have as many as $\Omega(n^2)$ distinct cycles of length $4$.
Still, we give a surprisingly simple algorithm for solving this problem.
The difficulty of the algorithm lies in the analysis --- in fact, this analysis is the most technically involved part of this paper.
 
The second tool is a new data structure that maintains the SPQR-tree~\cite{Battista:96} of a planar graph, while the graph is updated with edge deletions and edge contractions, %
 in $O(\log ^2 n)$ amortized time per operation. While incremental algorithms for maintaining the SPQR tree were known for more than two decades~\cite{Battista:96,BT96}, 
 to the best of our knowledge 
no decremental algorithm was previously known.

\subparagraph{Organization of the paper.}
The remainder of the paper is organized as follows.
In Section~\ref{sec:preliminaries}, we introduce notation and definitions that we later use.
Then, in Section~\ref{sec:overview} we present a high-level overview of our results.
Section~\ref{sec:SPQR} presents our new algorithm for maintaining an SPQR-tree during edge deletions and contractions.

Due to space constraints, an algorithm for detecting cycles of length $4$ under contractions, which is a key tool in maintaining an SPQR tree is described in Appendix~\ref{sec:4cycle}.
Moreover, 
the detailed discussion of how to use the SPQR-trees to maintain information about triconnectivity
is deferred to Appendix~\ref{sec:3vertex}.
Finally, 
the proofs omitted from Section~\ref{sec:SPQR} are given in Appendix~\ref{sec:SPQR-proofs}.

\section{Preliminaries}\label{sec:preliminaries}
Throughout the paper we use the term \emph{graph} to denote an undirected \emph{multigraph}, that is we allow the graphs to have parallel edges and self-loops.
Formally, each edge $e$ of such a graph is a pair $(\{u,w\},\eid(e))$ consisting of a pair of vertices and a unique \emph{integer identifier} used to distinguish between the parallel edges.
For simplicity, in the following we skip the identifier and use just $uw$ to denote one of the edges connecting vertices $u$ and $w$.
If the graph contains no parallel edges and no self-loops, we call it \emph{simple}.

Given a graph $G$, we use $V(G)$ to denote the vertices, and $E(G)$ to denote the edges of $G$.  For any $X\subseteq V(G)$ let $G[X]$ denote the subgraph $(X,\set{(\set{u,v},l)\in E(G)\cond u,v\in X})$ of $G$ \emph{induced by} $X$.

The \emph{components} of a graph $G$ are the minimal subgraphs $H\subseteq G$ such that for every edge $uv\in E(G)$, $u\in V(H)$ if and only if $v\in V(H)$.  The components of a graph partition the vertices and edges of the graph.  A graph $G$ is \emph{connected} if it consists of a single component.  For a positive integer $k$, a graph is \emph{$k$-vertex connected} if and only if it is connected, has at least $k$ vertices, and stays connected after removing any set of at most $k-1$ vertices.  The \emph{local vertex connectivity} of a pair of vertices $u$, $v$, denoted $\kappa(u,v)$, is the maximal number of internally vertex-disjoint $u,v$-paths.  By Menger's Theorem~\cite{Menger:1927}, $G$ is $k$-vertex connected if and only if $\kappa(u,v)\geq k$ for every pair of non-adjacent vertices $u,v$. We say that $u$, $v$ are (locally) $k$-vertex connected if $\kappa(u,v)\geq k$.
We follow the common practice of using \emph{biconnected} as a synonym for $2$-vertex connected and \emph{triconnected} as a synonym for $3$-vertex connected.
An articulation point $v$ of $G$ is a vertex whose removal disconnects $G$. Thus a graph is biconnected if and only if it has no articulation points.%

Let $G$ be a graph and $e \in E(G)$.
We use $\remove{G}{e}$ to denote the graph obtained from $G$ by removing~$e$.
If $e$ is not a self-loop, we use $\contr{G}{e}$ to denote the graph obtained by contracting~$e$.
A \emph{cycle} $C$ of length $\sizeof{C}=k$ in a graph $G$ is a cyclic sequence of edges $C = e_1, e_2 \ldots, e_k$ where $e_i=u_iu_{i+1}$ for $1\leq i<k$ and $e_k=u_ku_1$.  A cycle is \emph{simple} if $\eid(e_i)\neq \eid(e_j)$ and $u_i\neq u_j$ for $i\neq j$. We sometimes abuse notation and treat a  cycle as a set of edges or a cyclic sequence of vertices.
Note that this definition allows cycles of length $1$ (a self-loop) or $2$ (a pair of parallel edges).
Let $G$ be a planar embedded graph. For each component $H$ of $G$, let
 $\dual{H}$ denote the dual graph of $H$, defined as the graph obtained by creating a vertex for each face in the embedding of $H$, and an edge $\dual{e}$ (called the \emph{dual edge} of $e$), connecting the two (not necessarily distinct) faces that $e$ is incident to.  Let $\dual{G}$ denote the graph obtained from $G$ by taking the dual of each component.

Each face $f$ in a planar embedded graph is bounded by a (not necessarily simple) cycle called the \emph{face cycle} for $f$. We call the length of this cycle the \emph{face-degree} of $f$. We call any other cycle a \emph{separating cycle}.

Let $G$ be a connected planar embedded multigraph with at least one edge.
Define the set $\fv{E}(G)$ of \emph{corners}\footnote{For alternative definitions, see e.g.~\cite{Holm2017} and~\cite{NYAS:NYAS340}. The latter uses the name \emph{angles} for what we call corners.} of $G$ to be the the set of ordered pairs of (not necessarily distinct) edges $(e_1,e_2)$ such that $e_1$ immediately precedes $e_2$ in the clockwise order around some vertex, denoted $v(e_1,e_2)$.  Note that if $(e_1,e_2)\in \fv{E}(G)$, then $(\dual{e_2},\dual{e_1})\in \fv{E}(\dual{G})$.
We denote by $\fv{G}$ %
the \emph{vertex-face} graph\footnote{A.k.a. the \emph{vertex-face incidence graph}~\cite{doi:10.1137/0406017}, the \emph{angle graph}~\cite{NYAS:NYAS340}, and the \emph{radial graph}~\cite{ARCHDEACON199237}%
.} of $G$ (see Figure~\ref{fig:facevertexgraph}). This is a plane embedded multigraph with vertex set $V(G)\cup V(\dual{G})$, and an edge between $v(e_1,e_2)$ and $v(\dual{e_2},\dual{e_1})$ for each corner $(e_1,e_2)\in \fv{E}(G)$.
Abusing notation slightly, we can write $\fv{G}$ as $=(V(G)\cup V(\dual{G}),\fv{E}(G))$.
We use the following well-known facts about the vertex-face graph:
\begin{enumerate}
\item $\fv{G}$ is bipartite and planar, with a natural embedding given by the embedding of $G$.
\item\label{fact:fvSameDual} The vertex-face graphs of $G$ and $\dual{G}$ are the same: $\fv{G} = \fv{(\dual{G})}$.
\item\label{fact:fvEdges} There is a one-to-one correspondence between the edges of $G$ and the faces of $\fv{G}$ (in the natural embedding, each face of $\fv{G}$ contains exactly one edge of $G$ interior, see Fig~\ref{fig:facevertexgraph}).
\item\label{fact:fvDual4regular} $\dual{(\fv{G})}$ (also known as the \emph{medial graph}) is $4$-regular.
\item\label{fact:fv2conn} $\fv{G}$ is simple if and only if $G$ is loopless and biconnected (See e.g.~\cite[Theorem 5(i)]{Brinkmann:2005:GSQ:2651845.2652314}).
\item\label{fact:fv3conn} $\fv{G}$ is simple, triconnected and has no separating $4$-cycles if and only if $G$ is simple and triconnected (See e.g.~\cite[Theorem 5(iv)]{Brinkmann:2005:GSQ:2651845.2652314}).
\end{enumerate}

\begin{figure}
\centering
\begin{adjustbox}{max width=0.4\linewidth}
\begin{tikzpicture}[y=0.80pt, x=0.80pt, yscale=-1.000000, xscale=1.000000, inner sep=0pt, outer sep=0pt]
\clip (0,0) rectangle (200.16078mm, 120.87504mm);
\begin{scope}[
        shift={(-65.29457,-375.84708)},
        every path/.style={
          draw=black,
          line join=miter,
          line cap=butt,
          even odd rule,
          line width=0.800pt,
          },
        every node/.style={
          draw,
          circle,
          fill = black,
          inner sep=0pt,
          minimum size=3mm,
          }
        ]
  \coordinate (v1) at (179.5971,633.0936);
  \coordinate (v2) at (261.0256,503.0936);
  \coordinate (v3) at (361.0256,608.8079);
  \coordinate (v4) at (296.7399,735.9507);
  \coordinate (v5) at (295.3114,625.9507);
  \coordinate (v6) at (259.5971,597.3793);
  \coordinate (v7) at (450.1685,500.2364);
  \coordinate (v8) at (548.7399,503.0936);
  \coordinate (v9) at (457.3114,597.3793);
  \coordinate (v10) at (525.8828,671.6650);
  \coordinate (v11) at (435.8828,667.3793);
  \coordinate (v12) at (657.3113,580.2364);
  \coordinate (v13) at (645.8828,461.6650);
  \coordinate (v14) at (608.7399,507.3793);

  \path (v1) -- (v2) -- (v3) -- (v4) -- (v5) -- (v6) -- cycle;
  \path (v1) -- (v4);
  \path (v6) -- (v2);
  \path (v5) -- (v3);
  \path (v3) -- (v7) -- (v8) -- (v9) -- (v10) -- (v11) -- cycle;
  \path (v7) -- (v9) -- (v11);
  \path (v8) -- (v12) -- (v13) -- (v14);
  \path (v13) -- (v8) -- (v14) -- (v12);
  \path (v10) -- (v12);
  \node at (v1) {};
  \node at (v2) {};
  \node at (v3) {};
  \node at (v4) {};
  \node at (v5) {};
  \node at (v6) {};
  \node at (v7) {};
  \node at (v8) {};
  \node at (v9) {};
  \node at (v10) {};
  \node at (v11) {};
  \node at (v12) {};
  \node at (v13) {};
  \node at (v14) {};
\end{scope}
\end{tikzpicture}
 \end{adjustbox}%
\hspace{0.15\linewidth}%
\begin{adjustbox}{max width=0.4\linewidth}
\begin{tikzpicture}[y=0.80pt, x=0.80pt, yscale=-1.000000, xscale=1.000000, inner sep=0pt, outer sep=0pt]
\clip (0,0) rectangle (200.16078mm, 120.87504mm);
\begin{scope}[
        shift={(-65.29457,-375.84708)},
        ]
  \coordinate (v1) at (179.5971,633.0936);
  \coordinate (v2) at (261.0256,503.0936);
  \coordinate (v3) at (361.0256,608.8079);
  \coordinate (v4) at (296.7399,735.9507);
  \coordinate (v5) at (295.3114,625.9507);
  \coordinate (v6) at (259.5971,597.3793);
  \coordinate (v7) at (450.1685,500.2364);
  \coordinate (v8) at (548.7399,503.0936);
  \coordinate (v9) at (457.3114,597.3793);
  \coordinate (v10) at (525.8828,671.6650);
  \coordinate (v11) at (435.8828,667.3793);
  \coordinate (v12) at (657.3113,580.2364);
  \coordinate (v13) at (645.8828,461.6650);
  \coordinate (v14) at (608.7399,507.3793);

  \coordinate (f1) at (298.1334,586.8804);
  \coordinate (f2) at (249.7070,655.8966);
  \coordinate (f3) at (318.9138,643.9219);
  \coordinate (f4) at (235.9927,583.6108);
  \coordinate (f5) at (199.2813,414.0233);
  \coordinate (f6) at (416.3991,597.7974);
  \coordinate (f7) at (469.4094,643.5790);
  \coordinate (f8) at (554.9488,583.3400);
  \coordinate (f9) at (599.5256,519.4867);
  \coordinate (f10) at (482.6620,527.9201);
  \coordinate (f11) at (630.8499,509.8484);
  \coordinate (f12) at (604.3448,490.5719);
\end{scope}
\begin{scope}[
        shift={(-65.29457,-375.84708)},
        every path/.style={
          draw=red,
          line join=miter,
          line cap=butt,
          miter limit=4.00,
          even odd rule,
          line width=2.400pt
          },
        every node/.style={
          draw,
          thin,
          circle,
          color=black,
          fill=white,
          inner sep=0pt,
          minimum size=3mm,
          }
        ]
  \path (f1) -- (v2);
  \path (f1) -- (v6);
  \path (f2) -- (v1);
  \path (f3) -- (v4);
  \path (v5) -- (f1) -- (v3) -- (f3) -- cycle;
  \path (v4) -- (f2) -- (v6);
  \path (f2) -- (v5);
  \path (f4) -- (v1);
  \path (v2) -- (f4) -- (v6);
  \path (f5) .. controls (237.7332,415.9853) and (361.0256,608.8079) .. (v3);
  \path (v2) -- (f5) -- (v1);
  \path (v3) .. controls (369.0889,1037.5459) and (-110.5481,527.3386) .. (f5);
  \path (v4) .. controls (174.4974,789.4023) and (52.1935,516.5597) .. (f5);
  \path (v7) .. controls (438.0851,490.5719) and (241.3204,413.0565) .. (f5);
  \path (v8) .. controls (528.4436,467.6811) and (212.7913,411.0565) .. (f5);
  \path (v13) .. controls (609.1639,425.5138) and (197.1292,411.0565) .. (f5);
  \path (v7) -- (f6) -- (v11) -- (f7) -- (v10) --
    (f8) -- (v8) -- (f9) -- (v12);
  \path (v3) -- (f6) -- (v9);
  \path (v7) -- (f10) -- (v9);
  \path (f10) -- (v8);
  \path (f7) -- (v9);
  \path (v12) -- (f8) -- (v9);
  \path (v12) -- (f11) -- (v13) -- (f12) -- (v8);
  \path (f11) -- (v14) -- (f12);
  \path (f9) -- (v14);
  \path (v12) .. controls (734.4610,445.9951) and (655.3551,348.8175) .. (f5);
  \path (v10) .. controls (1162.9597,391.4227) and (386.6479,328.9725) .. (f5);
  \path (v11) .. controls (170.5591,1086.3801) and (-100.5689,391.3523) .. (f5);

  \node at (f1) {};
  \node at (f2) {};
  \node at (f3) {};
  \node at (f4) {};
  \node at (f5) {};
  \node at (f6) {};
  \node at (f7) {};
  \node at (f8) {};
  \node at (f9) {};
  \node at (f10) {};
  \node at (f11) {};
  \node at (f12) {};
\end{scope}
\begin{scope}[
        shift={(-65.29457,-375.84708)},
        every path/.style={
          draw=black!25,
          line join=miter,
          line cap=butt,
          miter limit=4.00,
          even odd rule,
          line width=0.800pt
          },
        every node/.style={
          draw=black,
          circle,
          fill=black,
          inner sep=0pt,
          minimum size=3mm,
          }
        ]
  \path (v1) -- (v2) -- (v3) -- (v4) -- (v5) -- (v6) -- cycle;
  \path (v1) -- (v4);
  \path (v6) -- (v2);
  \path (v5) -- (v3);
  \path (v3) -- (v7) -- (v8) -- (v9) -- (v10) -- (v11) -- cycle;
  \path (v7) -- (v9) -- (v11);
  \path (v8) -- (v12) -- (v13) -- (v14);
  \path (v13) -- (v8) -- (v14) -- (v12);
  \path (v10) -- (v12);
  \node at (v1) {};
  \node at (v2) {};
  \node at (v3) {};
  \node at (v4) {};
  \node at (v5) {};
  \node at (v6) {};
  \node at (v7) {};
  \node at (v8) {};
  \node at (v9) {};
  \node at (v10) {};
  \node at (v11) {};
  \node at (v12) {};
  \node at (v13) {};
  \node at (v14) {};
\end{scope}
\end{tikzpicture}
 \end{adjustbox}
\caption{Left: a plane embedded graph. Right: the corresponding vertex-face graph (red) and the underlying graph (gray). }
\label{fig:facevertexgraph}
\end{figure}
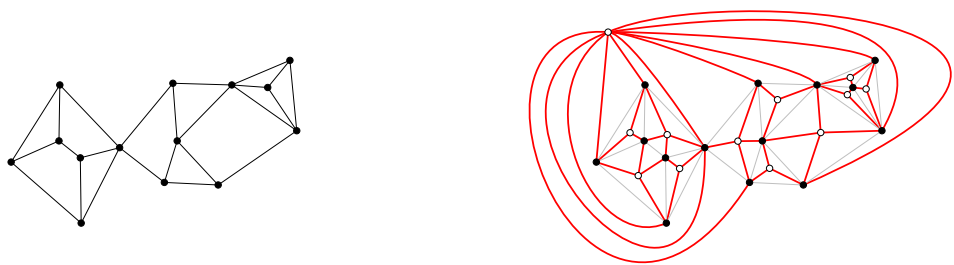

If $v$ is an articulation point in $G$ or has a self-loop, then in any planar embedding of $G$ there is at least one face $f$ whose face cycle contains $v$ at least twice. %
Any such $f$ is either an articulation point or has a self-loop in $\dual{G}$, and $v$ and $f$ are connected by (at least) two edges in $\fv{G}$.%

The dynamic operations on $G$ correspond to dynamic operations on $\dual{G}$ and $\fv{G}$.
Deleting a non-bridge edge $e$ of $G$ corresponds to contracting $\dual{e}$ in $\dual{G}$, that is $\dual{(\remove{G}{e})} = \contr{\dual{G}}{\dual{e}}$. Similarly, contracting an edge $e$ corresponds to deleting the corresponding edge from the dual, so $\dual{(\contr{G}{e})}=\remove{\dual{G}}{\dual{e}}$. Finally, deleting a non-bridge edge or contracting an edge corresponds to adding and then immediately contracting an edge across a face of $\fv{G}$ (and removing two duplicate edges).

The useful concept of a separation is well-defined, even for general graphs:
\begin{definition}\label{def:separation}
	Given a graph $G=(V,E)$, a \emph{separation} of $G$ is a pair of vertex sets $(V',V'')$ such that the induced subgraphs $G'=G[V'],G''=G[V'']$ cover $G$, and $V'\setminus V''$ and $V''\setminus V'$ are both nonempty. A separation is \emph{balanced} if $\max\set{\abs{V'},\abs{V''}\strut\!}\leq\alpha\abs{V}$ for some fixed constant $\frac{1}{2}\leq\alpha<1$. 
	If $(V',V'')$ is a separation of $G$, the set $S=V'\cap V''$ is called a \emph{separator} of $G$. A separator $S$ is \emph{small} if $\abs{S}=O(\sqrt{\abs{V}})$, and it is a \emph{cycle separator} if the subgraph of $G$ induced by $S$ is Hamiltonian.
\end{definition}

\section{Overview of Our Approach}\label{sec:overview}

Our data structure for decremental triconnectivity in planar graphs consists of two main ingredients. Before describing them, we need a few definitions.
We recall that a graph $G$ that is biconnected but not triconnected has at least one separation pair, i.e., a pair of vertices that can be removed to disconnect $G$:
\begin{definition}[Hopcroft and Tarjan~{\cite[p. 6]{hopcroft1973dividing}}]\label{def:separationpair}
  Let $\set{a,b}$ be a pair of vertices in a biconnected multigraph $G$. Suppose the edges of $G$ are divided into equivalence classes $E_1,E_2,\ldots,E_k$, such that two edges which lie on a common path not containing any vertex of $\set{a,b}$ except as an end-point are in the same class. The classes $E_i$ are called the \emph{separation classes} of $G$ with respect to $\set{a,b}$. If there are at least two separation classes, then $\set{a,b}$ is a \emph{separation pair} of $G$ unless (i) there are exactly two separation classes, and one class consists of a single edge, or (ii) there are exactly three classes, each consisting of a single edge (these two exceptions actually make it easier to state some properties related to separation pairs).

\end{definition}

Note that \emph{separation pair}, which is a pair of vertices, should not be confused with \emph{separation} (see Definition~\ref{def:separation}), which is a pair of vertex sets.

Our first ingredient for decremental triconnectivity is an algorithm for efficiently detecting separation pairs in planar graphs.
The second ingredient is the maintenance of the SPQR-tree~\cite{Battista:96} for each biconnected component of a graph $G$ under edge deletions and contractions.
The SPQR-tree captures the structure of all separating pairs, and can be defined as follows:

\begin{figure}
  \begin{adjustbox}{max width=\textwidth}
\begin{tikzpicture}[y=0.80pt, x=0.80pt, yscale=-1.000000, xscale=1.000000, inner sep=0pt, outer sep=0pt,
  edge/.style={
    draw=black,
    line join=miter,
    line cap=butt,
    even odd rule,
    line width=0.800pt,
  },
  vertex/.style={
    fill=black,
    line join=miter,
    line cap=butt,
    line width=0.800pt,
    draw,
    circle,
    minimum size=3mm,
  },
]
  \clip (0,0) rectangle (297mm, 210mm);
  \begin{scope}[
      shift={(0,-308.26772)},
    ]
    \coordinate (p9) at (474,506);
    \coordinate (p10) at (642,635);
    \coordinate (p11) at (555,860);
    \coordinate (p12) at (341,815);
    \coordinate (p13) at (270,620);
    \coordinate (p14) at (262,395);
    \coordinate (p15) at (322,499);
    \coordinate (p16) at (140,783);
    \coordinate (p17) at (288,962);
    \coordinate (p22) at (830,679);
    \coordinate (p23) at (761,906);
    \coordinate (p24) at (687,826);
    \coordinate (p26) at (711,730);

    \path[edge] (p9) -- (p10) -- (p11) -- (p12) -- (p13) -- cycle;
    \path[edge] (p14) -- (p9) -- (p15) -- (p13) -- cycle;
    \path[edge] (p15) -- (p14);
    \path[edge] (p13) -- (p16) -- (p17) -- (p11);
    \path[edge] (p10) -- (p22) -- (p23) -- (p11) -- (p24) -- (p23);
    \path[edge] (p24) -- (p26) -- (p22);
    \path[edge] (p10) -- (p26);

    \node[vertex] at (p9) {};
    \node[vertex] at (p10) {};
    \node[vertex] at (p11) {};
    \node[vertex] at (p12) {};
    \node[vertex] at (p13) {};
    \node[vertex] at (p14) {};
    \node[vertex] at (p15) {};
    \node[vertex] at (p16) {};
    \node[vertex] at (p17) {};
    \node[vertex] at (p22) {};
    \node[vertex] at (p23) {};
    \node[vertex] at (p24) {};
    \node[vertex] at (p26) {};
  \end{scope}

\end{tikzpicture}

\begin{tikzpicture}[y=0.80pt, x=0.80pt, yscale=-1.000000, xscale=1.000000, inner sep=0pt, outer sep=0pt,
  graph edge/.style={
    draw=black,
    line join=miter,
    line cap=butt,
    even odd rule,
    line width=0.800pt,
  },
  virtual edge/.style={
    draw=black,
    dash pattern=on 4.80pt off 4.80pt,
    line join=miter,
    line cap=butt,
    miter limit=4.00,
    even odd rule,
    line width=0.800pt,
  },
  tree edge/.style={
    draw=red,
    line join=miter,
    line cap=butt,
    miter limit=4.00,
    even odd rule,
    line width=2.400pt,
  },
  font={\LARGE\bf},
]
  \clip (0,0) rectangle (297mm, 210mm);
  \begin{scope}[
      shift={(0,-308.26772)},
      tree node/.style={
        draw=red,
        line cap=round,
        miter limit=4.00,
        line width=1.545pt,
        circle,
        minimum size=12mm,
      },
      vertex/.style={
        draw=black,
        fill=black,
        minimum size=3mm,
        circle,
      }
    ]
    \coordinate (p1) at (372,824);
    \coordinate (p9) at (183,313);
    \coordinate (p10) at (396,425);
    \coordinate (p11) at (243,418);
    \coordinate (p12) at (189,538);
    \coordinate (p13) at (104,740);
    \coordinate (p14) at (12,851);
    \coordinate (p15) at (160,1030);
    \coordinate (p16) at (300,990);
    \coordinate (p21) at (844,730);
    \coordinate (p22) at (1031,774);
    \coordinate (p23) at (962,1001);
    \coordinate (p24) at (754,958);
    \coordinate (p25) at (888,921);
    \coordinate (p27) at (912,826);
    \coordinate (p33) at (720,690);
    \coordinate (p34) at (632,890);
    \coordinate (p35) at (724,831);
    \coordinate (p36) at (747,710);
    \coordinate (p37) at (735,682);
    \coordinate (p38) at (607,780);
    \coordinate (p39) at (557,856);
    \coordinate (p40) at (645,632);
    \coordinate (p41) at (494,518);
    \coordinate (p42) at (282,637);
    \coordinate (p43) at (240,585);
    \coordinate (p44) at (343,567);
    \coordinate (p45) at (429,485);
    \coordinate (p46) at (435,458);
    \coordinate (p47) at (287,479);
    \coordinate (p48) at (419,450);
    \coordinate (p57) at (304,744);
    \coordinate (p58) at (388,818);
    \coordinate (p59) at (389,925);
    \coordinate (p63) at (179,723);
    \coordinate (p64) at (275,772);
    \coordinate (p65) at (325,862);
    \coordinate (p66) at (335,956);
    \coordinate (p67) at (242,913);
    \coordinate (p68) at (191,823);
    \coordinate (p72) at (786,777);
    \coordinate (p73) at (758,855);
    \coordinate (p74) at (309,604);
    \coordinate (p75) at (419,560);
    \coordinate (p77) at (618,669);
    \coordinate (p78) at (668,699);
    \coordinate (p81) at (444,512);
    \coordinate (p82) at (505,596);
    \coordinate (p85) at (326,448);
    \coordinate (p86) at (338,472);
    \coordinate (p89) at (734,776);
    \coordinate (p90) at (804,777);
    \coordinate (p91) at (841,783);
    \coordinate (p92) at (868,807);
    \coordinate (p95) at (270,681);
    \coordinate (p96) at (374,639);
    \coordinate (p99) at (198,843);
    \coordinate (p100) at (161,845);
    \coordinate (p103) at (288,802);
    \coordinate (p104) at (318,790);
    \coordinate (p106) at (153,913);
    \coordinate (p107) at (254,833);
    \coordinate (p108) at (506,669);
    \coordinate (p109) at (681,792);
    \coordinate (p110) at (877,871);
    \coordinate (p111) at (332,521);
    \coordinate (p112) at (264,459);
    \coordinate (p113) at (265,720);
    \coordinate (p114) at (372,842);

    \node[vertex] at (p9) {};
    \node[vertex] at (p10) {};
    \node[vertex] at (p11) {};
    \node[vertex] at (p12) {};
    \node[vertex] at (p13) {};
    \node[vertex] at (p14) {};
    \node[vertex] at (p15) {};
    \node[vertex] at (p16) {};
    \node[vertex] at (p21) {};
    \node[vertex] at (p22) {};
    \node[vertex] at (p23) {};
    \node[vertex] at (p24) {};
    \node[vertex] at (p25) {};
    \node[vertex] at (p27) {};
    \node[vertex] at (p34) {};
    \node[vertex] at (p37) {};
    \node[vertex] at (p39) {};
    \node[vertex] at (p40) {};
    \node[vertex] at (p41) {};
    \node[vertex] at (p42) {};
    \node[vertex] at (p43) {};
    \node[vertex] at (p46) {};
    \node[vertex] at (p57) {};
    \node[vertex] at (p58) {};
    \node[vertex] at (p59) {};
    \node[vertex] at (p63) {};
    \node[vertex] at (p66) {};

    \path[graph edge] (p9) -- (p10) -- (p11) -- (p12) -- cycle;
    \path[graph edge] (p11) -- (p9);
    \path[graph edge] (p13) -- (p14) -- (p15) -- (p16);
    \path[graph edge] (p21) -- (p22) -- (p23) -- (p24) -- (p25) -- (p23);
    \path[graph edge] (p25) -- (p27) -- (p22);
    \path[graph edge] (p21) -- (p27);
    \path[virtual edge] (p12) -- (p10);
    \path[virtual edge] (p24) -- (p21);
    \path[graph edge] (p34) .. controls (p35) and (p36) .. (p37);
    \path[virtual edge] (p34) .. controls (p38) and (p33) .. (p37);
    \path[virtual edge] (p39) -- (p40);
    \path[graph edge] (p41) -- (p40);
    \path[virtual edge] (p42) -- (p41);
    \path[graph edge] (p43) .. controls (p44) and (p45) .. (p46);
    \path[virtual edge] (p43) .. controls (p47) and (p48) .. (p46);
    \path[virtual edge] (p42) -- (p39);
    \path[graph edge] (p57) -- (p58) -- (p59);
    \path[virtual edge] (p57) -- (p59);
    \path[virtual edge] (p63) .. controls (p64) and (p65) .. (p66);
    \path[virtual edge] (p66) .. controls (p67) and (p68) .. (p63);
    \path[virtual edge] (p13) -- (p16);
    \path[virtual edge] (p37) .. controls (p72) and (p73) .. (p34);
    \path[virtual edge] (p43) .. controls (p74) and (p75) .. (p46);
    \path[virtual edge] (p63) .. controls (p113) and (p114) .. (p66);

    \node[tree node] (v106) at (p106) {S};
    \node[tree node] (v107) at (p107) {P};
    \node[tree node] (v1) at (p1) {S};
    \node[tree node] (v108) at (p108) {S};
    \node[tree node] (v109) at (p109) {P};
    \node[tree node] (v110) at (p110) {R};
    \node[tree node] (v111) at (p111) {P};
    \node[tree node] (v112) at (p112) {R};

    \path[tree edge] (v108) .. controls (p77) and (p78) .. (v109);
    \path[tree edge] (v111) .. controls (p81) and (p82) .. (v108);
    \path[tree edge] (v112) .. controls (p85) and (p86) .. (v111);
    \path[tree edge] (v109) .. controls (p89) and (p72) .. (p90)
                            .. controls (p91) and (p92) .. (v110);
    \path[tree edge] (v107) .. controls (p95) and (p96) .. (v108);
    \path[tree edge] (v107) .. controls (p99) and (p100) .. (v106);
    \path[tree edge] (v107) .. controls (p103) and (p104) .. (v1);

\end{scope}

\end{tikzpicture}

   \end{adjustbox}
  \caption{A biconnected graph and its SPQR tree.  See Definition~\ref{def:SPQR}.}
  \label{fig:SPQR}
\end{figure}
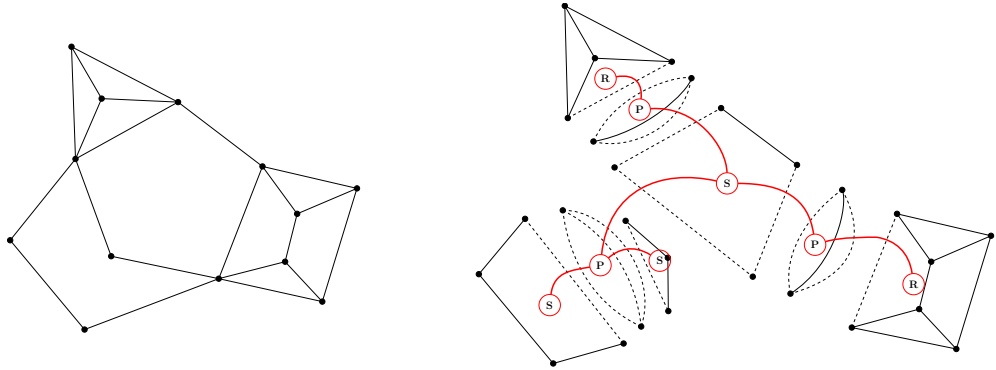
\begin{definition}\label{def:SPQR}
	The SPQR-tree for a biconnected multigraph $G=(V,E)$ with at least $3$ edges is a tree with nodes labeled S, P, or R, where each node $x$ has an associated \emph{skeleton graph} $\Gamma(x)$ with the following properties:
	\begin{itemize}
		\item For every node $x$ in the SPQR tree, $V(\Gamma(x))\subseteq V$.
		\item For every edge $(x,y)$ in the SPQR tree, $V(\Gamma(x))\cap V(\Gamma(y))$ is a separation pair $\set{a,b}$ in $G$, and~there is a virtual edge $ab$ in each of $\Gamma(x)$ and $\Gamma(y)$.
    \item For every node $x$ in the SPQR tree, every edge in $\Gamma(x)$ is either in $E$ or a \emph{virtual edge} corresponding %
    to an edge $(x,y)$ in the SPQR-tree.
		\item For every edge $e\in E$ there is a unique node $x$ in the SPQR-tree such that $e\in E(\Gamma(x))$.
		\item If $x$ is an S-node, $\Gamma(x)$ is a simple cycle with at least
		$3$ edges.
		\item If $x$ is a P-node, $\Gamma(x)$ consists of a pair of vertices with at least $3$ parallel edges.
    \item If $x$ is an R-node, $\Gamma(x)$ is a simple %
    triconnected graph.
		\item No two S-nodes are neighbors, and no two P-nodes are neighbors.
	\end{itemize}
\end{definition}
It turns out (see e.g.~\cite{Battista:96}) that the SPQR-tree for a biconnected graph is unique.  The (skeleton graphs associated with) nodes of the SPQR-tree are sometimes referred to as the triconnected \emph{components} of $G$.

\subparagraph*{Detecting separating $4$-cycles.}
A $4$-cycle is a simple cycle of length $4$.
We say that a $4$-cycle in a planar embedded graph $G$ is a \emph{face $4$-cycle} if it is a cycle bounding a face of $G$, and a \emph{separating $4$-cycle} otherwise.
As we show in Appendix~\ref{sec:SPQR-proofs}, there is a one-to-one correspondence between separation pairs in $G$ and separating $4$-cycles in the vertex-face graph $\fv{G}$.

Since no two parallel edges can lie on the same $4$-cycle, and no self-loop can be contained in a $4$-cycle, we can assume the input graph is simple. However, when we contract edges, new parallel edges and self-loops may arise.  To handle this, we could detect and remove parallel edges,
but it turns out that both the algorithm and the analysis become simpler if we keep (most of) the additional edges, as long as no two parallel edges are consecutive in the circular ordering around both their endpoints.

If $G$ has a face bounded by two edges, we can \emph{simplify} the graph by
deleting one of them. For our purposes we do not really care which one is deleted, but we need a rule that is consistent. For presentational purposes,
we assume that we always keep the edge $e$ with larger $\eid(e)$.

This motivates the following definition of a quasi-simple graph\footnote{In~\cite{Klein:book} these graphs are called \emph{semi-strict}.}:
\begin{definition}\label{def:quasisimple}
	A plane embedded graph is \emph{quasi-simple} if the dual of each
	non-simple component has minimum degree $3$.
    Given a plane embedded graph $G$ and a set of vertices $X$, we define the
	subgraph of $G$ \emph{quasi-induced} by $X$ to be the unique
  quasi-simple subgraph of $G$ with vertex set $X$ and the maximum total sum of $\eid(e)$ values.  Let $d_X(v)$
	denote the degree of $v$ in the subgraph quasi-induced by
	$X\cup\{v\}$.
\end{definition}
Roughly speaking, a quasi-simple graph is obtained from a plane embedded multigraph by merging parallel edges that lie next to each other in the circular orderings around both their endpoints.

We build a structure for $4$-cycle detection by recursively using balanced separators, and by detecting, for each separator, the cycles that cross the separator. Detecting $4$-cycles that cross a separator is not trivial, and our analysis introduces a complicated potential function which reflects how well connected the non-separator vertices are with the separator, that is, how many neighbors on the separator they have. At the
same time, we make sure that all the work done can be paid with the decrease in the potential.
Our analysis exploits the fact that for a planar graph with separator $S$, at most $O(|S|)$ vertices have more than $4$ neighbors in $S$. 

The recursive use of separators can be sketched as follows: 
Let $S$ be a small balanced separator in $G=(V,E)$ that induces a separation $(V_1, V_2)$, that is, $V_1\cap V_2=S$ and $V_1\cup V_2 = V$. Moreover, let $n = |V|$.
We observe that each $4$-cycle is fully contained in $V_1$ or $V_2$, or consists of two paths of length $2$ that connect vertices of $S$.
This motivates the following recursive approach.
We compute a separator $S$ of $O(\sqrt{n})$ vertices and then find all paths of length $2$ that connect vertices of $S$.
Since the size of $S$ is $O(\sqrt{n})$, there are only $O(n)$ pairs of vertices of $S$, and for each pair of vertices, we can easily check if the two-edge paths connecting them form any separating $4$-cycles.
It then remains to find the $4$-cycles that are fully contained in either $V_1$ or $V_2$, which can be done recursively.
Because $S$ is a balanced separator, the recursion has $O(\log n)$ levels.

This algorithm can be made dynamic under contractions and edge insertions that respect the embedding of $G$.
Contractions are easy to handle, as they preserve planarity.
Moreover, a separator $S$ of a planar graph can be easily updated under contractions.
Namely, whenever an edge $uw$ is contracted, the resulting vertex belongs to the separator iff any of $u$ and $w$ did.
Insertions that preserve planarity, however, are in general harder to accommodate. To handle this we introduce a new type of separators that we call \emph{face-preserving} separators, which (like cycle-separators) always exist when the face-degree is bounded. These are still preserved by contractions, but also ensure that any edge across a face can be inserted.

All in all, there are $O(\log n)$ levels of size $O(n)$ each, where each level handles insertions and contractions in constant time, leading to a total of $O(n \log n)$ time.
The details of this construction and the proof of the below theorem can be found in Appendix~\ref{sec:4cycle}.
\begin{restatable}{theorem}{cycledetection}
\label{thm:4cycledetection}
  Let $G$ be an $n$-vertex quasi-simple plane embedded graph with bounded face degree.
  There exists a data structure that maintains $G$ under contractions and embedding-respecting insertions, 
  and after each update operation reports edges that become members of some separating $4$-cycle.
  It runs in $O(n \log n)$ total time.
\end{restatable}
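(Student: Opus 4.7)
The plan is to build a recursive data structure based on balanced separators, following the high-level sketch already described. First I would establish the existence of small \emph{face-preserving} separators: for a quasi-simple plane embedded graph with bounded face-degree, I expect to prove that there is a balanced separator $S$ of size $O(\sqrt{n})$ which additionally has the property that every face of the input is intersected in at most a constant number of vertices, so that no embedding-respecting edge insertion can ``straddle'' the separator in a way that breaks planarity of the sub-instances. The construction would combine a planar cycle separator with a bounded-face-degree argument: since each face has $O(1)$ vertices, a cycle separator of the vertex-face graph (or a similar auxiliary graph) yields the face-preserving property essentially for free. Crucially this kind of separator is preserved under contractions (contracting $uv$ with at least one endpoint in $S$ keeps the merged vertex in $S$), and it is preserved under embedding-respecting insertions because any inserted edge lies inside one face and therefore inside a single side of the separator.

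Given such a separator $S$ that induces a separation $(V_1,V_2)$, I recurse on the two subgraphs quasi-induced by $V_1$ and $V_2$. For the current level, I maintain, for every pair $(s_1,s_2)\in S\times S$, the list of length-$2$ paths from $s_1$ to $s_2$ that go through a vertex $v\in V\setminus S$. Equivalently, for each non-separator vertex $v$ I keep the list $N_S(v)$ of its neighbors in $S$, and I keep an inverted index so that given $(s_1,s_2)$ I can enumerate all $v$ with $s_1,s_2\in N_S(v)$. A new separating $4$-cycle appearing at this level is exactly the pairing of a newly created length-$2$ path $s_1 v s_2$ with some pre-existing length-$2$ path $s_1 u s_2$ (with $u\neq v$), where the quadruple $(s_1,v,s_2,u)$ is a simple cycle. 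Thus after each update, it suffices to enumerate the length-$2$ paths that are created and, for each such path $s_1 v s_2$, iterate over the existing $u$'s sharing the pair $(s_1,s_2)$. Contractions and insertions only touch a constant number of vertices directly, so the number of newly created length-$2$ paths per update is bounded by the degrees of the touched vertices into~$S$.

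Insertions are handled by inserting the new edge inside the face where it belongs; by the face-preserving property of $S$ the edge is confined to one side of the separator and is passed recursively to that subproblem. Contractions are handled by merging the corresponding vertex records; if both endpoints lie in $S$ then $S$ shrinks, otherwise the surviving vertex inherits the $N_S$-list of the other endpoint, possibly creating new length-$2$ paths that must be reported. Since the separator has size $O(\sqrt{n})$ and by a standard planarity argument at most $O(|S|)$ non-separator vertices can have more than $4$ neighbors in $S$, the ``typical'' vertex carries only $O(1)$ entries in $N_S$. Theorem~\ref{thm:4cycledetection} asserts an aggregate bound of $O(n\log n)$ rather than a worst-case per-operation bound, so the analysis is amortized.

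The technical core, and the main obstacle I anticipate, is the amortized analysis. I plan a potential function of the form $\Phi=\sum_{v\notin S}\phi(d_S(v))$, where $\phi$ is super-linear precisely on vertices with more than $4$ neighbors in $S$, so that the at most $O(|S|)$ ``high-degree'' vertices contribute only $O(|S|\cdot \sqrt{n})=O(n)$ to the initial potential at the current level, while the remaining vertices contribute $O(1)$ each. Each update inserts or contracts an edge, which changes $N_S(v)$ for at most two vertices $v$; the real cost of reporting new length-$2$ paths through $v$ is $O(d_S(v))$, and I would charge this either to the $O(\log n)$ recursion depth (for the $O(1)$ contribution of typical vertices) or to a decrease in $\Phi$ (when high-degree vertices are touched, in particular when a high-degree vertex is absorbed into $S$ or merged into another vertex). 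Summing across the $O(\log n)$ recursion levels and using that $\sum|S|$ at a given level is $O(n)$ (by the balanced-separator recurrence) gives the claimed total $O(n\log n)$ bound, and the existence of face-preserving separators together with the planarity-based cap on high-degree vertices are the two ingredients that make the potential function work out.
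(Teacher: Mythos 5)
Your overall architecture matches the paper's: a face-preserving separator tree of depth $O(\log n)$, and at each node a dictionary of length-$2$ paths between separator vertices, with new separating $4$-cycles detected by pairing a newly created length-$2$ path with an existing one, and an amortized analysis via a potential function. The gap is in the potential function, which is precisely the technically hard part. Your accounting assumes the cost of an update is governed by sums of degrees into the separator: you write that the reporting cost through a touched vertex $v$ is $O(d_S(v))$ and you propose $\Phi=\sum_{v\notin S}\phi\bigl(d_S(v)\bigr)$ with $\phi$ super-linear. But when a contraction merges two non-separator vertices $u,v$ that both have many neighbors in $S$, the number of newly created length-$2$ paths with the merged vertex as \emph{middle} vertex is up to $d_S(u)\cdot d_S(v)$ --- a product, not a sum --- and all of these candidate pairs must be examined. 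Worse, your potential moves in the wrong direction on exactly this event: if $w$ is the merged vertex then $d_S(w)\approx d_S(u)+d_S(v)$, so for any super-linear (superadditive) $\phi$ the quantity $\phi(d_S(w))-\phi(d_S(u))-\phi(d_S(v))$ is nonnegative, i.e.\ $\Phi$ \emph{increases} while you have done $\Theta(d_S(u)d_S(v))$ work. The paper's analysis resolves this with a genuinely different potential: it partitions $V$ into $K$ (separator), $M$ (vertices with $d_K\ge 4$, of which there are at most $\abs{K}-2$ by an Euler-formula argument), and $Y$, and uses a term $\Phi_s(M\cup K)=63\,\Phi_v(M\cup K)^2-\sum_v d_{M\cup K}(v)^2$ whose negative quadratic part drops by roughly $2\,d(u)d(v)$ under such a merge, while the compensating positive quadratic part is affordable (initially $O(n)$, and nonnegative throughout) only because it is quadratic in a quantity of size $O(\sqrt n)$ tied to $K\cup M$; this also relies on keeping the graph quasi-simple after each contraction so that degree sums in $G_{M\cup K}$ stay bounded by planarity.

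A second, related omission: when a vertex $v\notin S$ is contracted into a separator vertex, new $4$-cycles arise whose corner at the separator is an \emph{endpoint} (not the middle) of the new length-$2$ paths; enumerating the candidates requires visiting all neighbors of $v$, i.e.\ work proportional to the full degree $d_V(v)$, not $d_S(v)$. The paper pays for this with a separate term $\Phi_q(Y\cup M)=\sum_v d_V(v)$ that drops by $d_V(v)$ on such a merge; your potential tracks only degrees into $S$ and cannot absorb this cost. So while your separator-tree construction, the reduction of $4$-cycle detection to length-$2$ path bookkeeping via the ``one vertex on each side, two on the separator'' observation, and the observation that only $O(\abs{S})$ vertices have $\ge 4$ neighbors in $S$ are all in line with the paper, the proof as proposed does not yield the $O(n\log n)$ bound: the amortization fails on $(M,M)$-type contractions and on contractions into the separator, and fixing it essentially requires the paper's multi-part potential rather than a single sum over $\phi(d_S(v))$.
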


\subparagraph*{Maintaining SPQR trees.}
The main challenge in maintaining an SPQR-tree is handling the case when an edge within a triconnected component is deleted. First of all, the data structure should be able to detect whether or not the component is still triconnected. 

For any triconnected component $\Gamma$ of $G$, we maintain a $4$-cycle detection structure for the corresponding vertex-face graph $\fv{\Gamma}$. A separating $4$-cycle in $\fv{\Gamma}$ corresponds to a separation pair in $\Gamma$, which would witness that $\Gamma$ is no longer triconnected. The deletion or contraction of the edge $e$ in the triconnected component $\Gamma$ of $G$ corresponds to an (embedding-respecting) insertion and immediate contraction of an edge in $\fv{\Gamma}$. This way by detecting $4$-cycles in $\fv{\Gamma}$, we can detect when the corresponding triconnected component falls apart.

However, this is not the only challenge. If $\Gamma$ does indeed cease to be triconnected, the SPQR-tree of $(\Gamma-e)$ (or $(\Gamma/e)$ when doing a contraction) is a path $\mathcal{P}$. This is where we need the $4$-cycle structure to output the edges contained in separating $4$-cycles. Those edges correspond to a set of corners $N$ of $G$. We use those corners to guide a search, which helps identify the non-largest components of the SPQR-path $\mathcal{P}$.
More specifically, if a vertex $v$ now belongs to two distinct triconnected components, there are two corners in $N$ that separate the edges incident to $v$ into two groups of edges, each belonging to a distinct triconnected component.
We can afford to build a $4$-cycle detection structure for $\fv{\Gamma'}$ for any non-largest triconnected component $\Gamma'$ on the path from scratch.
To obtain the data structure representing the largest component, we delete or contract the corresponding edges from $\Gamma$ while updating $\fv{\Gamma}$.
Since an edge only becomes part of a structure built from scratch when its triconnected component size has been halved, this happens only $O(\log n)$ times per edge, so the total time used for rebuilding is $O(n\log^2 n)$.
The second logarithmic factor comes from rebuilding the data structure for $4$-cycle detection, that takes $O(n \log n)$ time to initialize and process any number of operations.

Finally, since no two $S$-nodes can be neighbors and no two $P$-nodes can be neighbors, some $S$- or $P$-nodes in $\mathcal{P}$ may have to be merged with their (at most $2$) neighbors of the same type outside $\mathcal{P}$. To handle this step efficiently, we keep the SPQR-tree rooted in an arbitrary node.
While merging the skeleton graphs of two $S$- or $P$-nodes can be done in constant time, what can be more costly is updating the parent pointers in the children of the merged nodes.
Hence, we move the children of the node with fewer children to the other node. This way, each node changes parent at most $O(\log n)$ times before it is deleted or split. The total number of distinct SPQR-nodes that exist throughout the lifetime of the data structure is $O(n)$, so the total time used for maintaining the parent pointers is $O(n\log n)$.

Since SPQR-trees are only defined for biconnected graphs, another challenge is to maintain SPQR-trees for each biconnected component, even as the decremental update operations cause the biconnected components to fall apart.  
We recall here that
the structure of the biconnected components of a connected graph can be described by a tree called the \emph{block-cutpoint tree}~\cite[p. 36]{Harary69}, or \emph{BC-tree} for short.
This tree has a vertex for each biconnected component (block) and for each articulation point of the graph, and an edge
for each pair of a block and an articulation point that belongs to that block. If the tree is rooted arbitrarily at any block, each non-root block has a unique articulation point separating it from its parent.

To handle updates, we notice that the SPQR-tree points to the fragile places where the graph is about to cease to be biconnected: An edge deletion in an $S$-node will break up a block in the BC-tree into path, and an edge contraction in a $P$-node breaks a block in the BC-tree into a star. 
Upon such an update, we remove the aforementioned $S$- or $P$-node from the SPQR-tree, breaking it up into an SPQR-forest. Each tree corresponds to a new block in the BC-tree. They form a path (or a star), and the ordering along the path, as well as the articulation points, can be read directly from the SPQR-tree.

On the other hand, in order to even know which SPQR tree to modify during an update, we can search in the BC-tree for the right SPQR-structure in which to perform the operation.

\subparagraph*{Bi- and triconnectivity.}
Finally, we use SPQR-trees to facilitate triconnectivity queries. 
First of all, vertices need to be biconnected in order to be triconnected. 
To facilitate biconnectivity queries, it is enough that 
each vertex $v$ knows the name of the block $B(v)$ closest to the root in the BC-tree that contains it, and each block $b$ knows the name of the vertex separating it from the parent $p(b)$. Then, any two vertices $u$ and $w$ are biconnected if and only if one of the following occur: $B(u)=B(v)$, or $u=p(B(v))$, or $v=p(B(u))$.

The information we maintain for triconnectivity is similar, using the SPQR-tree.
Namely:
each non-root node $x$ in the SPQR-tree stores the \emph{virtual edge} (see Definition~\ref{def:SPQR}) that separates it from its parent.
Each vertex $v$ stores (a pointer to) the node $C(v)$ closest to the root that contains it, and, in a special case, at most two other nodes that are the children of $C(v)$. Queries are handled similarly as above.

The main challenge is to handle updates. Note that the change to the SPQR-tree may involve both the split and merge of nodes. In particular, we have one split and up to several merges when a triconnected component falls apart into an SPQR-path. However, upon a merge, we can afford to update the information regarding vertices in the non-largest components, costing only an additive $\log n$ to the amortized running time. Similarly, upon a split, we update any information that relates to vertices in the non-largest components only.

The total running time is thus $O(n\log n +f(n))$, where $f(n)$ is the running time for maintaining the SPQR-tree. The following theorem is proven in Appendix~\ref{sec:3vertex}.
\begin{restatable}{theorem}{threevertex}\label{thm:3vertex}
	There is a data structure that can be initialized on a planar graph $G$ on $n$ vertices in $O(n \log n)$ time, and supports any sequence of $k$ edge deletions or contractions in total time $O((n+k)\log^2 n)$, while supporting queries to pairwise triconnectivity in worst-case constant time per query.
\end{restatable}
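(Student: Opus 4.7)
The plan is to reduce triconnectivity queries to a small-constant pointer test supported by auxiliary information layered on top of the decremental SPQR-tree structure from Section~\ref{sec:SPQR}. First I would handle biconnectivity independently, since triconnectivity implies biconnectivity: maintain a BC-tree, rooted arbitrarily, in which each vertex $v$ stores the block $B(v)$ closest to the root containing it, and each non-root block $b$ stores its parent articulation point $p(b)$. Then $u,v$ are biconnected iff $B(u)=B(v)$, $u=p(B(v))$, or $v=p(B(u))$, as described in the overview. The SPQR-tree already signals where biconnectivity may be about to break (deletions inside an $S$-node, contractions inside a $P$-node), so the BC-tree can be updated by re-reading the SPQR-forest that results from removing such a node, paying $O(\log n)$ per affected object amortized.

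For triconnectivity within a biconnected component, I would root its SPQR-tree arbitrarily, store at each non-root node $x$ the separation pair $\{a_x,b_x\}$ corresponding to the virtual edge joining $x$ to its parent, and store for each vertex $v$ a pointer $C(v)$ to the root-closest SPQR-node whose skeleton contains $v$, together with (in the special case where $v$ is a separator shared with one or two children of $C(v)$) pointers to those at most two children. With this data, the query reduces to a constant number of equality tests: $u$ and $v$ are triconnected iff they are biconnected and no stored separation pair separates them, which is detected locally by comparing $C(u)$, $C(v)$, their recorded parent virtual edges, and the extra child pointers. The precise case split needs some care for $S$- and $P$-nodes and for the situation where one endpoint is itself the separator of the other's node, but it is a finite case analysis.

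Updates come from the SPQR-tree maintenance algorithm as a bounded sequence of node splits (an $R$-node falling apart into an SPQR-path) and node merges (re-establishing that no two $S$- or two $P$-nodes are adjacent). The key invariant is a smaller-half reassignment rule: on a split, keep the identifier of the largest resulting skeleton and rewrite $C(\cdot)$ only for vertices appearing in the smaller pieces; on a merge, rewrite $C(\cdot)$ only for vertices of the smaller of the two skeletons being merged. A standard heavy-path charging argument then ensures that each vertex has its $C(\cdot)$-pointer and its stored separator neighbors rewritten only $O(\log n)$ times over the whole update sequence, contributing $O(n\log n)$ on top of the SPQR-tree cost. Initialization runs a static SPQR-tree construction together with the $4$-cycle detection structures of Theorem~\ref{thm:4cycledetection}, so the $O(n\log n)$ bound there dominates.

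The main obstacle, as I see it, is getting the query characterization exactly right in the boundary cases where one of $u,v$ coincides with an endpoint of a parent virtual edge, so that $C(u)$ and $C(v)$ may differ even though the vertices are triconnected --- this is precisely what the extra pointers to at most two children of $C(v)$ are for, and one must verify that these pointers are both sufficient for $O(1)$ queries and compatible with the smaller-half reassignment scheme used during merges and splits. Once the case analysis is complete and paired with the $O((n+k)\log^2 n)$ cost of maintaining the SPQR-tree itself from Section~\ref{sec:SPQR}, the total bound claimed in the theorem follows.
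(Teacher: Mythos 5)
Your proposal matches the paper's own proof essentially step for step: the same biconnectivity layer via the BC-tree, the same query structure (parent virtual edge $e(x)$, root-closest node $C(v)$, and the at most two extra child pointers for vertices whose $C(v)$ is an $S$-node), the same $O(1)$ case analysis, and the same smaller-half/charging amortization of pointer rewrites against the SPQR-tree maintenance cost from Section~\ref{sec:SPQR}. The parts you leave implicit --- the explicit query case split and the per-case charging of $C(v)$/$D(v)$ updates (to the 4-cycle-structure rebuilds for new R-nodes, to the $O(k)$ split work, and to the $O(n\log n)$ parent-pointer changes for $S$-node merges) --- are exactly what the paper's Lemma~\ref{lem:3query} and the proof of Theorem~\ref{thm:3vertex} supply.
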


\newcommand{\parf}{\ensuremath{\mathcal{T}}}

\section{Decremental SPQR-trees}\label{sec:SPQR}
In this section, we %
use the data structure of Theorem~\ref{thm:4cycledetection} (described in Appendix~\ref{sec:4cycle}) to maintain an SPQR-tree (see Definition~\ref{def:SPQR}) for each biconnected component of $G$ with at least $3$ edges under arbitrary edge deletions and contractions.
We start with some useful facts.

\begin{restatable}{lemma}{fvCycleEdgeclasses}\label{lem:fvCycleEdgeclasses}
    For any pair of edges in a biconnected graph $G$, their corresponding faces of $\fv{G}$ are separated by a $4$-cycle $(v_1,f_1,v_2,f_2)$ if and only if they belong to different separation classes with respect to $v_1,v_2$ in $G$ and with respect to $f_1,f_2$ in $\dual{G}$.
\end{restatable}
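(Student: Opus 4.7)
The plan is to identify the $4$-cycle $C := (v_1, f_1, v_2, f_2)$ with a simple closed curve in the plane: by Fact~\ref{fact:fv2conn}, biconnectedness of $G$ makes $\fv{G}$ simple, so $C$ is a Jordan curve and $\mathbb{R}^2 \setminus C$ splits into two open topological disks. Throughout, ``separated by $C$'' will mean that the $\fv{G}$-faces $F_e, F_{e'}$ corresponding to $e, e'$ lie in different components of this complement.

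I will prove both directions by the contrapositive. For the forward direction, assume $e, e'$ share a separation class in $G$ with respect to $\{v_1, v_2\}$ and pick a witnessing walk $e = g_0, g_1, \dots, g_m = e'$ in $G$ whose internal vertices avoid $\{v_1, v_2\}$. Tracing the walk as a continuous curve $\gamma$ along the $1$-skeleton of $G$, note that each edge of $\fv{G}$ is a corner connecting a $G$-vertex to a $\dual{G}$-vertex which, except at its $G$-vertex endpoint, lies strictly inside some face of $G$; hence $C$ meets the $1$-skeleton of $G$ only at its two $G$-vertices $v_1, v_2$. By the walk's property, these two points occur at most at the endpoints of $\gamma$, so perturbing those endpoints into the interiors of $F_e$ and $F_{e'}$ produces a curve in $\mathbb{R}^2 \setminus C$ joining them, and $F_e, F_{e'}$ share a region. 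Running the symmetric argument inside $\dual{G}$ via Fact~\ref{fact:fvSameDual} ($\fv{G} = \fv{(\dual{G})}$) handles the $\dual{G}$-case and finishes this direction.

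For the backward direction I again argue the contrapositive: if $F_e, F_{e'}$ lie in the same region $R$, then $e, e'$ share a class in $G$ or in $\dual{G}$. Since $R$ is a topological disk, any curve in $R$ from $F_e$ to $F_{e'}$ can be put into general position with $\fv{G}$ to yield a face-walk $F_0, F_1, \dots, F_m$ whose consecutive faces share a vertex $w_i$ that, because $R$ is disjoint from $C$, lies outside $\{v_1, v_2, f_1, f_2\}$. If every $w_i$ is a $G$-vertex the walk pushes down to a walk in $G$ avoiding $\{v_1, v_2\}$ internally, giving a common $G$-class; dually for $\dual{G}$. The main obstacle is genuinely mixed walks, which I will handle via the structural claim that inside each region $R$ the edges whose $\fv{G}$-faces lie in $R$ are either all contained in a single $G$-separation class with respect to $\{v_1, v_2\}$, or all contained in a single $\dual{G}$-separation class with respect to $\{f_1, f_2\}$. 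If neither held, a minimal ``$\dual{G}$-separator'' inside $R$ between two distinct $G$-components would necessarily run between the two $\dual{G}$-boundary vertices $f_1, f_2$ of $\partial R$, and dually a minimal $G$-separator between two $\dual{G}$-components would run between $v_1$ and $v_2$; since $v_1, f_1, v_2, f_2$ appear in alternating cyclic order on $\partial R$, planarity of the disk forces these two separating curves to cross inside $R$, and the crossing corresponds to a single edge of $G$ in $R$ that bridges the supposedly separated components---contradiction.
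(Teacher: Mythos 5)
Your forward direction is sound and is essentially the paper's own argument: a walk witnessing a common separation class meets the curve $C=(v_1,f_1,v_2,f_2)$ only at $v_1,v_2$ (respectively only at $f_1,f_2$ after dualizing via $\fv{G}=\fv{(\dual{G})}$), so the two faces of $\fv{G}$ lie on one side. The gap is in the backward direction. A small slip first: when your general-position arc crosses an edge of $\fv{G}$, the shared vertex of the two consecutive faces need \emph{not} avoid $\{v_1,v_2,f_1,f_2\}$; simplicity of $\fv{G}$ only tells you the crossed corner is not an edge of $C$, so at most one of its two endpoints lies on $C$ — which is exactly why the ``mixed walk'' problem you then name is unavoidable.

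More seriously, the structural claim you invoke to resolve the mixed walks is true, but the argument you sketch for it does not work. A curve running between $f_1$ and $f_2$ inside $R$ cannot separate two primal separation classes: since $G$ is biconnected, every class with an edge in $R$ attaches to \emph{both} $v_1$ and $v_2$, so two classes meeting $R$ are joined by two internally disjoint $v_1$--$v_2$ paths $P_1\subseteq E_1$, $P_2\subseteq E_2$, and the correct separating object is the Jordan curve $J=P_1\cup P_2$ through $v_1,v_2$, which separates the $R$-portions of $f_1$ and $f_2$ (you have the roles of the two boundary pairs swapped, and dually). Moreover the final ``crossing $\Rightarrow$ contradiction'' step fails: a primal $v_1$--$v_2$ path and a dual $f_1$--$f_2$ path inside $R$ \emph{must} cross, and they do cross in perfectly legitimate configurations where the lemma holds (take $G=K_{2,3}$, $v_1,v_2$ the two branch vertices, $f_1,f_2$ two of the faces, and $R$ the side containing the other two subdivided edges); the crossing edge simply lies in one primal class while its dual lies in one dual class, and nothing is ``bridged''. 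What a correct proof needs is, for example: given $e,e'$ in $R$ in different primal classes, choose $P_1\ni e$, $P_2\ni e'$ (possible because each separation class plus a virtual edge $v_1v_2$ is biconnected), observe that the open region of $R$ between $P_1$ and $P_2$ contains no point of $f_1$ or $f_2$ (those lie in the other two components of $R\setminus J$), and join the faces adjacent to $e$ and $e'$ from inside that middle region by a generic arc; the faces it traverses yield a path in $\dual{G}$ from $\dual{e}$ to $\dual{e'}$ internally avoiding $f_1,f_2$, so $e,e'$ share a dual class, and transitivity of dual classes then gives your uniform claim. Without an argument of this kind the hard direction of the lemma remains unproven. (For comparison, the paper's proof works with the faces of $\fv{G}$ incident to a path containing the two edges; your forward half coincides with it, and the delicate converse is precisely where your write-up, like any complete proof, has to do the real topological work.)
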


\begin{restatable}{lemma}{sepone}\label{lem:sep1}
  Let $G$ be a biconnected graph.
    If a $4$-cycle $C=(v_1,f_1,v_2,f_2)$ in $\fv{G}$ is a separating cycle, %
    then $v_1,v_2$ is a separation pair of $G$ and $f_1,f_2$ is a separation pair of $\dual{G}$.
\end{restatable}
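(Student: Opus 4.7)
The plan is to deduce Lemma~\ref{lem:sep1} from Lemma~\ref{lem:fvCycleEdgeclasses} by exhibiting two edges of $G$ whose corresponding faces of $\fv{G}$ lie on opposite sides of $C$, and then dispatching the two trivial exceptions of Definition~\ref{def:separationpair}.

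First I would observe that $\fv{G}$ is planar with the natural embedding inherited from $G$, and $C$ is a simple $4$-cycle that by hypothesis is not a face cycle. Hence the two open regions into which $C$ separates the plane each contain at least one face of $\fv{G}$, since otherwise $C$ would bound that single face and be a face cycle. By Fact~\ref{fact:fvEdges} every face of $\fv{G}$ corresponds to a unique edge of $G$, so I can pick $e_1, e_2 \in E(G)$ whose corresponding faces of $\fv{G}$ lie on opposite sides of $C$. Lemma~\ref{lem:fvCycleEdgeclasses} then gives that $e_1, e_2$ belong to different separation classes both with respect to $\{v_1, v_2\}$ in $G$ and with respect to $\{f_1, f_2\}$ in $\dual{G}$; so in each case there are at least two separation classes.

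Next I rule out the exceptional cases of Definition~\ref{def:separationpair} for the pair $\{v_1, v_2\}$ in $G$ (the dual side being symmetric). Suppose some class is a singleton $\{e\}$. Since $G$ is biconnected (and loopless in the relevant setting), every vertex has degree at least $2$, so if $e$ had an endpoint $u \notin \{v_1, v_2\}$, any other edge at $u$ would be equivalent to $e$, contradicting the singleton assumption. Thus $e$ is a parallel edge between $v_1$ and $v_2$, and its corresponding face $F_e$ of $\fv{G}$ is bounded by the $4$-cycle $(v_1, f_e', v_2, f_e'')$ where $f_e', f_e''$ are the two $G$-faces incident to $e$. By Lemma~\ref{lem:fvCycleEdgeclasses}, $F_e$ is the only face of $\fv{G}$ on its side of $C$, which forces $C$ to coincide with the boundary of $F_e$ and hence be a face cycle, contradicting the hypothesis. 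The three-singleton-classes case forces $G$ to consist of exactly three parallel edges between $v_1$ and $v_2$, so $\fv{G}$ is isomorphic to $K_{2,3}$, in whose natural embedding every simple $4$-cycle is a face boundary; again a contradiction. So $\{v_1, v_2\}$ is a separation pair of $G$.

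The dual conclusion is obtained by running the symmetric argument on $\dual{G}$: by Fact~\ref{fact:fvSameDual} we have $\fv{G} = \fv{(\dual{G})}$, and $\dual{G}$ is itself biconnected and loopless because $\fv{G}$ is simple (Fact~\ref{fact:fv2conn}); so the same reasoning with the roles of $(v_1, v_2)$ and $(f_1, f_2)$ swapped shows that $\{f_1, f_2\}$ is a separation pair of $\dual{G}$. The only genuinely subtle step is the treatment of the exceptional cases of Definition~\ref{def:separationpair}; everything else is a direct application of Lemma~\ref{lem:fvCycleEdgeclasses} together with elementary planar topology.
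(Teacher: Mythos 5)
Your first step (one face of $\fv{G}$ on each side of $C$, giving two edges of $G$ in different classes, hence at least two separation classes both in $G$ and in $\dual{G}$) is fine, and your treatment of the three-singleton case is fine. The gap is in how you rule out exception (i) of Definition~\ref{def:separationpair}. The step ``By Lemma~\ref{lem:fvCycleEdgeclasses}, $F_e$ is the only face of $\fv{G}$ on its side of $C$'' does not follow: the lemma says two faces are separated by $C$ if and only if the corresponding edges lie in different classes with respect to $v_1,v_2$ \emph{and} in different classes with respect to $f_1,f_2$ in $\dual{G}$. So an edge $e'$ in a different $\set{v_1,v_2}$-class but the same $\set{f_1,f_2}$-class as $e$ may well have its face on the same side as $F_e$; being a primal singleton does not isolate $F_e$. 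Worse, the statement your contradiction argument would establish --- that a separating $4$-cycle through $v_1,v_2$ is incompatible with \emph{any} singleton class --- is false. Take $G$ with vertices $v_1,v_2,a,b$, the edge $e=v_1v_2$, and the two paths $v_1av_2$, $v_1bv_2$, embedded with $e$ between the two paths; with $f$ the face bounded by $v_1av_2$ and $e$, and $f_O$ the outer face, the cycle $(v_1,f,v_2,f_O)$ is separating in $\fv{G}$, yet $\set{e}$ is a singleton class, and $F_e$ shares its side of that cycle with the faces of $v_1b$ and $bv_2$. So as written the argument for exception (i) proves something false and cannot be correct.

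The repair needs two ingredients you currently omit. First, a separating $4$-cycle has at least \emph{two} faces of $\fv{G}$ on each side (not just one): every face of $\fv{G}$ has degree $4$ by Fact~\ref{fact:fvDual4regular}, so a side containing a single face would force $C$ to be that face's boundary, contradicting that $C$ is separating. Second, use Lemma~\ref{lem:fvCycleEdgeclasses} in the direction ``same class $\Rightarrow$ not separated'': no $\set{v_1,v_2}$-class can have edges on both sides of $C$, so the classes occurring on the two sides are disjoint families. Hence if there are exactly two classes, each class is exactly one side and so has at least two edges, killing exception (i); if there are exactly three classes, the side carrying a single class gives a class with at least two edges, killing exception (ii) (this also makes your separate $K_{2,3}$ paragraph unnecessary). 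This is exactly the paper's argument, and the dual statement then follows symmetrically as you indicate. Note also that to rule out exception (i) you must anyway assume the full hypothesis ``exactly two classes, one a singleton''; assuming merely ``some class is a singleton'' cannot lead to a contradiction, as the example above shows.
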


\begin{restatable}{lemma}{classescircular}\label{lem:classescircular}
Let $G$ be a loopless biconnected plane graph and $u$, $w$ be a separation pair in $G$.
Consider the set of edges $E_x$ incident to $x \in \{u, w\}$.
Then, the edges of $E_x$ belonging to each separation class of $u, w$ are consecutive in the circular ordering around both $u$ and $w$.
\end{restatable}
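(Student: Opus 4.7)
The plan is to reduce the lemma to a classical structural fact about cut vertices of plane connected graphs, applied to $H := G - w$.  I treat the cyclic order at $u$; the case of $w$ follows by the symmetric argument (swap the roles of $u$ and $w$).

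First, I would observe that because $G$ is biconnected, $H$ is connected, and because $\{u, w\}$ is a separation pair in $G$, the vertex $u$ is a cut vertex of $H$.  I would then invoke the standard planar fact that at a cut vertex of a plane connected graph, the edges of each block form a single consecutive arc in the cyclic order around that vertex.  To transport this to separation classes, I would argue that two non-$uw$ edges $uv_1, uv_2$ at $u$ lie in the same block of $H$ iff they lie on a common cycle through $u$ in $H$ iff there is a $v_1$-$v_2$ path in $H - u = G - \{u, w\}$ iff $v_1, v_2$ are in the same component of $G - \{u, w\}$ iff $uv_1, uv_2$ lie in the same separation class of $\{u, w\}$ in $G$.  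So in the cyclic order of non-$uw$ edges at $u$ (which is the cyclic order at $u$ in $H$), the edges of each non-singleton separation class already form a consecutive arc.

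Next I would handle the $uw$-edges, each of which is its own singleton separation class by the Hopcroft--Tarjan definition (any path realising the equivalence relation and containing a $uw$-edge must have $u, w$ as its two endpoints and hence consists only of that edge).  I would show that in $G$'s cyclic order at $u$ the $uw$-edges sit only at gaps between two distinct block arcs of $H$, so that they cannot split any block arc.  To do this, note that $w$ must lie inside a single face $\hat F$ of $H$, and every neighbour of $w$ in $G$ must appear on $\hat F$'s boundary.  Biconnectivity of $G$ forces $w$ to have at least one neighbour in every component of $G - \{u, w\}$ (otherwise $G - u$ would be disconnected), which by the correspondence above means $\hat F$'s boundary meets every block of $H$ at $u$.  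Since any two blocks at $u$ share only $u$, the boundary walk of $\hat F$ can transition between two distinct blocks at $u$ only by passing through $u$ at a gap between their arcs; hence every corner of $\hat F$ at $u$ lies at such a gap, and every $uw$-edge in $G$ is placed at one of these gap-corners.

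Combining the two steps, in $G$'s cyclic order at $u$ the block arcs (the non-singleton separation classes) remain consecutive, and the $uw$-edges interspersed at the gaps are each their own singleton class and thus trivially consecutive.  Applying the same argument with $u$ and $w$ interchanged handles the cyclic order at $w$.  The main obstacle is the gap-corner argument for the $uw$-edges: it relies on the face $\hat F$ analysis above and specifically on $G$'s biconnectivity forcing $\hat F$ to meet every block of $H$ at $u$ only at the gaps between block arcs.
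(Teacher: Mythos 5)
Your first step (non-$uw$ edges at $u$: same separation class $\iff$ endpoints connected in $G-\{u,w\}=H-u$ $\iff$ same block of $H$ at $u$, plus the classical fact that block edge-sets are unseparated in the rotation at a vertex of a plane graph) is sound, modulo the small inaccuracy that $\{u,w\}$ being a separation pair does \emph{not} force $u$ to be a cut vertex of $H$ (e.g.\ when the extra classes are singleton $uw$-edges); that is harmless there because a single block is trivially consecutive. The genuine gap is in the second step, which is exactly where the geometric content of the lemma sits. Your key claim -- ``every corner of $\hat F$ at $u$ lies at a gap between two distinct block arcs'' -- is literally false when $u$ lies in only one block of $H$ (then no such gaps exist, yet $\hat F$ has corners at $u$ and the $uw$-edges must go somewhere; this case needs its own, easy, remark that a $2$-connected block's face has at most one corner at $u$, so all $uw$-edges land in one corner and consecutiveness survives). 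More importantly, in the multi-block case the justification you give is a non sequitur: from ``$\partial\hat F$ meets every block'' and ``transitions between distinct blocks happen only at gap corners'' it does not follow that \emph{every} corner of $\hat F$ at $u$ is a gap corner -- a priori the boundary walk could also pass through $u$ at a corner interior to one block's arc, and a $uw$-edge placed there is precisely what would destroy consecutiveness. To rule this out you need an extra argument, e.g.: such a block has at least two edges at $u$, hence is $2$-connected, hence the face of that block containing the offending corner meets $u$ in only that one corner; consequently $\hat F$ (and with it all of $w$'s neighbours, which lie on $\partial\hat F$) is confined to the closure of that single block-face, whose closure is disjoint from the other components of $H-u$; but biconnectivity forces $w$ to have a neighbour in every component of $H-u$ -- contradiction. (Note also that your phrase ``$\partial\hat F$ meets every block of $H$ at $u$'' conflates blocks at $u$ with components of $H-u$: the neighbour of $w$ guaranteed by biconnectivity lies in the component, not necessarily in the block itself, which is why the confinement argument should be phrased in terms of components.)

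For comparison, the paper dispenses with the block machinery entirely: it assumes four edges at $u$ appear in the interleaved cyclic pattern $e_1,e_2,e_3,e_4$ with only $e_1,e_3$ in a common class, takes a path joining $e_1$ and $e_3$ that avoids $u,w$ internally, closes it to a cycle $C$ through $u$ avoiding $w$, and observes that $e_2$ and $e_4$ lie on opposite sides of $C$, so one of them cannot reach $w$ without meeting $C$ -- putting it in the class of $e_1,e_3$, a contradiction. That one-paragraph Jordan-curve argument handles ordinary edges and parallel $uw$-edges uniformly; if you want to keep your block-based route, the missing confinement step above (or this same interleaving argument applied just to a $uw$-edge sitting inside a block arc) is what you must add to make the $uw$-edge case rigorous.
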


\begin{restatable}{lemma}{spqrpath}\label{lem:spqrpath}
Let $G$ be a triconnected plane graph and $e=uw \in E(G)$.
Assume that $G - e$ is not triconnected.
Then, the SPQR-tree of $G - e$ is a path $H$ (we call it an SPQR-path).
Moreover, given all edges that lie on $4$-cycles in $\fv{(G - e)}$, we can compute all nodes of $H$ except for the largest one in time that is linear in their size.
\end{restatable}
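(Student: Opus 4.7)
The lemma has a structural part (that $H$ is a path) and an algorithmic part. For the former, I would proceed by contradiction: suppose some node $x \in H$ has three neighbors $y_1, y_2, y_3$, and let $\{a_i, b_i\}$ denote the separation pair given by the virtual edge of the tree edge $(x, y_i)$. For each $i$, removing $\{a_i, b_i\}$ from $G - e$ splits the remaining vertices into the ``$T_i$-side'' $B_i$ (those appearing only in skeletons of the subtree $T_i$ hanging off $y_i$ from $x$, excluding $a_i, b_i$) and its complement $O_i$. Triconnectivity of $G = (G - e) + e$ with $e = uw$ forces each $\{a_i, b_i\}$ to stop being a separator once $e$ is restored; this means both $\{u, w\} \cap \{a_i, b_i\} = \emptyset$ (otherwise removing $\{a_i, b_i\}$ from $G$ also deletes $e$, keeping the pair as a $G$-separator) and $u, w$ must lie on opposite sides of $(B_i, O_i)$. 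But the three $B_i$ are pairwise disjoint, so each of $u, w$ lies in at most one $B_{i^*}$; hence for some $i \in \{1, 2, 3\}$ both $u$ and $w$ lie in $O_i$, contradicting destruction of $\{a_i, b_i\}$. Therefore $H$ has maximum degree $2$, and being connected, is a path.

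\textbf{Part 2 (Peeling algorithm).} Write $H = n_1, \ldots, n_k$. Extending Part 1's case analysis, if $u$ appeared in skeletons $\Gamma(n_p), \ldots, \Gamma(n_q)$ with $p < q$, it would be in $\{a_i, b_i\}$ for some $i$, which we excluded; so $u$ is contained in exactly one skeleton, and the ``destruction'' requirement on tree-edge separators outside that skeleton's index pins $u$ to $\Gamma(n_1)$. Symmetrically $w \in V(\Gamma(n_k))$. By Lemma~\ref{lem:sep1} every separating $4$-cycle $(v_1, f_1, v_2, f_2)$ of $\fv{(G-e)}$ certifies a separation pair $\{v_1, v_2\}$ of $G - e$; the input marks exactly the corners of $G - e$ at such separating vertices. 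By Lemma~\ref{lem:classescircular} these corners delimit the separation classes in the cyclic edge order around each separating vertex, so starting from a current virtual-edge boundary I walk real edges of $G - e$ until meeting the next marked corner, which pins down the next separation pair and the skeleton trapped between the two boundaries. I peel skeletons iteratively: from the $u$-side, $\Gamma(n_1), \Gamma(n_2), \ldots$; from the $w$-side, $\Gamma(n_k), \Gamma(n_{k-1}), \ldots$; each is classified as S, P, or R from its traversed structure in time $O(|E(\Gamma(n_i))|)$ by standard planar traversal.

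\textbf{Termination and main obstacle.} The two peel processes are run in round-robin, performing one constant-time step alternately on each side, and halt as soon as the next separation pair one side would advance to coincides with the opposite side's current boundary---at which point the unprocessed region is a single skeleton, namely the largest $n_{j^*}$. Because each step is charged to its owning side and no step ever touches an edge of $n_{j^*}$, the total time is $O\bigl(\sum_{i \neq j^*} |E(\Gamma(n_i))|\bigr)$, as required. The main obstacle is ensuring the termination detection is $O(1)$ per step: we must locally decide, as each frontier advances, whether it has collided with the opposite frontier without enumerating the remaining region. I plan to handle this by maintaining on each side a pointer that sweeps along the boundary of the merged face $f^*$ created by deleting $e$ (both $u$ and $w$ lie on $\partial f^*$, and $f^*$ threads through every $n_i$ of $H$), so frontier collision reduces to comparing these two pointers after each advancement.
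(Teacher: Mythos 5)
Your structural argument (Part 1) is sound and is in fact a different route from the paper's: you rule out a degree-$3$ node via the pairwise-disjoint sides $B_1,B_2,B_3$ and the observation that every separation pair of $G-e$ must avoid $\{u,w\}$ and separate $u$ from $w$, whereas the paper argues via two internally vertex-disjoint $u$--$w$ paths and the fact that each separation pair of $G-e$ has at most two nontrivial separation classes. Your peeling from both ends with marked corners (Lemma~\ref{lem:classescircular}) is also essentially the paper's bidirectional search.

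The genuine gap is exactly the point you flag as the ``main obstacle'': termination detection. Your halting rule --- stop when ``the next separation pair one side would advance to coincides with the opposite side's current boundary'' --- is circular, because a side only learns its next separation pair by finishing the next skeleton, and that next skeleton may be the largest one (e.g.\ $H=n_1,n_2$ with $n_1$ huge: the $u$-side must traverse all of $n_1$ before it knows its ``next'' pair). The proposed fix, comparing two pointers that sweep along $\partial f^{*}$, does not repair this: comparing the positions of the two frontier pairs on the boundary of the merged face can only tell you their relative order or whether they coincide, i.e.\ it detects an \emph{empty} gap between the frontiers, not that \emph{exactly one} skeleton remains between them. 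Distinguishing ``one node left'' from ``many nodes left'' from the two frontier pairs alone requires information about the unexplored middle region, and waiting for the frontiers (or face-boundary pointers) to actually meet forces one search to traverse $\Theta(|E(\Gamma(n_{j^*}))|)$ edges of the largest node, breaking the claimed time bound. The paper's resolution is different and is where the hypothesis of the lemma is really used: from the given $4$-cycle edges one knows, via Lemma~\ref{lem:fvCycleEdgeclasses}, the \emph{entire set} of separation pairs of $G-e$ up front, and both round-robin searches halt as soon as every separation pair has been discovered by one of them --- a constant-time count check per step that fires precisely when the unstuck search reaches the boundary of the largest node, so the total work of both searches is charged to the non-largest skeletons. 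You would need to replace your pointer-comparison criterion by this (or an equivalent precomputed global) stopping rule for the algorithmic half of the lemma to go through.
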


For a planar graph, there is a nice duality, as proven by Angelini et al.~\cite[Lemma 1]{Angelini2013}. Define the dual SPQR-tree as the tree  obtained from the SPQR-tree by interchanging $S$- and $P$-nodes, and taking the dual of the skeletons. 
\begin{lemma}[Angelini et al~\cite{Angelini2013}]\label{lem:dualSPQR}
  The SPQR-tree of $\dual{G}$ is the dual SPQR-tree of $G$.
\end{lemma}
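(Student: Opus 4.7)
The plan is to invoke uniqueness of the SPQR-tree and show that the natural candidate derived from $T$, the SPQR-tree of $G$, satisfies every clause of Definition~\ref{def:SPQR} for $\dual{G}$. Concretely, let $T'$ be the tree with the same underlying structure and same set of tree edges as $T$, but where (i) the type of every node is swapped ($\mathrm{S}\leftrightarrow\mathrm{P}$, $\mathrm{R}$ fixed) and (ii) each skeleton $\Gamma(x)$ is replaced by its planar dual $\dual{\Gamma(x)}$. By the uniqueness theorem of Di Battista and Tamassia~\cite{Battista:96}, it suffices to check $T'$ is some SPQR-tree of $\dual{G}$.

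First I would check that the types match up. The dual of a simple cycle of length $k\geq 3$ is exactly two vertices joined by $k$ parallel edges, so $\mathrm{S}$-skeletons become $\mathrm{P}$-skeletons and vice versa. The dual of a simple triconnected planar graph is again simple and triconnected (Whitney), so $\mathrm{R}$-skeletons stay $\mathrm{R}$-skeletons. The condition that no two $\mathrm{S}$-nodes and no two $\mathrm{P}$-nodes are adjacent is symmetric under the swap, so it holds in $T'$ iff it holds in $T$. The edge partition of $E(\dual{G})$ is inherited from that of $E(G)$ via the bijection $e\mapsto \dual{e}$: each dual edge lies in exactly one dualized skeleton because $e$ lies in exactly one skeleton of $T$.

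Next I would verify the virtual-edge/separation-pair correspondence along tree edges. For a tree edge $(x,y)$ of $T$, the shared virtual edge $uv$ realizes a separation pair $\{u,v\}$ of $G$; in $\dual{\Gamma(x)}$ and $\dual{\Gamma(y)}$ its dual $\dual{uv}$ joins the two faces $f_1,f_2$ incident to $uv$. The task is to show that $\{f_1,f_2\}$ is, correspondingly, a separation pair of $\dual{G}$ and that the separation classes of $\dual{G}$ with respect to $\{f_1,f_2\}$ are partitioned among the two dualized subtrees exactly as the classes of $G$ with respect to $\{u,v\}$ are partitioned among the two original subtrees. This is precisely what Lemmas~\ref{lem:fvCycleEdgeclasses} and~\ref{lem:sep1} give us: the virtual edge $uv$ of the SPQR-tree corresponds to a separating $4$-cycle $(u,f_1,v,f_2)$ in $\fv{G}$, which by fact~\ref{fact:fvSameDual} is also a separating $4$-cycle in $\fv{(\dual{G})}$, and by fact~\ref{fact:fvEdges} partitions the faces of $\fv{G}$ (i.e.\ the edges of $G$, and equivalently the edges of $\dual{G}$) into two classes that match on both sides of the duality.

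The main obstacle is the last point: establishing that the edge partition induced by the separation pair $\{u,v\}$ in $G$ and by the pair of faces $\{f_1,f_2\}$ in $\dual{G}$ are literally the same partition of the common edge set, as seen through the bijection $e\mapsto \dual{e}$. I expect to handle this by exhibiting, for each primal separation class $E_i$, a two-sided path in $\fv{G}$ through $f_1$ or $f_2$ (depending on the side) witnessing both that the edges of $E_i$ lie together in the corresponding dual separation class and that no edges from different $E_i$'s do. Once this is in hand, every requirement of Definition~\ref{def:SPQR} is satisfied for $T'$ with respect to $\dual{G}$, and uniqueness yields $T' = \mathrm{SPQR}(\dual{G})$, completing the proof.
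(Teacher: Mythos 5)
The paper never proves this lemma: it is imported wholesale from Angelini et al.~\cite[Lemma 1]{Angelini2013}, so there is no in-paper argument to measure your proposal against; what you have written is an attempted reconstruction of the external proof. Your overall strategy is the natural one and could be completed: dualize each skeleton in the embedding inherited from $G$, swap the S/P labels, keep the tree, verify Definition~\ref{def:SPQR} for $\dual{G}$, and invoke uniqueness. The type-checking part is fine (the dual of a $k$-cycle is a two-vertex bundle of $k$ parallel edges and conversely, duals of simple triconnected planar graphs are simple and triconnected, the S/P-adjacency clause is symmetric under the swap, and the edge assignment transfers through $e\mapsto\dual{e}$).

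However, as written the proof has concrete gaps, and the step you single out as the crux is not quite the right one. First, for $T'$ to satisfy Definition~\ref{def:SPQR} with respect to $\dual{G}$ you must identify the vertices of each dualized skeleton, i.e.\ the faces of $\Gamma(x)$ in its inherited embedding, with actual faces of $G$ (vertices of $\dual{G}$); without this canonical identification the clauses $V(\Gamma(x))\subseteq V$ and ``$V(\Gamma(x))\cap V(\Gamma(y))$ is a separation pair'' do not even parse for $T'$, and with it you still must prove that for each tree edge the two dualized skeletons intersect in \emph{exactly} the two faces $f_1,f_2$ flanking the shared virtual edge. You never address this, yet it is where most of the content of the lemma lives. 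Second, the condition you defer --- that the separation classes of $\{u,v\}$ in $G$ and of $\{f_1,f_2\}$ in $\dual{G}$ form literally the same partition --- is left as an expectation rather than an argument, and it is also not what Definition~\ref{def:SPQR} asks you to check: the assignment of edges to skeletons is automatic under your construction ($\dual{e}\in\dual{\Gamma(x)}$ iff $e\in\Gamma(x)$), whereas the separation-pair and intersection facts above are what need proof. Note also that Lemma~\ref{lem:fvCycleEdgeclasses} by itself does not say the two partitions coincide; it characterizes pairs of edges separated by the $4$-cycle as those lying in different classes with respect to \emph{both} pairs simultaneously. Finally, to apply Lemma~\ref{lem:sep1} you must first argue that the $4$-cycle $(u,f_1,v,f_2)$ is separating in $\fv{G}$, which requires the (easy but missing) observation that each side of an SPQR-tree edge carries at least two real edges. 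So the route is viable, but the decisive steps are missing rather than merely routine.
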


Let $G$ be a connected plane graph. Since $\dual{(\fv{G})}$ is $4$-regular, $\fv{G}$ is quasi-simple and has bounded face-degree.  Furthermore, any edge deletion or contraction in $G$ that leaves $G$ connected, corresponds to an  edge insertion and immediate contraction in $\fv{G}$.  Thus by Theorem~\ref{thm:4cycledetection} we can maintain a data structure for $G$ under connectivity-preserving edge deletions and contractions, that after each update operation reports the corners that become part of a separating $4$-cycle in $\fv{G}$.

\begin{algorithm}[H]
	\begin{algorithmic}[1]
		\Function{\removep}{$e, x, T$}
		\State remove $e$ from $\Gamma(x)$
		\If{$\Gamma(x)$ has two edges}
		\If{$\Gamma(x)$ has no virtual edges}
		\State{delete $T$}
		\ElsIf{$\Gamma(x)$ has one virtual edge}
		\State{$y := $ the only neighbor of $x$}
		\State{$e_x := $
			\parbox[t]{\columnwidth-4cm}{
				the virtual edge in $\Gamma(y)$ corresponding to $x$
			}
		}
		\State{replace
			\parbox[t]{\columnwidth-4cm}{
				$e_x$ by the non-virtual edge of $\Gamma(x)$
			}
		}
		\State{remove $x$ from $T$}
		\ElsIf{$\Gamma(x)$ has two virtual edges}
		\State{$\{y, z\}$ := neighbors of $x$ in $T$}
		\State{remove $x$ from $T$,
			\parbox[t]{\columnwidth-4cm}{
				making $y$ and $z$ neighbors in $T$
			}
		}
		\If{$y$ and $z$ are $S$-nodes}
		\State{merge $y$ and $z$ into one node}
		\EndIf
		\EndIf
		\EndIf
		\EndFunction
	\end{algorithmic}
	\caption{\label{alg:removep}Removing an edge $e$ from a $P$-node $x$ of $T$}
\end{algorithm}

In the algorithm we maintain one SPQR-tree for each biconnected component with at least $3$ edges.
We now describe how these trees are updated upon edge deletions.
The procedures, depending on the type of the SPQR-tree node are given as
Algorithms~\ref{alg:removep}, \ref{alg:remover} and~\ref{alg:removes}.
Note that the lines~\ref{l:tree} and~\ref{l:big} in Algorithm~\ref{alg:remover} only introduce notation, that is the values of the variables are not computed.
The proofs of %
correctness can be found in Appendix~\ref{sec:SPQR-proofs}.

\begin{multicols}{2}
\newcommand{\pushcode}[1][1]{\hskip\dimexpr#1\algorithmicindent\relax}
\begin{algorithm}[H]
\begin{algorithmic}[1]
\Function{\remover}{$e, x, T$}
\State remove $e$ from $\Gamma(x)$
\If{$\Gamma(x)$ has a separation pair}
	\State{\label{l:tree}$X' := $ 
          \parbox[t]{\columnwidth-3cm}{
			SPQR-path representing $\Gamma(x)$}
		}
	\State{\label{l:big}$x_{big} := $ 
          \parbox[t]{\columnwidth-3cm}{
            the node of $X'$ 
            st. $\Gamma(x_{big})$ has the most edges
          }
        }
	\State{compute all nodes of $X'\setminus x_{big}$}
	\State{remove 
          \parbox[t]{\columnwidth-3cm}{
            and contract edges of $\Gamma(x)$ to obtain $\Gamma(x_{big})$
          }
        }
	\State{replace
          \parbox[t]{\columnwidth-3cm}{$x$ in $T$ by $X'$ (connect each child of $x$ to the correct node of $X'$)
          }
        }
	\For{each $S$- or $P$-node $z \in X'$}
        	\For{each neighbor $z'\notin X'$} 
			\If{$z,z'$ are same type}%
				\State{merge $z$ with $z'$}
			\EndIf
		\EndFor
	\EndFor
\EndIf
\EndFunction
\end{algorithmic}
	\caption{\label{alg:remover}Removing an edge $e$ from an R-node $x$ of $T$}
\end{algorithm}

\columnbreak

\begin{algorithm}[H]
\begin{algorithmic}[1]
\Function{\removes}{$e, x, T$}
\State remove $e$ from $\Gamma(x)$
\State remove $x$ from $T$
\For{each edge $e'$ in $\Gamma(x)$}
        \State Make a new BC-node $z$
	\If{$e'$ is a virtual edge}
		\State{$y := $
                  \parbox[t]{\columnwidth-3cm}{
                    neighbor of $x$ in $T$ corresponding to $e'$
                  }
                }
        	\State{\parbox[t]{\columnwidth-2cm}{Make 
                    the tree containing $y$ the SPQR-tree for the new BC-node
                  }
                }
		\If{$y$ is a $P$-node}
			\State $\removep(y, e', T)$
		\Else
			\State $\remover(y, e', T)$
		\EndIf
        \EndIf
\EndFor
\EndFunction
\end{algorithmic}
	\caption{\label{alg:removes}Removing an edge $e$ from an $S$-node $x$ of $T$}
\end{algorithm}
  
\end{multicols}

We can now prove the main theorem of this section.
Note that, as in the block-cutpoint tree, we root each SPQR-tree in an arbitrary vertex.

\begin{theorem}\label{thm:SPQR}
There is a data structure that can be initialized on a simple planar graph $G$ on $n$ vertices in $O(n \log n)$ time, and supports any sequence of edge deletions or contractions in total time $O(n \log^2 n)$, while maintaining an explicit representation of a rooted SPQR-tree for each biconnected component with at least $3$ edges, including all the skeleton graphs for the triconnected components.
Moreover, %
during updates, the total number of times a node of an SPQR-tree changes its parent is $O(n \log n)$.
\end{theorem}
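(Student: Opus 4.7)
For initialization I would compute the block-cutpoint tree in $O(n)$ time by depth-first search, and for each biconnected component with at least three edges compute its SPQR-tree in $O(n)$ time via a standard static algorithm. Each SPQR-tree is rooted arbitrarily with explicit parent pointers. For every $R$-node $x$ I instantiate a $4$-cycle detection structure (Theorem~\ref{thm:4cycledetection}) on the vertex-face graph $\fv{\Gamma(x)}$; by fact~\ref{fact:fvDual4regular} and the surrounding discussion this graph is quasi-simple with bounded face-degree, so Theorem~\ref{thm:4cycledetection} applies. Since the skeletons of distinct SPQR-nodes are essentially edge-disjoint (the only overlap comes from $O(n)$ virtual edges), the total initialization cost is $O(n\log n)$.

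To process an update I locate the skeleton node $x$ that contains the edge $e$ using the explicit parent pointers, and by the duality Lemma~\ref{lem:dualSPQR} (contraction in $G$ corresponds to deletion in $\dual{G}$) I reduce to the case of a deletion and invoke the appropriate one of \removep, \remover, \removes. The $S$- and $P$-node cases perform only constant local work on skeletons; the interesting case is $R$-node deletion. Removing $e$ from $\Gamma(x)$ is precisely an embedding-respecting insertion-and-immediate-contraction across a face of $\fv{\Gamma(x)}$, so the $4$-cycle structure reports all edges now lying on a separating $4$-cycle. By Lemma~\ref{lem:sep1} together with Lemmas~\ref{lem:fvCycleEdgeclasses} and~\ref{lem:classescircular}, these edges identify all separation pairs of $\Gamma(x)-e$, and Lemma~\ref{lem:spqrpath} produces the entire new SPQR-path $X'$ together with all of its nodes except the largest node $x_{\mathrm{big}}$, in time linear in their total size.

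For every non-largest $z\in X'$ I build its skeleton and, when $z$ is an $R$-node, initialize a fresh $4$-cycle detection structure on $\fv{\Gamma(z)}$ from scratch; the largest node $x_{\mathrm{big}}$ inherits the structure previously maintained for $\Gamma(x)$ after the few local edge removals or contractions in $\fv{\Gamma(x)}$ that reflect the edges migrated away. Because an edge enters a newly built $R$-node structure only when its containing triconnected component has shrunk to at most half its previous size, each edge is inserted into $O(\log n)$ fresh structures during the lifetime of the data structure; this yields $O(n\log^2 n)$ total rebuild time, the extra $\log n$ factor coming from the bound of Theorem~\ref{thm:4cycledetection}. Merges of a $P$- or $S$-node in $X'$ with a same-type neighbor outside $X'$ can be done by concatenating the two skeletons in constant time, but the children of one side must be re-parented; I apply the small-to-large rule, moving the children of whichever side has fewer. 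Each SPQR-node is therefore re-parented $O(\log n)$ times during its lifetime, and since only $O(n)$ distinct SPQR-nodes ever exist, the total parent-pointer work is $O(n\log n)$, matching the second claim of the theorem.

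The main obstacle is that $x_{\mathrm{big}}$ may itself have size $\Theta(n)$ and therefore cannot be examined explicitly after each update; this is resolved because Lemma~\ref{lem:spqrpath} produces only the other nodes of $X'$ in time proportional to their size, while the preexisting $4$-cycle structure for $\fv{\Gamma(x)}$ is simply reinterpreted as the structure for $\fv{\Gamma(x_{\mathrm{big}})}$ after a constant number of local edge removals or contractions in $\fv{\Gamma(x)}$. A secondary technical point is maintaining the BC-tree alongside the SPQR-trees: by Algorithm~\ref{alg:removes} (and its contraction dual at $P$-nodes) a block that splits does so into a path (respectively star) of new blocks whose arrangement and articulation vertices can be read directly off the SPQR-forest produced by removing the spent $S$- or $P$-node. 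Summing the $O(n\log^2 n)$ rebuild cost, the $O(n\log n)$ re-parenting cost, and the $O(n)$ trivial local $S/P$-node work then proves the theorem.
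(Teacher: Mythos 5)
Your proposal follows essentially the same approach as the paper: BC-tree plus rooted SPQR-trees, a $4$-cycle detection structure on $\fv{\Gamma(r)}$ for each $R$-node, Lemma~\ref{lem:spqrpath} to recover all non-largest nodes of the SPQR-path, the halving argument giving $O(n\log^2 n)$ rebuild time, small-to-large re-parenting for the $O(n\log n)$ parent-pointer bound, and Lemma~\ref{lem:dualSPQR} to reduce contractions to deletions. One inaccuracy: reusing the old structure for $\Gamma(x_{\mathrm{big}})$ is not ``a constant number of local edge removals or contractions'' --- the paper's Lemma~\ref{lem:DeleteContractSeq} shows it requires a sequence of $\sizeof{E(\Gamma(x))}-\sizeof{E(\Gamma(x_{\mathrm{big}}))}$ deletions and contractions that keeps the graph connected (connectivity being needed so each step is a legal operation on the vertex-face graph), and this cost is exactly what your charging to the edges of the non-largest components pays for, so the amortization still goes through once stated this way.
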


\begin{proof}
We first partition the graph into biconnected components, 
and, as sketched in Section~\ref{sec:overview}, maintain the block-cutpoint tree explicitly.
Thus, given two vertices $u,v$, we can in $O(1)$ time access the biconnected component containing both of them, along with its auxiliary data.  Now, for each biconnected component $C_i$, we compute the SPQR-tree $T$.  This can be done in linear time due to~\cite{Gutwenger2001}.
We also root each SPQR-tree in an arbitrary node, and keep the trees rooted as they are updated.

For each node $x$ of $T$ we maintain the graph $\Gamma(x)$.
Each virtual edge of $\Gamma(x)$ has a pointer to the neighbor of $x$ it represents.
  Moreover, for each R-node $r$, we keep a data structure of Theorem~\ref{thm:4cycledetection} for detecting separating $4$-cycles in the vertex-face graph $\fv{(\Gamma(r))}$.
  By Lemma~\ref{lem:sep1}, any separating $4$-cycle in $\fv{(\Gamma(r))}$ corresponds to a separation pair in $\Gamma(r)$.
Since $r$ is an $R$-node, there are no separating $4$-cycles to begin with, but some may appear after an update.

Since the total size of the $R$-components is $n$, it follows from Theorem~\ref{thm:4cycledetection} that the entire construction time is $O(n \log n)$.

\subparagraph*{Deletion.}
When an edge $e$ is removed we find the node $x$ of the SPQR-tree, such that $e$ is a non-virtual edge in $x$.
Then, we proceed according to Algorithms~\ref{alg:removep}, \ref{alg:remover} and \ref{alg:removes}.

  Whenever an edge $fg$ is deleted from an R-node $r$, we update the corresponding $4$-cycle detection structure for $\fv{(\Gamma(r))}$.  We first insert the dual edge $\dual{(fg)}$ in the vertex-face graph, and then contract along that edge.
This allows us to detect whether $\Gamma(r)$ has any separation pairs after each edge deletion.

Let us now analyze the running time. When processing an edge deletion, the following changes can take place in a SPQR-tree (all other changes can be handled in $O(1)$ time):
\begin{itemize}
\item an R-node is split into multiple nodes,
\item two $P$-nodes or $S$-nodes are merged,
\item an $S$- or $P$- node is deleted.
\end{itemize}
Note, a $P$- or $S$-node can never get split.
So, though each edge may at first belong to nodes that are split, once it becomes a part of a $P$- or $S$-node, its node only participates in merges.

When two $S$- or $P$-nodes are merged, we can merge their skeleton graphs in constant time.
These skeleton graphs have only two common nodes, and their lists of adjacent edges can be merged in constant time thanks to Lemma~\ref{lem:classescircular}.
When nodes are merged, we also have to update the parent pointers of their children.
To bound the number of these updates, we merge the node with fewer children into the node with more.
Thus, the number of parent updates caused by these merges is $O(n \log n)$, and so is the impact on the running time.

A similar analysis applies to the case when an R-node $r$ is split into an SPQR-path.
By Lemma~\ref{lem:spqrpath}, we can compute all but the largest node of the SPQR-path in linear time.
Since the size of the skeleton graph in each of these nodes is at most half the size of $\Gamma(r)$, each edge takes part in this computation at most $O(\log n)$ times.
For every new R-nodes, we also initialize their associated data structures for detecting $4$-cycles.
We charge the running time of each data structure to this initialization.
From Theorem~\ref{thm:4cycledetection} we get that recomputing all the nodes and data structures takes $O(n \log^2 n)$ total time.

Taking care of the largest component of the SPQR-path is even easier, as we can simply reuse the skeleton graph of $r$ and its associated data structure for detecting $4$-cycles.
To update the skeleton graph, we use the following lemma.

\begin{restatable}{lemma}{DeleteContractSeq}\label{lem:DeleteContractSeq}
  If $G$ is triconnected, $e\in E(G)$, and $x$ is an $R$-node in the SPQR-tree for $G-e$, then there exists a sequence of $\sizeof{E(G)\strut}-\sizeof{E(\Gamma(x))\strut}$ edge deletions and contractions that transform $G-e$ into $\Gamma(x)$ while keeping the graph connected at all times.
\end{restatable}

After an R-node $r$ is split into a SPQR-path $H$ we also need to update the parent pointers in the children of $r$.
However, the number of children to update is at most the number of edges in the non-largest components of the SPQR-path.
As we have argued, the total number of such edges across all deletions is $O(n \log n)$.

\subparagraph*{Contraction.}
    The contraction of an edge of the embedded planar graph $G$ corresponds to the deletion of an edge of its dual graph, $\dual{G}$. By Lemma~\ref{lem:dualSPQR}, the SPQR-tree of $\dual{G}$ is the dual SPQR-tree of $G$. %
    Thus, if the edge was in %
    a $P$-node of the SPQR-tree, its contraction is handled like the deletion of an edge in a $S$-node, and vice versa. 
    
    If the contracted edge $e$ belongs to an $R$-node, that $R$ node may expand to a path in the SPQR-tree (because deletion in $\dual{G}$ may expand an $R$-node into a path). In the vertex-face graph, we may find all edges participating in new separating $4$-cycles, corresponding to separating corners of the graph.
  To find the new components, we simply apply Lemma~\ref{lem:spqrpath} to the dual graph and proceed analogously to a deletion.
\end{proof}

 \newpage
\appendix
\section{Decremental Triconnectivity}\label{sec:3vertex}

To answer triconnectivity queries, we maintain a rooted SPQR-decomposition
(see e.g.~\cite{Battista:96, Gutwenger2001}) of each biconnected
component of the planar graph.

Now it follows from the definition that pair of vertices in a biconnected graph is triconnected if and only if there exists a $P$ or $R$ component in the SPQR-tree containing them both.
By associating a constant amount of information with every vertex in $G$ and every node in the SPQR-tree, we can answer triconnectivity queries in constant time:
\begin{definition}\label{def:pointers}
  A \emph{triconnectivity query structure} for a biconnected graph consists of a rooted SPQR-tree, and the following additional information:
\begin{itemize}
\item For each node $x$ in the SPQR-tree except the root, a pointer $e(x)$ to the virtual edge that separates it from its parent.
\item For each vertex $v$, a pointer $C(v)$ to the node containing $v$ that is closest to the root.
\item For each vertex $v$ such that $C(v)$ points to an $S$-node $x$, a set $D(v)$ of pointers to the at most $2$ children of $x$ that contain $v$.
\end{itemize}
\end{definition}

\begin{lemma}\label{lem:3query}
Given the triconnectivity query structure described in Definition~\ref{def:pointers}, we can answer triconnectivity for any pair of vertices in constant time.
\end{lemma}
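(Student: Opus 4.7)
The plan is to reduce the query to a standard characterization of triconnectivity via SPQR-trees: in a biconnected graph with at least three edges, two vertices $u,v$ are triconnected if and only if some $P$- or $R$-node of the SPQR-tree has both $u$ and $v$ in its skeleton. In the forward direction, three parallel edges of a common $P$-node, or three internally vertex-disjoint paths in the triconnected skeleton of a common $R$-node (by Menger's theorem), expand in $G$ to three internally vertex-disjoint $u$-$v$ paths. Conversely, if no such common $P$- or $R$-node exists, a virtual edge on the SPQR-tree path between the subtrees containing $u$ and $v$ (or, if those subtrees meet in a common $S$-node, the vertex pair $\{u,v\}$ inside that $S$-node's cycle) yields a separation pair of $G$ separating $u$ from $v$.

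The second ingredient is a structural observation: for any vertex $v$, the set $T(v)$ of SPQR-nodes whose skeletons contain $v$ is a connected subtree rooted at $C(v)$. This is because a non-root node $x$ belongs to $T(v)$ precisely when $v$ is an endpoint of the virtual edge $e(x)$ joining $x$ to its parent, and in that case $v$ also lies in the parent's skeleton. Consequently, $T(u)\cap T(v)\neq\emptyset$ forces $C(u)$ and $C(v)$ to be comparable in the ancestor order, and when they are, the deeper of the two is the root of the intersection subtree.

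Armed with this, I would answer a query by the following case analysis:
\begin{itemize}
\item If $C(u)=C(v)=x$: return YES if $x$ is a $P$- or $R$-node; otherwise $x$ is an $S$-node, and since no two $S$-nodes are adjacent in the SPQR-tree the children of $x$ are all $P$- or $R$-nodes, so return YES iff $D(u)\cap D(v)\neq\emptyset$, a constant-time test since $|D(\cdot)|\le 2$.
\item If $C(u)\neq C(v)$: inspect $e(C(u))$ and $e(C(v))$ to decide whether one of $C(u),C(v)$ is a proper ancestor of the other. If neither is, they are incomparable and we return NO. Otherwise WLOG $C(u)$ is strictly deeper, so $v$ is an endpoint of $e(C(u))$ and both $u,v$ lie in $\Gamma(C(u))$. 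If $C(u)$ is $P$ or $R$, return YES; if $C(u)$ is $S$, return YES iff some child $y\in D(u)$ has $v$ as an endpoint of $e(y)$, again an $O(1)$ scan over at most two children.
\end{itemize}

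Each branch performs a bounded number of pointer comparisons and endpoint lookups, giving the worst-case $O(1)$ bound. The main obstacle to articulate cleanly is the last sub-case: I need to argue that it suffices to look only one SPQR-tree step below the root of $T(u)\cap T(v)$ when that root is an $S$-node. This follows from the subtree structure of $T(u)\cap T(v)$ together with the alternation property of SPQR-trees --- the children of an $S$-node root must be $P$- or $R$-nodes, so any common $P$- or $R$-node strictly deeper in the intersection already forces a $P/R$ child of the root to lie in the intersection, and that child is recorded among the at most two entries of $D(u)$, with an endpoint check against $v$ deciding whether it also contains $v$.
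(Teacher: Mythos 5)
Your proof is correct and follows essentially the same route as the paper: the same characterization (two vertices of a biconnected graph are triconnected iff some common $P$- or $R$-node contains both), the same three-way case analysis on $C(u)$ and $C(v)$, and the same use of $D(\cdot)$ when an $S$-node is involved, all justified via the fact that the nodes containing a fixed vertex form a connected subtree. The one place you genuinely diverge is the case $C(u)\neq C(v)$: the paper's proof answers YES exactly when $u$ is an endpoint of $e(C(v))$ or $v$ is an endpoint of $e(C(u))$, with no further check, whereas you additionally branch on the type of the deeper node and, when it is an $S$-node, require some child $y\in D(u)$ with $v$ an endpoint of $e(y)$. Your extra check is not cosmetic; it is what the stated characterization actually demands. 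For instance, let an $S$-node $x$ (triangle on $a,b,c$ with virtual edge $ab$ and real edges $ac,cb$) hang below an $R$-node root whose skeleton is $K_4$ on $a,b,d,e$: for $u=c$, $v=a$ we have $C(u)=x\neq C(v)$ and $v$ is an endpoint of $e(C(u))=ab$, yet the only node containing both $u$ and $v$ is the $S$-node $x$, and indeed $c$ has degree $2$, so $u,v$ are not triconnected; the paper's literal test reports YES here, while yours correctly reports NO via the empty $D(u)$-check. So your treatment of this sub-case is the complete one, and the paper's third case should be read with exactly your fallback when the deeper node is an $S$-node. One small slip in your write-up: the claim that a non-root node $x$ lies in $T(v)$ \emph{precisely} when $v$ is an endpoint of $e(x)$ is only one implication (e.g., $x=C(v)$ may be a non-root node whose parent edge does not have $v$ as an endpoint); what you actually need, and what is true, is that if $T(v)$ contains $x$ and also a node strictly above $x$, then $v\in\Gamma(x)\cap\Gamma(\mathrm{parent}(x))$, i.e., $v$ is an endpoint of $e(x)$ --- and this direction suffices for your comparability and deeper-root arguments.
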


\begin{proof}
  Given vertices $u$ and $v$. If $C(u)=C(v)$ and $C(u)$ is not an S-node, then $u$ and $v$ are triconnected. %
If $C(u)=C(v)$ is an S-node, $u$ and $v$ are triconnected if and only if they share a virtual edge in $\Gamma(C(u))$, which happens if and only if $D(u)\cap D(v)$ is non-empty.
If $C(u)\neq C(v)$, then $u$ and $v$ are triconnected if and only if either $u$ is an endpoint of $e(C(v))$, or, $v$ is an endpoint of $e(C(u))$.
\qedhere
\end{proof}

Given Theorem~\ref{thm:4cycledetection}, we have the tools ready for maintaining triconnectivity: 

\threevertex*
\begin{proof}
  For each vertex $v$ and for each SPQR-node $x$, we associate the information $e(x),C(v),D(v)$ described in Definition~\ref{def:pointers}.

\subparagraph*{Query.}
  To answer a triconnected query $(u,v)$, we first ask if $(u,v)$ are biconnected.  Otherwise, they cannot be triconnected. If they are, we get the SPQR-tree associated with their common biconnected component and use Lemma~\ref{lem:3query}. This answers the query in $O(1)$ worst case time.

\subparagraph*{Updates.}
  Our data structure for SPQR trees already maintain $e(x)$, so the main difficulty is in maintaining $C(v)$ and $D(v)$ for each vertex.  Let $x$ be the value of $C(v)$ before the change, let $x'$ the new value, and suppose $x\neq x'$.

If $x$ and $x'$ are both $R$-nodes, $\abs{E(\Gamma(x'))}<\frac{1}{2}\abs{E(\Gamma(x))}$ so we are already using $\Omega(\abs{E(\Gamma(x'))})=\Omega(\abs{V(\Gamma(x'))})$ time to rebuild $\fv{\Gamma(x')}$.  We can thus afford to update $C(v)$ for all $v\in V(\Gamma(x'))$.

If $x$ is an $R$-node and $x'$ is not, then $C(c)$ was split into $k>1$ new nodes. In this case we are already using $\Omega(k)$ time maintaining the SPQR tree, so we can %
spend an additional $O(k)$ time on updating $C(v)$ for the 
$O(k)$ vertices from $V(\Gamma(x))$ whose new $C(v)$ is not an $R$-node.

If $x$ is a $P$-node, then it has to be the root (since $C(v)=x$), so this can happen for at most $2$ vertices per update and we can easily afford that.

If $x$ is an $S$-node, then either the biconnected component was split into $k>1$ new components and we can afford to spend $O(k)$ time on updating $C(v)$ for the $k-2$ vertices in $S$ that were pointing to $x$.  Or $x$ was merged into another $S$-node.  The total cost is linear in the total number of times some node changes parent due to such a merge, which is $O(n\log n)$.

Finally for each node $x'$ that has a new parent $p$ in the SPQR-tree, if $p$ is an $S$-node, $e(x')$ has the two vertices whose $D(v)$ need to be changed, and this can be done in constant time.  The total number of times this happens is $O(n\log n)$.
\end{proof}

\section{Detecting $4$-Cycles Under Edge Contractions and Insertions}\label{sec:4cycle}

In this section we give an algorithm for detecting $4$-cycles (simple cycles of length $4$) in a planar embedded graph that undergoes contractions and edge insertions that respect the embedding.
We say that a $4$-cycle in a planar embedded graph $G$ is a \emph{face $4$-cycle} if it is a cycle bounding a face of $G$, and a \emph{separating $4$-cycle} otherwise. For our purposes, only the separating $4$-cycles are interesting, but we note in passing that new face $4$-cycles are easy to detect under edge insertions and contractions:
\begin{observation}
	An embedding-respecting edge insertion creates two new faces, and we may check in constant time whether each of them has degree $4$ or not.
	An edge contraction affects degrees of only two faces (the two incident to the contracted edge), and we may check in constant time whether their new degree is $4$ or not.
\end{observation}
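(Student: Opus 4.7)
The observation is the conjunction of two claims, and my plan is to handle them separately using the same two-step template: first identify which faces can possibly have altered degrees, then argue that degree-equal-to-$4$ can be checked with a bounded-length walk.

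For the insertion claim, I would begin by recalling what ``embedding-respecting'' means: such an insertion specifies two corners, one at each endpoint $u, v$, that are incident to a common face $f$ of the current embedding, and it adds the edge $uv$ across $f$. Elementary planar topology then gives that this insertion splits $f$ into exactly two new faces $f_1$ and $f_2$, while the boundary cycle of every other face is unchanged. Hence only $f_1$ and $f_2$ could have face-degree $4$ ``newly,'' and those are the only faces we need to test. To test each one, starting from the newly inserted dart for $uv$ I would walk along the face boundary (using the usual next-edge-around-a-face pointers available in a doubly-connected edge list or rotation-system representation of the embedding), terminating either when the walk returns to the starting corner or after $5$ steps. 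This runs in $O(1)$ time per face, so $O(1)$ total.

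For the contraction claim, the plan is analogous. When a non-loop edge $e = uv$ is contracted, the two endpoints are identified and $e$ is deleted; for every face $f$ whose boundary cycle does not traverse $e$, the boundary is the same cyclic sequence of edges as before (only endpoint labels are relabeled from $u$ or $v$ to the merged vertex), so its face-degree is unchanged. Therefore the only faces whose degree is affected are the (at most two) faces incident to $e$, and by the same bounded walk of up to $5$ steps along each new boundary we check whether the new degree equals $4$ in $O(1)$ time. The only point to verify -- and not really an obstacle -- is that the embedding representation used throughout the paper supports face-boundary traversal in $O(1)$ per step and supports the elementary update operations (splitting a face by a new edge, merging two corners on contraction) in $O(1)$ time, which is standard.
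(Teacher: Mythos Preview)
Your proposal is correct. The paper states this as an observation without proof, treating it as self-evident; your argument---identifying the at most two faces whose boundaries change and then walking each boundary for a bounded number of steps---is exactly the natural justification and would be what the paper intends.
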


The main goal of this section is to prove the following theorem.
\cycledetection*

In order to detect $4$-cycles, we use planar separators.  In fact, in order to maintain our data structure dynamically, we need something a little bit stronger.
\begin{definition}
	Given a planar graph $G$, a separation $(A,B)$ of $G$ is said to be \emph{face-preserving} if for any face $f$ of $G$, all vertices of $f$ belong to $A$ or all vertices of $f$ belong to $B$.
\end{definition}
For instance, given a cycle separator $K$, we can form a face-preserving separation $(A, B)$ such that $A \cap B = K$. Namely, $K$ corresponds to a Jordan curve dividing the plane into two parts, $S_A,S_B$, where every face lies entirely in one part. Define $A$ by all vertices incident to faces on $S_A$, and $B$ similarly. Then, $A \cup B = G$, and $A\cap B = K$.

In our algorithm we need to maintain separations under edge insertions and contractions.
Let $(A, B)$ be a separation in $G$.
When an edge is inserted, we do not modify the separation.
When an edge $uw$ is contracted into a vertex $x$, we obtain a new separation $(A', B')$ as follows.
If $u \in A$ or $w \in A$, we set $A' = (A \setminus \{u, w\}) \cup \{x\}$.
Otherwise, $A' = A$.
The set $B'$ is defined analogously.
Thanks to this convention, we obtain the following.

\begin{lemma}
  Let $(A,B)$ be a face-preserving separation in $G$. Let $G'$ be the result of an embedding-respecting edge insertion or edge contraction, and let $A',B'$ be the vertices corresponding to $A$ and $B$ in $G'$.  Then $(A',B')$ is a face-preserving separation in $G'$, and $\sizeof{A\cap B}-1\leq \sizeof{A'\cap B'} \leq \sizeof{A\cap B}$.
\end{lemma}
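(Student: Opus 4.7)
The plan is to case-split on whether the update is an edge insertion or an edge contraction. For an embedding-respecting insertion of $uw$ across a face $f$ of $G$, the face-preserving hypothesis forces $V(f)$ to lie entirely in $A$ or entirely in $B$; in either case $u,w$ lie on the same side, the new edge belongs to the corresponding induced subgraph, and the two faces obtained by splitting $f$ have vertex sets contained in $V(f)$, so the face-preserving property is inherited. Since the convention leaves $A' = A$ and $B' = B$, we have $|A' \cap B'| = |A \cap B|$.

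The contraction case is the substantive one. Let $x$ denote the contracted vertex. Because $(A,B)$ is a separation, the edge $uw$ lies in $G[A]$ or $G[B]$, so $\{u,w\}$ sits entirely on at least one side. I would split into three sub-cases according to how $\{u,w\}$ meets the separator $A \cap B$ (neither endpoint, exactly one, or both) and directly compute $A' \cap B'$ from the definitions. The computation yields
\[
A' \cap B' = ((A \cap B) \setminus \{u,w\}) \cup (\{x\} \cap A' \cap B'),
\]
where $x \in A' \cap B'$ precisely when $\{u,w\}$ meets both $A$ and $B$. This gives $|A' \cap B'| = |A \cap B|$ in the first two sub-cases and $|A' \cap B'| = |A \cap B| - 1$ in the third, matching the claimed bounds.

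For the face-preserving property after a contraction, every face $f'$ of $G'$ corresponds to a face $f$ of $G$ via the identification $u,w \mapsto x$, and the convention for $A'$ ensures that whenever $u$ or $w$ occurs in a face contained in $A$, the new vertex $x$ lies in $A'$; thus $V(f') \subseteq A'$ whenever $V(f) \subseteq A$, and symmetrically for $B$. The remaining obligation is that $(A',B')$ still satisfies Definition~\ref{def:separation}: the covering property is immediate because every edge and vertex of $G'$ descends from one already assigned to $A$ or $B$. The non-emptiness of $A' \setminus B'$ and $B' \setminus A'$ is the one subtle point, since contracting the last non-separator vertex on one side can in principle collapse a difference, but this degenerate situation does not arise in the balanced-separator context where the lemma is applied. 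The main obstacle I anticipate is therefore not any deep idea but the careful bookkeeping of how the conventions for $A'$ and $B'$ interact with the face-preserving hypothesis across the sub-cases.
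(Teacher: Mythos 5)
Your proposal is correct and follows essentially the same route as the paper's proof: the insertion case is handled identically via the face into which the edge is inserted, and the contraction case by tracking how $\{u,w\}$ meets $A\cap B$ under the stated update convention (your three-way case split is just a slightly finer version of the paper's two cases, and your cardinality computations match). The non-emptiness caveat you flag for contractions is glossed over in the paper's proof as well, so it is not a gap relative to the paper's own treatment.
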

\begin{proof}
  If an edge is inserted that respects the embedding, it is inserted into some face $f$. By definition at least one of $A,B$ contain all vertices on $f$, and in particular it also contains all the vertices of the two new faces that appear in $G'$. Since $A=A'$ and $B=B'$ in this case, the result~follows.

  If an edge $uv$ is contracted, the resulting graph $G'$ has the same faces as $G$, and the separation $(A',B')$ is clearly face-preserving.
  If $u,v\in A\cap B$, then $\sizeof{A'\cap B'}=\sizeof{A\cap B}-1$. Otherwise $uv$ has an endpoint outside $A\cap B$. Without loss of generality, we can assume that $u\in A\setminus B$. In that case, $v\in A$ and $B=B'$ and it follows that $\sizeof{A'\cap B'}=\sizeof{A\cap B}$.
\end{proof}

\begin{definition}\label{def:separator-tree}
  Given a graph $G$, a \emph{separator tree} is a binary tree where each node $x$ is associated with an induced subgraph $H_x$ of $G$, such that for some constant $n_0>0$:
  \begin{itemize}
  \setlength{\parskip}{0pt}
  \setlength\itemsep{0.5em}
  \item If $x$ is the root, $H_x=G$.
  \item If $\abs{V(H_x)}>n_0$ then $x$ has children $y$, $z$ such that
    $(V(H_y),V(H_z))$ is a balanced separation of $H_x$ with a small
    separator $S_x=V(H_y)\cap V(H_z)$.
  \item If $\abs{V(H_x)}\leq n_0$ then $x$ is a leaf.
  \end{itemize}
  A separator tree is a \emph{cycle separator tree} if $S_x$ is a cycle separator, and it is \emph{face preserving} if $(V(H_y),V(H_z))$ is face-preserving, for all nodes $x$ with children $y$ and~$z$.
\end{definition}

\begin{lemma}
  \label{lem:septree}
  Given a planar graph with bounded face degree, we can in $O(n\log n)$ time build a face-preserving separator tree where each node $x$ explicitly stores $S_x$ and $H_x$. This tree has height $O(\log n)$, and uses $O(n\log n)$ space.
\end{lemma}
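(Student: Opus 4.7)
\medskip
\noindent\textbf{Proof plan.}
The plan is to build the tree top-down by recursion: at each non-leaf node $x$ with associated subgraph $H_x$ of size $n_x>n_0$, I would find a balanced face-preserving separation of $H_x$ with small separator $S_x$, explicitly store $S_x$ and $H_x$, form the two induced subgraphs $H_y,H_z$, and recurse on each. Leaves are subgraphs of size at most $n_0$.

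To produce $S_x$, I would invoke a planar cycle separator algorithm to compute, in $O(n_x)$ time, a simple cycle $C$ in the embedding of $H_x$ of length $O(\sqrt{n_x})$ whose two sides (the vertices enclosed, and the vertices exterior, in each case together with $V(C)$) each contain at most $\alpha n_x$ vertices for some constant $\alpha<\tfrac{1}{2}$. Setting $S_x := V(C)$ and letting $V(H_y),V(H_z)$ be the vertex sets of the two sides, the fact that $C$ is a simple cycle of the plane embedding means every face of $H_x$ lies entirely inside or entirely outside $C$; hence the separation $(V(H_y),V(H_z))$ is face-preserving as required, and each child has size at most $\alpha n_x + O(\sqrt{n_x}) \le \alpha' n_x$ with $\alpha'<1$.

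For the complexity, the recursion has depth $O(\log n)$ because sizes shrink by a constant factor. Each node of size $n_x$ costs $O(n_x)$ for the separator computation plus $O(n_x)$ to materialize and store $H_x$. Summing level by level, one level contains total vertex count at most $n$ plus the sum of ancestor separators, which telescopes to $O(n)$; hence a single level costs $O(n)$, and summing over $O(\log n)$ levels gives both the $O(n\log n)$ total time and the $O(n\log n)$ space bound.

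The main obstacle is ensuring the cycle-separator subroutine actually applies at every recursive call: although induced subgraphs inherit planarity from $G$, after one level of separation $H_x$ has one large face (of size $O(\sqrt{n}\,)$) bounded by the ancestor cycle separator, so the ``bounded face degree'' hypothesis that Miller's cycle separator theorem requires is no longer strictly satisfied on the child. I would handle this by using the standard variant that tolerates a single distinguished ``outer'' face of arbitrary size (equivalently, by adding a temporary apex vertex inside that face to triangulate it, running the bounded-face-degree separator, and then discarding the apex) so that the cycle separator length remains $O(\sqrt{n_x})$ and the separation remains face-preserving. Alternatively, one can take a fundamental cycle separator with respect to a BFS tree rooted inside the large face, which likewise yields a simple cycle separator whose two sides give a face-preserving separation.
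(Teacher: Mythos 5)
There is a genuine gap, and it sits exactly at the obstacle you flagged but did not actually resolve. Your top-down recursion needs, at \emph{every} node, a small balanced separation of $H_x$ that is face-preserving with respect to \emph{all faces of $H_x$} -- including the large artificial faces ("holes") created by earlier separator cycles. First, there is not just "one large face": each split hands a new hole (bounded by the current separator cycle) to both children while old holes persist on one side, so a node at depth $i$ can have up to $i$ large faces. Second, and more fundamentally, once a large face is present a balanced simple cycle separator of size $O(\sqrt{n_x})$ need not exist at all: in a fan (a path $v_1\dots v_m$ plus an apex adjacent to all $v_i$, all inner faces triangles, one big outer face) every simple cycle is the apex plus a subpath, so any balanced cycle separator has length $\Omega(m)$. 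Hence the "standard variant tolerating a single distinguished outer face" that you invoke does not exist in the form you need. Your repair attempts do not restore the required property either: if you add a temporary apex (or take a fundamental cycle through added triangulation edges) and then discard the artificial vertex/edges, the resulting separator is no longer a cycle of $H_x$, and the Jordan curve it traces passes through the hole, so the hole's boundary vertices get split strictly between $V(H_y)$ and $V(H_z)$. That violates face-preservation precisely where it is needed: later embedding-respecting insertions in the full graph can land inside that hole of $H_x$ with both endpoints on its boundary, and then the inserted edge lies in neither child and the pair $(V(H_y),V(H_z))$ stops being a separation, breaking Lemma~\ref{lem:4cycles-crossing-separator} and the whole detection scheme. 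Repairing this by absorbing the hole boundary into the separator is too expensive, since hole boundaries can be far larger than $\sqrt{n_x}$.

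The paper avoids all of this by never computing separators of the (hole-ridden) subgraphs of $G$ directly. It triangulates once, builds the entire cycle separator tree on $G^{\triangle}$ using the linear-time construction of Klein, Mozes and Sommer (which is balanced, keeps only $O(1)$ holes per region, and whose separators are genuine cycles of the triangulated region, so holes stay on one side), and only then converts it into a face-preserving tree for $G$: for every separator edge in $E(G^{\triangle})\setminus E(G)$ it adds the at most $k=O(1)$ vertices of the face of the \emph{original} bounded-degree graph $G$ that this edge crosses, both to the separator and to the subgraphs. This costs only an $O(k)$ blow-up in $|S_x|$ and $|H_x|$, and face-preservation is obtained with respect to bounded-degree faces rather than with respect to arbitrarily large holes. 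Your per-level $O(n)$ size/time accounting is essentially fine (though "telescopes" should be replaced by the standard $\sum_i O(\sqrt{n\,2^i}) = O(n)$ duplication bound), but without the triangulate-first-then-fatten step the recursion itself is not well-founded.
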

    We construct the tree in three steps. First, we take our graph $G$ and make a triangulation $G^\triangle$. Then, referring to a result by Klein, Mozes, and Sommer~\cite{Klein:13}, we make a cycle separator tree for $G^\triangle$. %
    Finally, we can transform the cycle separator tree for $G^\triangle$ to a face preserving separator tree for $G$:
\begin{proof}
  Let $G$ be a graph with maximum face-degree $k$ and let $G^\triangle$ be a triangulation of $G$.  Then using the algorithm from~\cite[Theorem 3]{Klein:13}, we can in linear time compute a cycle separator tree for $G^\triangle$. Since the cycle separator tree is balanced, it has height $h\in O(\log n)$.  Since the children of each node partition the faces of $G^\triangle$ contained in the node, and contain at most a constant number of other faces (called ``holes''), the total number of faces (and hence vertices) in graphs associated with depth $i$ nodes is $O(n)$ for each $0\leq i<h$. %
  Thus the total size of all these graphs in the cycle separator tree is $O(n\cdot h)=O(n\log n)$.
  
  While this cycle separator tree indeed is face preserving for $G^\triangle$, it may be not face preserving for $G$ because we may have edges $e\in G^\triangle \setminus G$ in the cycle separator. Luckily, for each such edge, we can fix the problem by adding the at most $k$ vertices of the crossed face to the separator as follows:
  
For each $H^\triangle_x$ in the cycle separator tree for $G^\triangle$, we can construct $H_x$ by starting with $V(H^\triangle_x)$, adding the remaining vertices of each face of $G$ that is crossed by an edge in $E(H^\triangle_x)\setminus E(G)$, and taking the induced subgraph of $G$.  By construction, $\sizeof{H^\triangle_x}\leq \sizeof{H_x}\leq O(k\sizeof{H^\triangle_x})$. Furthermore, if $x$ is parent to $y,z$ in the separator tree then $V(H_x)=V(H_y)\cup V(H_z)$.
Clearly, $(V(H_y),V(H_z))$ is a balanced separation of $H_x$, 
and, by construction, it is face-preserving. What remains to be shown is that the size of the corresponding separator is small.

Given a node $x$ in the separator tree with children $y,z$ we
compute $S_x = V(H_y)\cap V(H_z)$.  Note that each vertex in
$S_x\setminus S^\triangle_x$ must belong to a face of $G$ that is crossed by an edge in $S^\triangle_x$, so also $\sizeof{S^\triangle_x}\leq\sizeof{S_x}\leq O(k\sizeof{S^\triangle_x})$.
Thus the $H_x$ and $S_x$ form a face-preserving separator tree for $G$.

The total time to explicitly construct the face-preserving separator tree and all the associated graphs is $O(k n \log n)$.
\end{proof}

\begin{lemma}\label{lem:4cycles-crossing-separator}
	Let $G$ be a graph and $(A,B)$ be a separation of $G$.
	Then, any $4$-cycle either has exactly one vertex in $A\setminus B$, one vertex in $B\setminus A$, and the remaining two %
	vertices in $A\cap B$, or the $4$-cycle is completely contained in at least one of $A$ or $B$.
\end{lemma}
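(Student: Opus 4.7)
The plan is to argue directly from the definition of a separation. Write $A' := A\setminus B$, $B' := B\setminus A$, and $S := A\cap B$, which partition $V(G)$. The key observation is that because $G[A]\cup G[B]=G$, every edge of $G$ (and hence every edge of the $4$-cycle) lies in $G[A]$ or in $G[B]$; in particular, no edge has one endpoint in $A'$ and the other in $B'$, since such an edge would belong to neither induced subgraph.

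With this in hand, I would split into two cases. First, if the $4$-cycle $C$ meets at most one of $A'$ and $B'$, then all its vertices lie in $A'\cup S$ or all lie in $B'\cup S$, and hence $C\subseteq G[A]$ or $C\subseteq G[B]$, giving the second alternative of the lemma. Second, suppose $C$ contains some $v\in A'$ and some $w\in B'$. Since there is no edge between $A'$ and $B'$, the vertices $v$ and $w$ cannot be adjacent on $C$, so they occupy opposite positions on the $4$-cycle. The two remaining vertices of $C$ are then common neighbors of $v$ and $w$: being a neighbor of $v\in A'$ in $G$, each must lie in $A$ (the edge cannot be in $G[B]$), and being a neighbor of $w\in B'$, each must lie in $B$; thus both lie in $S$. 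Finally, if $C$ contained a second vertex of $A'$, it would have to be adjacent to $w\in B'$, contradicting the absence of $A'$--$B'$ edges; by symmetry the same holds for $B'$. This yields exactly one vertex in $A'$, one in $B'$, and two in $S$, as required.

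\textbf{Where the work lies.}
There is no real obstacle here; the entire argument rests on the single combinatorial fact that a separation forbids edges between $A\setminus B$ and $B\setminus A$. The only point requiring a moment of care is ruling out the configuration with two vertices of $A'$ (or of $B'$) together with one vertex of $B'$ (resp.\ $A'$) and one of $S$, which is immediately excluded by the adjacency argument above. The remainder is bookkeeping over the four positions of the cycle.
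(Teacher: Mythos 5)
Your proof is correct and follows the same route as the paper: the single fact that a separation forbids edges between $A\setminus B$ and $B\setminus A$ (since every edge lies in $G[A]$ or $G[B]$), followed by the case analysis that the paper leaves implicit. You have simply spelled out the "simple case analysis" the paper invokes, so there is nothing to add.
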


\begin{proof}
From Definition~\ref{def:separation}, for each edge $e$ of $G$, both endpoints of $e$ are in $A$ or $B$.
Thus, an edge that has one endpoint in $A \setminus B$ has its other endpoint in $A$.
Using these facts, the lemma follows by simple case analysis.
\end{proof}

It follows that we can use the separator tree to detect $4$-cycles as follows.
For each leaf $x$ of the separator tree, the graph $H_x$ has constant size, so $4$-cycles inside $H_x$ can be detected in constant total time.
In any other node, we have a graph with a separator $K$, and we need to dynamically detect $4$-cycles that cross $K$ under edge contractions and embedding-respecting edge insertions. Referring to Lemma~\ref{lem:4cycles-crossing-separator} above, we only need to detect the two halves of a $4$-cycle, that is, length-$2$ paths between the vertices of $K$.

\begin{lemma}\label{lem:ineachnode}
    Let $G$ be a plane embedded graph on $n$ vertices and $K$ be a set of vertices of $G$ of size $|K| = O(\sqrt{n})$.
    Assume that $G$ undergoes edge contractions and insertions respecting the embedding.
    There exists a data structure that after each update operation can report the edges of $G$ that become members of separating $4$-cycles whose 
	two opposing vertices lie on $K$.
	Its total running time is $O(n+k)$, where $k$ is the total number of edge contractions and insertions.
\end{lemma}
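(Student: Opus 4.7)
The plan is to maintain, for each vertex $w$ of $G$, the list $L(w) = N(w) \cap K$ of its neighbors in the separator, and, for each unordered pair $\{u,v\} \subseteq K$, a witness set $W(\{u,v\})$ capped at two entries, each being a common neighbor of $u$ and $v$. The invariant is that $|W(\{u,v\})|=2$ exactly once we have discovered two distinct witnesses $w_1,w_2$, in which case $(u,w_1,v,w_2)$ is a $4$-cycle with opposing vertices in $K$; using the embedding we verify in $O(1)$ time whether it is separating or face-bounding, and report its four edges in the former case. Face $4$-cycles are tracked by the constant-time-per-update observation at the beginning of Section~\ref{sec:4cycle}, so filtering them out is cheap.

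Initialization processes each vertex $w$ once. I would exploit the structural fact highlighted in the overview: in a planar graph at most $O(|K|)=O(\sqrt{n})$ vertices have more than $4$ neighbors in $K$; call this set $H$, and note that $|H|=O(|K|)$ follows from applying Euler's bound to the bipartite planar subgraph induced by $H\cup K$. For $w \notin H$ we have $\binom{|L(w)|}{2} \le 6$, so enumerating all pairs at $w$ and updating the corresponding $W$'s costs $O(1)$ per vertex, contributing $O(n)$ in total. For $w \in H$ the planar-bipartite bound yields $\sum_{w\in H}|L(w)|=O(|K|)$, so the per-incidence work for high-$K$-degree vertices is also within the $O(n)$ budget.

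Under an insertion of an embedding-respecting edge, at most one list $L(\cdot)$ gains a single element; upon adding $u$ to $L(w)$, we walk through $L(w)$ and attempt to insert $w$ into each still-unsaturated $W(\{u,v\})$ for $v \in L(w)\setminus\{u\}$, reporting the $4$-cycle whenever a set first saturates. Under a contraction of $xy$ into $z$, we set $L(z) = L(x)\cup L(y)\setminus\{x,y\}$ and shrink $L(w)$ for each common neighbor $w$ of $x,y$; moreover we must account for pairs $(u,v)$ for which $z$ becomes a new common neighbor, namely when $u \in L(x)$ and $v \in L(y)$. To charge all this work I would use an amortized argument based on a potential $\Phi$ combining $\sum_w |L(w)|$ with an extra contribution from vertices in $H$. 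Insertions raise $\Phi$ by $O(1)$ and thus contribute $O(k)$ overall, while each contraction is shown to drop $\Phi$ by at least a constant times the work it triggers, using both the planar-bipartite edge bound and the fact that each pair $\{u,v\}\subseteq K$ can saturate its witness set only once and then is retired from further consideration.

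The main obstacle is precisely the amortized analysis of contractions that merge two vertices with large $K$-neighborhoods, where naive counting suggests as many as $|L(x)|\cdot|L(y)|$ newly-formed common-neighbor pairs passing through $z$. The crucial idea is a case split: either the affected pairs have witness sets that are already saturated (so no work is spent), or the large overlap $|L(x)\cap L(y)|$ forces a correspondingly large drop in $\Phi$, or the vertices involved lie in $H$ where the global bound $|H|=O(|K|)$ limits how often this can recur. Making the potential function precise enough to pay for every kind of contraction while keeping the initial value $\Phi=O(n)$ and the per-insertion increase $O(1)$ is what the authors describe in the overview as a ``complicated potential function,'' and is the technically involved heart of the proof.
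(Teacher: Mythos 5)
Your overall skeleton (per-pair tracking of length-$2$ paths between $K$-vertices, constant-time embedding checks, the Euler-type fact that only $O(|K|)$ vertices have $K$-degree at least $4$, and amortization by a potential) is the same as the paper's, but there are two genuine gaps. First, the reporting rule is wrong as stated: capping $W(\{u,v\})$ at two witnesses and then retiring the pair misses required reports. If the first two witnesses $w_1,w_2$ happen to bound a face $4$-cycle you report nothing, and when a third common neighbor $w_3$ later appears the cycle $(u,w_1,v,w_3)$ encloses $w_2$ and is separating --- but the retired pair never processes it. Even when the first cycle is separating, edges of \emph{later} length-$2$ paths between $u$ and $v$ also become members of separating $4$-cycles (once there are at least four such paths, all their edges do; this is exactly the observation behind Figure~\ref{fig:nontriv}) and must be reported. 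The paper avoids this by keeping, for each pair, the list of all its length-$2$ paths together with a per-edge ``already reported'' flag; a newly arriving path is compared against only $O(1)$ stored paths and triggers at most $8$ edge reports, so pairs are never retired.

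Second, and more importantly, the heart of the lemma is missing: you explicitly defer the construction of the potential, which is precisely where all the difficulty lies. Your sketched case split also has concrete holes. Saying that saturated pairs cost ``no work'' is not true: when $x,y$ with large $K$-neighborhoods are contracted you must still touch every pair $(u,v)$ with $u\in L(x)$, $v\in L(y)$ just to see that it is saturated, so the contraction costs $\Theta(|L(x)|\cdot|L(y)|)$ and the potential must supply a drop quadratic in the degrees; the paper gets this from the term $\Phi_s(M\cup K)=63\,\Phi_v(M\cup K)^2-\sum_v d_{M\cup K}(v)^2$, whose subtracted sum of squares grows by roughly $2ab$ when degrees $a,b$ add under a merge, and which is still $O(n)$ initially only because $\Phi_v(M\cup K)=O(|K|)=O(\sqrt n)$ --- none of which follows from the bound $|H|=O(|K|)$ alone. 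Moreover, your update procedure only handles new paths with the merged vertex as a \emph{middle} vertex; a contraction with an endpoint in $K$ also creates new length-$2$ paths having the merged vertex as an \emph{endpoint}, with middle vertices ranging over the neighbors of the absorbed vertex, and these are neither detected nor charged in your scheme. The paper pays for them with the separate term $3\Phi_q(Y\cup M)$ (sum of full degrees), which drops by $d_V(y)$ when a non-$K$ vertex $y$ is merged into $K$, and handles $K$-$K$ merges via the drop of $6\Phi_v(V)$ together with the bound of $O(|K|)$ affected pairs. Without an explicit potential covering all of these merge types (and remaining nonnegative, $O(n)$ initially, with bounded increase under insertions), the claimed $O(n+k)$ bound is not established.
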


We now proceed with the description of the data structure of the above lemma.
First note that if a pair of vertices is connected by at least $4$ paths of length two, all edges on those paths lie on separating $4$-cycles (see Figure~\ref{fig:nontriv}). Thus, if we can keep, for every pair of vertices of $K$, the list of all length $2$ paths between them, we need to check at most $2$ existing paths when a new path arrives, and then report at most $8$ edges ($4$ new length-$2$ paths) that now belong to separating $4$-cycles.

To report every edge only once, we also keep a Boolean flag for each edge that indicates whether it has been reported before, and check that flag before reporting.

We thus only need to argue that we can detect all the length-$2$ paths between $K$-vertices that arise in the graph under contractions and edge insertions, in $O(n+k)$ total time. We do that by constructing a potential function $\Phi$ that is initially $O(n)$, remains nonnegative, and drops at each operation proportionally to the amount of work done.  We start by partitioning the vertices into $3$ sets, that we need to treat differently.

\begin{figure}
  \center
  \begin{minipage}[b]{.4\textwidth}
    \center
    \begin{adjustbox}{width=.7\textwidth}
\begin{tikzpicture}[y=0.80pt, x=0.80pt, yscale=-1.000000, xscale=1.000000,
inner sep=0pt,outer sep=0pt]
  \begin{scope}[
      edge/.style={
        draw=black,
        line join=miter,
        line cap=butt,
        miter limit=4.00,
        even odd rule,
      },
      vertex/.style={
        fill=black,
        line join=miter,
        line cap=butt,
        line width=0.800pt,
        draw,
        circle,
        minimum size=2pt,
      },
      font=\tiny
    ]
    \coordinate (p0) at (74,47);
    \coordinate (p1) at (42,27);
    \coordinate (p2) at (4,47);
    \coordinate (p3) at (38,42);
    \coordinate (p4) at (41,68);
    \coordinate (p5) at (38,52);

    \path[edge,red] (p0) -- (p1) -- (p2) -- (p5) -- cycle;
    \path[edge,blue] (p0) -- (p4) -- (p2) -- (p3) -- cycle;

    \node[vertex,label={[label distance=1mm]above:v}] at (p0) {};
    \node[vertex] at (p1) {};
    \node[vertex,label={[label distance=1mm]above:u}] at (p2) {};
    \node[vertex] at (p3) {};
    \node[vertex] at (p4) {};
    \node[vertex] at (p5) {};
\end{scope}

\end{tikzpicture}

     \end{adjustbox}
    \caption{$4$ paths all participate in separating $4$-cycles.}
    \label{fig:nontriv}
  \end{minipage}
  \hspace{.08\textwidth}
  \begin{minipage}[b]{.5\textwidth}
    \begin{adjustbox}{max width=\textwidth}
\begin{tikzpicture}[y=0.80pt, x=0.80pt, yscale=-1.000000, xscale=1.000000, inner sep=0pt, outer sep=0pt,
    base/.style={line join=miter,line cap=butt,line width=0.800pt,even odd rule},
    M edge/.style={base,draw=Purple,thick},
    K edge/.style={base,draw=Black},
    Y edge/.style={base,draw=Olive,thick},
    vertex/.style={draw,regular polygon,minimum size=5mm},
    K vertex/.style={
      base,
      draw=Black,
      fill=Black,
      vertex,
      circle,
      minimum size=3mm,
    },
    M vertex/.style={
      base,
      draw=Purple,
      fill=Magenta,
      vertex,
      regular polygon sides=3,
      minimum size=6mm,
    },
    Y vertex/.style={
      base,
      draw=Olive,
      fill=Yellow,
      vertex,
      regular polygon sides=4,
    },
  ]
  \begin{scope}[
    shift={(-27.64884,-259.1519)},
    ]
    \coordinate (p0) at (120,305);
    \coordinate (p1) at (676,671);
    \coordinate (p2) at (724,584);
    \coordinate (p3) at (610,525);
    \coordinate (p4) at (354,366);
    \coordinate (p5) at (58,422);
    \coordinate (p6) at (243,444);
    \coordinate (p7) at (614,292);
    \coordinate (p8) at (390,505);
    \coordinate (p9) at (78,576);
    \coordinate (p10) at (576,607);
    \coordinate (p11) at (314,586);
    \coordinate (p12) at (451,665);
    \coordinate (p13) at (218,644);
    \coordinate (p14) at (194,501);
    \coordinate (p15) at (568,385);
    \coordinate (p16) at (549,469);
    \coordinate (p17) at (472,504);
    \coordinate (p18) at (558,545);
    \coordinate (p19) at (141,500);
    \coordinate (p20) at (94,497);

    \path[M edge] (p5) -- (p6) -- (p0);
    \path[M edge] (p4) -- (p6) -- (p7);
    \path[M edge] (p7) -- (p8) -- (p5);
    \path[M edge] (p8) -- (p9);
    \path[M edge] (p8) -- (p10);
    \path[K edge] (p4) -- (p0) -- (p7);
    \path[Y edge] (p11) -- (p12);
    \path[Y edge] (p13) -- (p11);
    \path[Y edge] (p9) -- (p11);
    \path[Y edge] (p14) -- (p8);
    \path[Y edge] (p9) -- (p14);
    \path[Y edge] (p5) -- (p14);
    \path[Y edge] (p11) -- (p8);
    \path[K edge] (p10) -- (p1) -- (p2);
    \path[K edge] (p13) -- (p12) -- (p10) -- (p2) -- (p3) -- (p10);
    \path[K edge] (p7) -- (p3);
    \path[Y edge] (p15) -- (p7);
    \path[Y edge] (p15) -- (p16) -- (p17) -- (p8);
    \path[Y edge] (p17) -- (p18) -- (p7);
    \path[Y edge] (p8) -- (p15);
    \path[Y edge] (p18) -- (p16);
    \path[Y edge] (p10) -- (p18);
    \path[Y edge] (p8) -- (p18);
    \path[Y edge] (p5) -- (p19) -- (p9);
    \path[Y edge] (p5) -- (p20) -- (p9);
    \path[M edge] (p12) -- (p8);
    \path[K edge] (p0) -- (p5) -- (p9) -- (p13);

    \node[K vertex] at (p0) {};
    \node[K vertex] at (p1) {};
    \node[K vertex] at (p2) {};
    \node[K vertex] at (p3) {};
    \node[K vertex] at (p4) {};
    \node[K vertex] at (p5) {};
    \node[M vertex] at (p6) {};
    \node[K vertex] at (p7) {};
    \node[M vertex] at (p8) {};
    \node[K vertex] at (p9) {};
    \node[K vertex] at (p10) {};
    \node[Y vertex] at (p11) {};
    \node[K vertex] at (p12) {};
    \node[K vertex] at (p13) {};
    \node[Y vertex] at (p14) {};
    \node[Y vertex] at (p15) {};
    \node[Y vertex] at (p16) {};
    \node[Y vertex] at (p17) {};
    \node[Y vertex] at (p18) {};
    \node[Y vertex] at (p19) {};
    \node[Y vertex] at (p20) {};

  \end{scope}

\end{tikzpicture}

     \end{adjustbox}
    \caption{The sets $M$ (magenta), $Y$ (yellow), and $K$ (black) from Lemma~\ref{lem:QRS-partition}.}
    \label{fig:MYK}
  \end{minipage}
\end{figure}
\begin{lemma}\label{lem:QRS-partition}
	Given a planar graph $G=(V,E)$ and a vertex set $K\subseteq V$, let $M$ denote the vertices $m\in V\setminus K$ that have $d_K(m)\geq4$ (see Definition~\ref{def:quasisimple}), and let $Y$ denote $V\setminus(M\cup K)$. Then $Y$, $M$ and $K$ form a partition of $V$, and $\abs{M}\leq\abs{K}-2$.
\end{lemma}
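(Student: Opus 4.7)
The partition claim $V = K \sqcup M \sqcup Y$ is immediate from the definitions, since $M \subseteq V\setminus K$ and $Y = V \setminus (M \cup K)$ is defined as the complement, so the three sets are pairwise disjoint and cover $V$.

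For the bound $\abs{M} \leq \abs{K} - 2$, the plan is to apply Euler's formula to a bipartite planar multigraph $H$ on the vertex set $M \cup K$ whose edges represent the $m$-to-$K$ connections in a suitable subgraph of $G$. The goal is to arrange that (i) each $m \in M$ has degree at least $d_K(m) \geq 4$ in $H$, and (ii) every face of $H$ in the inherited embedding has length at least $4$, which will come from combining bipartiteness with the quasi-simplicity of $G$. Granted (i) and (ii), Euler's formula gives $\abs{E(H)} \leq 2(\abs{M}+\abs{K}) - 4$, while (i) gives $\abs{E(H)} \geq 4\abs{M}$. Combining these yields $4\abs{M} \leq 2(\abs{M}+\abs{K}) - 4$, i.e., $\abs{M} \leq \abs{K} - 2$.

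The natural candidate for $H$ is the subgraph of the quasi-induced graph on $M \cup K$ consisting of the edges between $M$ and $K$, carrying the embedding inherited from $G$. Property (i) should follow from the observation that passing from the quasi-induced subgraph of $K \cup \{m\}$ to the larger quasi-induced subgraph of $M \cup K$ can only save parallel $m$-$k$ edges from being removed during quasi-simplification, since any additional $M$-vertices incident to $m$ or $k$ act as separators in the circular orderings and prevent pairs of parallel $m$-$k$ edges from becoming consecutive. Property (ii) is where the main obstacle lies: once we discard the edges within $M$ and within $K$ to form $H$, parallel $m$-$k$ edges that were non-consecutive in the quasi-induced graph on $M \cup K$ may become consecutive at both endpoints in $H$, potentially creating length-$2$ faces and spoiling the Euler bound. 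The hard part will be to rule this out, and I expect it can be handled either by a careful case analysis of the circular orderings around $m$ and $k$ that directly invokes the quasi-simplicity of $G$, or by replacing $H$ with a more robust construction, for instance by contracting non-$M$-$K$ edges instead of merely deleting them, or by working inside the vertex-face graph $\fv{G}$ where the bounded-face-degree condition makes the bipartite face-length bound more transparent.
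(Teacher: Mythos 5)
Your partition claim and the skeleton of your counting argument coincide with the paper's proof: build a bipartite planar multigraph $H$ between $M$ and $K$, get $d_H(m)\geq 4$ for every $m\in M$, get all faces of length at least $4$, and combine $4\abs{M}\leq\abs{E(H)}$ with the Euler bound $\abs{E(H)}\leq 2(\abs{M}+\abs{K})-4$. However, as written your proposal has a genuine gap, and you acknowledge it: for your candidate $H$ (the $M$--$K$ edges of the subgraph quasi-induced by $M\cup K$) you never prove property (ii), the absence of degree-$2$ faces. This is not a technicality you can defer: once bigon faces are allowed, a bipartite planar multigraph can have arbitrarily many edges (parallel copies), so the bound $\abs{E(H)}\leq 2(\abs{M}+\abs{K})-4$ fails outright and the whole count collapses. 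Your argument for property (i) is also only a sketch, though it is the more plausible of the two.

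The paper closes exactly this hole by taking what you call the ``more robust construction'': it defines $H$ to be the \emph{maximal quasi-simple} bipartite subgraph of $G$ with bipartition $(M,K)$, so quasi-simplicity --- and hence, together with bipartiteness, face length at least $4$ and the Euler bound --- holds by construction, and the entire burden shifts to the degree condition, which the paper then asserts follows ``by definition'' of $d_K(m)$. So your diagnosis of where the tension sits is sound: discarding the $M$--$M$ and $K$--$K$ edges can make two parallel $m$--$k$ edges consecutive around both endpoints (a $K$--$K$ edge that separated them around $k$ disappears), and with either choice of $H$ one of the two properties is free while the other requires an argument about how quasi-simplification interacts with this restriction; the paper's choice makes (ii) automatic and treats (i) as immediate. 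To turn your proposal into a proof, the cleanest path is to adopt the paper's definition of $H$ and then actually argue that each $m\in M$ retains at least $4$ incident edges after the bipartite graph is quasi-simplified (or, equivalently, rework the count so that each merged parallel pair is charged to a $K$-vertex or $K$--$K$ edge enclosed by it); as it stands, the proposal identifies the right difficulty but does not resolve it.
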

\begin{proof}
	The partition property follows trivially from the definition.  Consider the maximal quasi-simple bipartite subgraph $H$ of $G$ with bipartition $(M, K)$.  By definition, each $m\in M$ has at least $4$ neighbors in the subgraph of $G$ quasi-induced by $K\cup\set{m}$.  Thus for each $m\in M$, $d_H(m)\geq 4$, and so $4\abs{M}\leq E(H)$.  Since $H$ is bipartite and quasi-simple, by Euler's formula we have $E(H)\leq2(\abs{M}+\abs{K})-4$.  Combining the two we get $4\abs{M}\leq E(H)\leq2(\abs{M}+\abs{K})-4$, which implies $\abs{M}\leq\abs{K}-2$.
\end{proof}

The aim of the data structure is to notice when new common neighbors of pairs of vertices in $K$ appear. The idea is that since $K$ has size $O(\sqrt{\abs{V}})$, there are only $O(|V|)$ pairs of vertices of $K$.
For each such pair, we maintain a doubly-linked list of all length-$2$ paths between them, and for each edge we maintain a doubly-linked list of all such paths it participates in.

When an edge $uw$ is inserted, new length-$2$ paths can only appear if $u\in K$ and/or $w\in K$. In this case the number of candidate paths to check is bounded by $d_K(u)+d_K(w)$, and this can be done in constant time per path.

When an edge $uw$ is contracted, new length-$2$ paths between vertices of $K$ may appear in other ways.
For example, we have new paths between neighbors of $u$ contained in $K$ and neighbors of $w$ in $K$.
Other cases are possible if $u$ or $w$ belongs to $K$.

We now define a potential function that decreases by at least the number of candidate paths after each contraction. We can decide if each candidate path is an actual length-$2$ path between vertices in $K$ in constant time.

It is defined in stages:
\begin{align*}
  \Phi_q (X)&= \sum_{v\in X} d_V(v)
  \\
   \Phi_v (X) &= 4\abs{X} - \frac{1}{2}\sum_{v\in X} d_X(v) = 4\abs{V(G_X)}-\abs{E(G_X)}
  \\
  \Phi_s(X) &= 63 (\Phi _v (X))^2 - \sum_{v\in X} (d_X(v))^2
  \\
  \Phi &=  6\Phi_v (V) + 3\Phi_q(Y\cup M) + \Phi_s(M\cup K)
\end{align*}%

\begin{restatable}{lemma}{potential}
  \label{lem:potential}
  The potential	 $\Phi$ is initially $O(n)$ and remains nonnegative.
  The embedding-respecting insertion or contraction of an edge $uv$ decreases $\Phi$ by at least 
  the number of candidate paths.
\end{restatable}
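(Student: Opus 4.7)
The plan is to establish the three claims separately: the initial bound, nonnegativity, and the per-operation decrease. The first two are direct consequences of planarity, while the third is the technical heart.

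For the initial bound, each constituent of $\Phi$ is $O(n)$: $\Phi_v(V) = 4|V|-|E| \leq 4n$; $\Phi_q(Y\cup M) \leq \sum_{v\in V} d_V(v) = 2|E| = O(n)$ by planarity of $G$; and by Lemma~\ref{lem:QRS-partition} we have $|M\cup K| \leq 2|K|-2 = O(\sqrt{n})$, whence $\Phi_v(M\cup K)^2 = O(n)$ and $\Phi_s(M\cup K) = O(n)$. For nonnegativity, Euler's formula applied to any induced subgraph $G_X$ of the planar graph $G$ yields $|E(G_X)| \leq 3|X|-6$, so $\Phi_v(X) \geq |X|+6 > 0$ for $|X|\geq 3$ (with smaller $X$ immediate). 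For $\Phi_s(X)$, I would combine this with $\sum_{v\in X} d_X(v)^2 \leq (\max_v d_X(v))\cdot\sum_v d_X(v) \leq (|X|-1)(6|X|-12) < 6|X|^2$, while $63\Phi_v(X)^2 \geq 63|X|^2$, giving $\Phi_s \geq 57|X|^2 \geq 0$ with substantial slack — slack that is exactly what makes the choice of constant $63$ work in the decrease analysis.

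The bulk of the argument is the per-operation decrease, which I would prove by exhaustive case analysis indexed by operation type (insertion or contraction) and by the partition location ($Y$, $M$, or $K$) of each endpoint, with contractions additionally split by the new location of the merged vertex $x$. In each case I track the change of the three terms. Briefly, $6\Phi_v(V)$ drops by $6$ on every insertion and changes by $-18+6p$ on a contraction that merges $p$ parallel-edge pairs; $3\Phi_q(Y\cup M)$ can rise by $O(1)$ on insertions and typically falls on contractions (by $3$ per merged common-neighbor lying in $Y\cup M$); and the workhorse $\Phi_s(M\cup K)$ decreases whenever a new edge appears in $G_{M\cup K}$ — via an insertion with both endpoints in $M\cup K$, a $Y\to M$ transition (triggered when some $d_K$ reaches $4$, so $d_K(w)$ edges join $G_{M\cup K}$), or a contraction touching $M\cup K$. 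In all these sub-cases $\Phi_v(M\cup K)$ drops by at least $1$, and the quadratic factor $63\Phi_v(M\cup K)^2$ supplies a decrease of order $|M\cup K|$, which absorbs up to $O(|K|)$ candidate paths. The constant $63$ is calibrated so that this drop always dominates the simultaneous increase in $\sum_{v\in M\cup K} d_{M\cup K}(v)^2$ and any rise in $3\Phi_q$.

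The principal obstacle will be the contraction case where both $u$ and $w$ have many $K$-neighbors, producing $\Theta(d_K(u)\cdot d_K(w))$ candidate paths — possibly $\Theta(|K|^2)$. Here the linear-in-$|K|$ drop from $63\Phi_v(M\cup K)^2$ alone is insufficient, and the required quadratic savings must come from the $-\sum d_{M\cup K}(v)^2$ part of $\Phi_s$: merging $u,w$ into $x$ replaces $d_{M\cup K}(u)^2 + d_{M\cup K}(w)^2$ by $d_{M\cup K}(x)^2 = (d_{M\cup K}(u)+d_{M\cup K}(w)-2-p')^2$, whose expanded cross-term $2\,d_{M\cup K}(u)\,d_{M\cup K}(w)$ supplies the needed $\Omega(d_K(u)\,d_K(w))$ decrease in $\Phi_s$. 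The delicate part is sign-tracking: verifying across the full case list that negative corrections (from parallel-edge mergers via $p'$, from dropped degrees of common neighbors, and from $M\leftrightarrow Y$ transitions induced by the operation) never erase this cross-term, so that the combined change in $6\Phi_v(V)+3\Phi_q(Y\cup M)+\Phi_s(M\cup K)$ pays for the candidate count in every case.
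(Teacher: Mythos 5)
Your plan follows the same route as the paper: the $O(n)$ initialization via planarity and Lemma~\ref{lem:QRS-partition}, nonnegativity from Euler-type edge counts making $63\Phi_v^2$ dominate the sum of squared degrees, and a per-operation case analysis in which the hard contraction case is paid by the cross-term $2\,d_{M\cup K}(u)\,d_{M\cup K}(w)$ that appears when the merged degree is squared (and enters $\Phi_s$ with a minus sign) --- this is exactly the paper's $(M,M)$-merge computation. However, one step in your nonnegativity argument is wrong as stated: the bound $\max_v d_X(v)\le |X|-1$ fails here, because $d_X$ is the degree in the subgraph \emph{quasi-induced} by $X$ (Definition~\ref{def:quasisimple}), which is a quasi-simple \emph{multigraph}; parallel edges survive as long as they are not consecutive around both endpoints (and the vertex-face graphs this structure is run on are generally non-simple). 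A quasi-simple planar graph can have a vertex of degree exceeding $|X|$ (e.g.\ a hub joined to one vertex by $m$ parallel edges, with a pendant neighbour of the hub in each intermediate face, has hub degree $2m$ on $m+2$ vertices), so your inequality $\sum_v d_X(v)^2<6|X|^2$ and the conclusion $\Phi_s\ge 57|X|^2$ are not justified. The repair is the paper's cruder estimate $\sum_v d_X(v)^2\le\bigl(\sum_v d_X(v)\bigr)^2\le(6|X|-12)^2$ combined with $\Phi_v(X)\ge|X|$ (from Observation~\ref{obs:factor3}), which still leaves $63-36>0$; alternatively bound the maximum degree by $|E(G_X)|\le 3|X|-6$.

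The larger issue is that the decrease claim --- the substance of the lemma --- is only asserted. You correctly identify that the cross-term must survive the corrections, but that survival is precisely what has to be computed: in the $(M,M)$ case the positive terms $2(\eta+2)(a+b)+\sum_i 2c_i$ (from the contracted edge, the $\eta\le 2$ edges removed by quasi-simplification, and the common neighbours losing degree) can only be bounded via $a,b,c_i\le 6\Phi_v(M\cup K)-12$, and the constant $63$ wins by a thin margin (for $\eta=2$ one needs $126\Phi_v$ against $120\Phi_v$), so ``calibrated so that it works'' is the statement to be proved, not a remark. Moreover, your sketch accounts only for new length-$2$ paths having the merged vertex as a \emph{middle} vertex; when the merged vertex lands in $K$ (the $(K,Y)$, $(K,M)$ and $(K,K)$ cases) new paths arise with it as an \emph{endpoint}, whose middle vertices are arbitrary neighbours of the absorbed vertex, and these are paid by a different mechanism --- $\Delta\Phi_q(Y\cup M)\le -d_V(v)$ when a $Y$- or $M$-vertex is merged into $K$, and the residual $-6\Phi_v$ slack (or a $2(|K|-1)$ bound) in the $(K,M)$ and $(K,K)$ cases. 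Your outline would need these cases spelled out before it constitutes a proof.
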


Before we proceed with the proof, we give a simple observation concerning quasi-simple graphs, which follows from Euler's formula and the fact that quasi-simple graphs of at least $3$ vertices have faces of degree at least $3$.
\begin{observation}\label{obs:factor3}
	In a quasi-simple planar graph with $n\ge 3$ vertices, the number of edges is at most $3n-6$.
\end{observation}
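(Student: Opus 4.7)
My plan is to run the standard Euler-formula counting argument, adapted to the quasi-simple setting: work one connected component at a time, prove the $3n_i-6$ bound on each large component, and then assemble the pieces.

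The first step is to verify the fact flagged in the hint, that every face of a connected quasi-simple planar graph with at least $3$ vertices has degree at least $3$. If the component is non-simple, this is immediate from the definition of quasi-simple: faces of the primal correspond to vertices of the dual, and the dual is required to have minimum degree $\ge 3$. If the component is simple with $\ge 3$ vertices, then a face of degree $1$ would be bounded by a self-loop and a face of degree $2$ by two parallel edges, both excluded by simplicity.

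Next, for each component $G_i$ with $n_i \ge 3$ vertices, $m_i$ edges and $f_i$ faces, Euler's formula gives $n_i - m_i + f_i = 2$, while summing face degrees double-counts edges and yields $3 f_i \le 2 m_i$; substituting $f_i \le 2m_i/3$ into Euler produces the per-component bound $m_i \le 3 n_i - 6$. For small components, direct inspection suffices: a $1$-vertex component has no edges (a self-loop would make its dual have minimum degree $1$, violating quasi-simpleness), and a $2$-vertex component must be simple (two parallel edges would give dual minimum degree $2$), so it has at most one edge.

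Finally, I sum over components with $n = \sum n_i \ge 3$ and check $\sum m_i \le 3n - 6$ by a brief case split on whether any component has $n_i \ge 3$. If one does, that component alone contributes the $-6$ slack needed to absorb the small components. If not, every component has $n_i \le 2$; writing $b$ and $c$ for the numbers of $2$- and $1$-vertex components, we have $n = 2b + c \ge 3$ and $\sum m_i \le b$, and the three subcases $b \ge 2$, $b = 1$ (which forces $c \ge 1$), and $b = 0$ (which forces $c \ge 3$) each yield $b \le 3n - 6$. The only point that requires care is this small-component bookkeeping, since the naive per-component bound $m_i \le 3 n_i - 6$ fails for $n_i \le 2$; the face-degree fact and the Euler calculation themselves are entirely routine.
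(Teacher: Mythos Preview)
Your proof is correct and follows the same route the paper indicates: Euler's formula together with the fact that every face of a quasi-simple planar graph on at least three vertices has degree at least $3$. The paper gives no further details beyond that one-line justification, so your explicit per-component argument and the small-component bookkeeping (handling $n_i\le 2$ components and summing) actually fill in a gap the paper leaves implicit, since the quasi-induced subgraphs $G_X$ to which the observation is later applied need not be connected.
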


\begin{proof}[Proof of Lemma~\ref{lem:potential}]
To see the first statement, note that 
\begin{align*}
1\leq\abs{K}\leq\abs{M\cup K}\leq\Phi_v(M\cup K)\leq 4\abs{M\cup K}\leq4(2\abs{K}-2)=8(\abs{K}-1)
\end{align*}
where the inequality $\abs{M\cup K} \leq \Phi_v (M\cup K)$, stems from $\Phi_v(M\cup K)-\abs{M\cup K}=3\abs{V(G_{M\cup K})}-\abs{E(G_{M\cup K})}\geq0$, which is true by Observation~\ref{obs:factor3}. The last inequality is because there cannot be more than $|K|-2$ vertices in $M$ by Lemma~\ref{lem:QRS-partition}.
By a similar argument, we see that $\Phi_v(V)\geq 0$, and, being a sum of nonnegative terms, so is $\Phi_q(Y\cup M)\geq0$.

We may thus realize that $\Phi$ is always positive:
\begin{align*}
\Phi\geq \Phi_s(M\cup K) &\geq
63(\Phi_v(M\cup K))^2-\sum_{v\in M\cup K}(d_{M\cup K}(v))^2
\\
&\geq
63\abs{M\cup K}^2-\left(\sum_{v\in M\cup K}d_{M\cup K}(v)\right)^2
\\
&\geq
63\abs{M\cup K}^2-(6\abs{M\cup K}-12)^2
\\
&=
27\abs{M\cup K}^2+144(\abs{M\cup K}-1)
\\
&\geq 0
\end{align*}

Furthermore, note that $\Phi$ is initiated at $O(n)$. $\Phi_q (Y\cup M) \leq E$ which is $O(n)$ as the graph is planar. $\Phi_v(M\cup K) \leq 8 |K| = O(\sqrt n)$, and thus $(\Phi_v(M\cup K))^2 = O(n)$.

Before continuing with the second statement in the theorem, we consider how the different terms of $\Phi$ behave during changes.

Define $\Delta\Phi$, $\Delta\Phi_q$, $\Delta\Phi_v$ and $\Delta\Phi_s$ as the increases of the respective
values $\Phi$, $\Phi_q$, $\Phi_v$, $\Phi_s$ resulting from a change.

First we observe, that $\Phi_q (Y\cup M)$ has the following properties when contractions occur:
\begin{enumerate}
		\item $\Delta \Phi_q (Y\cup M) \leq -2$ when a pair of vertices in $Y\cup M$ 
		are contracted.\label{itm:qqmerge}
		\item $\Delta \Phi_q (Y\cup M) \leq -d_V(v)$ when a vertex $v\in Y\cup M$ is
		contracted with a vertex in $K$. \label{itm:qsmerge}
		\item $\Delta \Phi_q (Y\cup M)\leq 0$ when a pair of vertices in $K$ are
		contracted.\label{itm:ssmerge}
\end{enumerate}

Furthermore, we observe that for any $X\subseteq V$, $\Phi_v(X)$ has the following properties when changes occur:
\begin{enumerate}
		\item $\Delta \Phi_v(X) \leq 0$ when a vertex of degree $\geq 4$ is added to $X$. \label{itm:addr}
		\item $-4 \leq \Delta \Phi_v (X) \leq -1 $ when a vertex of degree $\leq 3$ is deleted from $X$.\label{itm:deleter}
		\item $\Delta \Phi_v(X) = -1 $ when an edge is added to $G_{X}$.\label{itm:addedge}
		\item $-3 \leq \Delta \Phi_v (X) \leq -1$ when contracting any edge and reducing to a quasi-simple graph.
		($\Delta\Phi_v (X) =-(3-\eta)$, where $0\leq \eta\leq 2$ is the number of additional edges deleted).\label{itm:contractandreduce}
\end{enumerate}
All the statements have similar proofs, so take for instance statement 4. Here, we decrease the first term by $4$ but increase the second by $\eta+1$, and thus, the resulting change is between $-3$ and $-1$. %

When an edge $uv$ is inserted, it can only create new length-$2$ paths between vertices of $K$ if at least one of its ends is in $K$. Suppose without loss of generality that $u\in K$. Then we have the following cases for where $v$ is before the insertion:
\begin{description}
\item[$v\in Y$:] Then $v$ had at most $3$ neighbors in $K$ before $uv$ was added, and thus at most $3$ candidate paths need to be checked. In this case $\Phi_v(V)$ drops by one, $\Phi_q(Y\cup M)$ increases by one, and $\Phi_s(M\cup K)$ is unchanged. Thus $\Phi$ drops by $3$.
\item[$v\in M$:] In this case there are $d_{K}(v)$ new candidate paths, $\Phi_v(V)$ drops by one and $\Phi_q(Y\cup M)$ increases by one just like before.  However, now $\Phi_v(M\cup K)$ drops by one, so $\Phi_v(M\cup K)^2$ drops by $2\Phi_v(M\cup K)-1$ and so $\Phi_s(M\cup K)$ drops by $63(2\Phi_v(M\cup K)-1)-(2d_{M\cup K}(u)-1)-(2d_{M\cup K}(v)-1)\geq 63(2\abs{M\cup K}-1)-(2(6\abs{M\cup K}-12)-1)-(2(6\abs{M\cup K}-12)-1)\geq 102\abs{M\cup K}+225$, which is much larger than $d_K(v)$.
\item[$v\in K$] In this case there are $d_K(u)+d_K(v)$ new candidate paths, $\Phi_v(V)$ drops by one and $\Phi_q(Y\cup M)$ is unchanged.  However, as in the previous case $\Phi_v(M\cup K)$ drops by one so $\Phi_s(M\cup K)$ drops by at least $102\abs{M\cup K}+225$, which is much larger than $d_K(u)+d_K(v)$.
\end{description}

Finally we consider the case where an edge $uv$ is contracted.
	To check the lemma,
	one simply has to check the different combinations
	of which partition the two elements and the result belong to:
	
	\begin{description}
		\item[$(Y,Y)$ merge to $Y$:] $\Phi_q(Y\cup M)$ drops by at least $2$, and
		the other terms in the potential are unchanged, so $\Phi$ drops
		by at least $6$.  Since the result $v$ is in $Y$, there are at
		most $3$ paths of length $2$ in $G_{K\cup\set{v}}$ with $v$ as a middle vertex, and at most
		$2$ of them are new.
		
		\item[$(Y,Y)$ merge to $M$:] The product of the degrees, and therefore the number of candidate paths is at most
		$9$. $\Delta\Phi_v(M\cup K)\leq0$ and $\Delta\Phi_q(Y\cup M)\leq-2$, and
		the term $\sum_{v\in M\cup K}(d_{M\cup K}(v))^2$ drops by
		the sum of degrees squared.
		Thus, $\Delta\Phi\leq -2+0-8<-9$,
		and we are done.
		
		\item[$(M,Y)$-merge] Suppose $u\in M$ and $v\in Y$.  Let $w$
		be the node that $u,v$ are contracted to, it will be an $M$-node.
		
                We call an edge $(v,k)$ important if it participates in a new length $2$ path between different vertices of $K$.
		Each important edge incident to $v\in Y$ will become part of the
		graph quasi-induced by $M\cup K$, and therefore cause a drop in
		$\Phi_v(M\cup K)$. This drop is enough to pay for all new paths containing
		that edge.
		
		\item[$(K,Y)$-merge] Suppose $u\in K$ and $v\in Y$.  Let $w$ be the
		node that $u,v$ are contracted to, it will be an $K$-node.
		
		There are two types of new length-two $(K,K)$ paths that arise:
		Paths having $w$ as the middle vertex, and paths having $w$ as an
		end vertex.
		
		The paths with $w$ as a middle vertex are accounted for just like
		the previous case.
		
		Each new path having $w$ as an end vertex must have a neighbor of
		$v$ as middle vertex.  There are (less than) $d_V(v)$ of these
		neighbors. Since $\Delta\Phi_q(Y\cup M)\leq-d_V(v)$ we can afford to look
		at each of them, and pay for at most $2$ new paths for each.
		
		Now consider a neighbor $m$ of $v$ that is middle vertex of some
		new path.  If $m\in Y$ (after the contraction), then there is at
		most $2$ new paths involving $m$, and the drop in $\Phi_q(Y\cup M)$ pays
		for them.  If $m\in M\cup K$, then $d_{M\cup K}(m)$ has increased, and
		$\Phi_s(M\cup K)$ drops appropriately.
		
		\item[$(M,M)$-merge] Suppose $u,v\in M$ are merged to the new vertex
		$w$. Note that $w\in M$. Let $X=\set{x_1,\ldots,x_k}$ be the set
		of common neighbors of $u,v$ in $G_{M\cup K}$ that lose an edge
		when quasi-simplifying after the contraction, and note that $0\leq
		\eta\leq 2$. Let $\Phi_v=\Phi_v(M\cup K)$, then
		\begin{align*}
		\Delta(\Phi_v^2)
		=
		(\Phi_v+\Delta\Phi_v)^2-\Phi_v^2
		=
		(\Phi_v-(3-\eta))^2-\Phi_v^2
		=
		(3-\eta)^2-2(3-\eta)\Phi_v
		\end{align*}
		Let $a=d_{M\cup K}(u)$, $b=d_{M\cup K}(v)$, and for
		$i\in\set{1,\ldots,\eta}$ let $c_i=d_{M\cup K}(x_i)$.  Then
		\begin{align*}
                  a,b,c_i
                  \leq a+b+\sum_{i=1}^\eta c_i
                  \leq \sum_{y\in M\cup K}d_{M\cup K}(y)
                  \leq 6\abs{M\cup K}-12
                  \leq 6\Phi_v-12
		\end{align*}
                And finally
		\begin{align*}
		\Delta\Phi_s(M\cup K)
		&\leq
		63\Delta(\Phi_v^2)
		\\ &\qquad
		-
		\left(
		\left((a+b-(\eta+2))^2+\sum_{i=1}^{\eta}(c_i-1)^2\right)
		-
		\left(a^2+b^2+\sum_{i=1}^{\eta}c_i^2\right)
		\right)
		\\
		&=
		63((3-\eta)^2-2(3-\eta)\Phi_v)
		\\ &\qquad
		-
		\left(2ab + (\eta+2)^2 - 2(\eta+2)a - 2(\eta+2)b - \sum_{i=1}^\eta(2c_i-1)\right)
		\\
		&=
		63((3-\eta)^2-2(3-\eta)\Phi_v)
		\\ &\qquad
		-
		2ab
		-
		((\eta+2)^2+\eta)
		+
		\left(2(\eta+2)a + 2(\eta+2)b + \sum_{i=1}^{\eta} 2c_i\right)
		\\
		&\leq
		63((3-\eta)^2-2(3-\eta)\Phi_v)
		\\ &\qquad
		-
		2ab
		-
		((\eta+2)^2+\eta)
		+
		\left(2(\eta+2) + 2(\eta+2) + \sum_{i=1}^{\eta}2\right)(6\Phi_v-12)
		\\
		&=
		63((3-\eta)^2-2(3-\eta)\Phi_v)
		\\ &\qquad
		-
		2ab
		-
		((\eta+2)^2+\eta)
		+
		(6\eta +8)(6\Phi_v-12)
		\\
		&=
		- 2ab +
		\begin{cases}
		467-330\Phi_v &\text{if }\eta=0 \\
		74-168\Phi_v  &\text{if }\eta=1 \\
		-195-6\Phi_v  &\text{if }\eta=2
		\end{cases}
		\\
		&\leq
		- 2ab - 6\Phi_v \qquad\text{for }\Phi_v\geq2
		\stepcounter{equation}\tag{\theequation}\label{eq:PhiDrop-RR}
		\end{align*}
		In particular, $\Delta\Phi \leq -ab = -d_{M\cup K}(u)\cdot d_{M\cup K}(v) \leq - d_S(u)\cdot d_S(v)$, as desired.
		\item[$(K,M)$-merge:] Suppose $u\in K$ and $v\in M$.
		Let $w$ be the node that $u,v$ are contracted to, it will be an
		$K$-node.
		
		There are two types of new length-two $(K,K)$ paths that arise:
		Paths having $w$ as the middle vertex, and paths having $w$ as an
		end vertex.
		
		The paths with $w$ as a middle vertex are accounted for just like
		the previous case.
		
		Each new path having $w$ as an end vertex must have a neighbor of
		$v$ as middle vertex.  There are (less than) $d_V(v)$ of these
		neighbors. Since $\Delta\Phi_q(Y\cup M)\leq-d_V(v)$ we can afford to look
		at each of them, and pay for at most $2$ new paths for each.
		
		Now consider a neighbor $m$ of $v$ that is middle vertex of some
		new path.  If $m\in Y$ (after the contraction), then there is at
		most $2$ new paths involving $m$, and the drop in $\Phi_q$ pays
		for them.
		
		Let $M$ be the set of neighbors of $v$ in $M\cup K$ that is middle
		of some new path. The number of such paths is at most
		$\abs{E(G_{M\cup K})}\leq3\abs{M\cup K}-6\leq 3\Phi_v(M\cup K)-6$,
		since each must contain a unique edge from $G_{M\cup K}$.
		And the $-6\Phi_v (M\cup K)$ pays for them.
		
		\item[$(K,K)$-merge:] Suppose $u,v\in K$.  Let $w$ be the node that
		$u,v$ are contracted to, it will be an $K$-node.
		
		There are two types of new length-two $(K,K)$ paths that arise:
		Paths having $w$ as the middle vertex, and paths having $w$ as an
		end vertex.
		
		The paths with $w$ as a middle vertex are accounted for just like
		the previous two cases.
		
		Each new path having $w$ as an end vertex was already an $(K,K)$
		path with either $u$ or $v$ before the merge.  For each of $u,v$
		there is at most $\abs{K}-1$ such pairs that (may) need to be
		updated, so the total cost of updating these is less than
		$2(\abs{K}-1)\leq 2\Phi_v$.  The $-6\Phi_v$ can pay for
		these updates.\qedhere
	\end{description}
\end{proof}

As a result, Lemma~\ref{lem:ineachnode} follows, and we are finally ready to prove Theorem~\ref{thm:4cycledetection}.
\begin{proof}[Proof of Theorem~\ref{thm:4cycledetection}]
  Given a planar graph $G$ with bounded face-degree, we build a face-preserving separator tree as in Lemma~\ref{lem:septree} in $O(n\log n)$ time.
  For each internal vertex of the tree, we may detect new $4$-cycles crossing the separator due to Lemma~\ref{lem:ineachnode}. 
  The leaves have size at most $n_0 = O(1)$, and we can detect $4$-cycles in the leaves in $O(1)$ time.
	
	We can distinguish between face $4$-cycles and separating $4$-cycles. Namely, we can choose to report only when an edge first lies on any $4$-cycle, or when it first lies on a separating one, as described in Lemma~\ref{lem:ineachnode}.

        An edge insertion in the graph $H_x$ needs to be duplicated in each child of $x$ that contains both vertices. However, the drop in the $\Phi$ potentials for each node is large enough to pay for each cascading insertion.
	Whenever a contraction in the graph $H_x$ for a node $x$ of the separator tree introduces a new edge between two separator vertices, that edge may need to be added to the subtree containing the other side of the separation, but again that is paid for. In general, if we update graphs closer to the root first, the changes only propagate down and every change is paid for by a corresponding drop in the potential.
\end{proof}

\section{Omitted proofs from Section~\ref{sec:SPQR}}\label{sec:SPQR-proofs}

\fvCycleEdgeclasses*
\begin{proof}
	Let $C$ be the $4$-cycle. Consider a path $L$ in $G$ containing edges $e_1$ and $e_2$.  Consider the set of faces $F$ in $\fv{G}$ that are incident to a vertex on $L$.  $L$ does not cross $v_1,v_2$ if and only if $F$ is completely contained on one side of $C$, which happens if and only if $e_1$ and $e_2$ are not separated by $C$. An identical argument can be made about $f_1,f_2$ in $\dual{G}$
\end{proof}

\sepone*
\begin{proof}
    If $C$ is a separating cycle, there is at least two faces on either side. 
By Lemma~\ref{lem:fvCycleEdgeclasses} there are thus at least two different separation classes with respect to $v_1,v_2$ (or $f_1,f_2$). If there exactly $2$ separation classes, each consists of at least two edges. If there are exactly $3$ classes, at least one of them consists of at least two edges.
\end{proof}

\classescircular*
\begin{proof}
    The proof is by contradiction.
    Assume that the circular order of some $4$ edges incident to $u$ is $e_1, e_2, e_3, e_4$.
    Moreover, assume that only $e_1$ and $e_3$ belong to the same separation class.
    From Definition~\ref{def:separationpair} there is a path that begins with $e_1$ and ends with $e_3$ that does not contain $u$ or $w$ as its internal point.
    Thus, this path is a cycle $C$ that does not go through $w$.
    Hence, every path from either $e_2$ or $e_4$, that ends in $w$ and does not contain $u$ as an internal point, has to go through $C$.
    This contradicts the fact that $e_2$ and $e_4$ are in different separation classes than $e_1$ and $e_3$.
 Clearly,  the same argument applies to $w$.
\end{proof}

\begin{restatable}{lemma}{delimitF}\label{lem:delimitF}
    Let $G$ be a loopless biconnected plane graph.
    Let $F$ be a subset of edges of $G$, such that $F$ is a separation class for some pair $u, w$ of vertices of $G$.
    Then there exists a $4$-cycle (possibly non-separating) in $\fv{G}$ that separates the set of faces that correspond to $F$ from all other faces.
\end{restatable}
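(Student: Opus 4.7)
The plan is to locate the required 4-cycle as the combinatorial boundary of the region formed by the faces of $\fv{G}$ corresponding to edges in $F$. By Fact~\ref{fact:fvEdges}, edges of $G$ biject with faces of $\fv{G}$; let $F^{*}$ be the faces corresponding to $F$. An edge of $\fv{G}$ lies between a face in $F^{*}$ and a face outside $F^{*}$ exactly when its associated corner $(e_1,e_2)$ of $G$ satisfies $e_1 \in F$ and $e_2 \notin F$ (or vice versa). I will count such ``boundary'' edges and show that they form a 4-cycle.

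First, I would bound the boundary corners using Lemma~\ref{lem:classescircular}: the edges of $F$ incident to $u$ are consecutive in the rotation around $u$, producing exactly two boundary corners at $u$, and symmetrically two at $w$. For any vertex $v \notin \{u,w\}$, any two edges $e_1,e_2$ incident to $v$ lie on the length-$2$ path $e_1 e_2$ whose unique internal vertex is $v \neq u,w$; by Definition~\ref{def:separationpair} they belong to the same separation class, so no corner at $v$ can be a boundary corner. Hence the boundary consists of exactly four edges of $\fv{G}$: two incident to $u$ and two incident to $w$.

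To conclude, I would argue that these four edges form a 4-cycle in $\fv{G}$. Since $G$ is loopless and biconnected, Fact~\ref{fact:fv2conn} guarantees $\fv{G}$ is simple, so the two boundary edges at $u$ end at distinct faces $f_1,f_2$ of $G$, and similarly those at $w$ end at distinct faces $g_1,g_2$. Rotating around any vertex of $\fv{G}$, the incident $\fv{G}$-faces alternate between $F^{*}$ and its complement exactly at boundary edges, so every vertex has even boundary-degree. The four boundary edges contribute total degree $8$, of which $u$ and $w$ absorb $4$, leaving total face-degree $4$. If $\{f_1,f_2\} \neq \{g_1,g_2\}$, at least one of these four faces would have odd degree, a contradiction; hence $\{f_1,f_2\} = \{g_1,g_2\}$, and since $\fv{G}$ is bipartite, the four edges form the 4-cycle $(u, f_1, w, f_2)$, which separates $F^{*}$ from its complement as required.

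The main obstacle should be this last parity-and-simplicity step: one must rule out the degenerate configuration in which the boundary splits into two parallel-edge 2-cycles (one at $u$, one at $w$), and here the simplicity of $\fv{G}$ given by Fact~\ref{fact:fv2conn} is essential, together with the fact that distinct corners at a biconnected, loopless vertex correspond to distinct incident faces.
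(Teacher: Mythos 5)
Your proof is correct, and it takes a genuinely different route from the paper's. The paper's argument uses Lemma~\ref{lem:classescircular} to pick the two transition corners at $u$, giving faces $f_1\neq f_2$, and then shows directly that each of $f_1,f_2$ has both $u$ and $w$ on its face cycle: since $G$ is biconnected the face cycle is simple, and a simple closed walk containing edges of two separation classes must switch classes at least twice, hence must visit both $u$ and $w$; this immediately produces the $4$-cycle $(u,f_1,w,f_2)$, whose separation property is then asserted by construction. You instead work entirely inside $\fv{G}$: you identify the boundary edges of the face set $F^{*}$, argue there are exactly two at $u$, two at $w$, and none elsewhere, and then combine the evenness of boundary degrees (the boundary of a union of faces lies in the cycle space) with simplicity of $\fv{G}$ (Fact~\ref{fact:fv2conn}) to force these four edges into a single $4$-cycle; the separation claim then comes for free, since that cycle is the \emph{entire} boundary of $F^{*}$. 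Your parity step replaces the paper's appeal to simple face cycles as the reason why the two faces seen from $u$ coincide with the two seen from $w$, and it arguably gives a cleaner justification of the separation property than the paper's ``by construction.'' One point you should shore up: the claim of \emph{exactly} two boundary corners at $u$ (and at $w$) presupposes that $F$ and its complement each contain an edge incident to $u$ and an edge incident to $w$; Lemma~\ref{lem:classescircular} alone only gives consecutiveness. This is precisely what the paper establishes in its opening paragraph using biconnectivity (a class with edges at only one of $u,w$ would make that vertex an articulation point, and a class touching neither would not be maximal). Alternatively, your own parity argument already rules out the counts $(2,0)$ and $(0,2)$, and $(0,0)$ occurs only in the degenerate one-class case $F=E(G)$, which both your proof and the paper's implicitly set aside.
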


\begin{proof}%
    Throughout the proof by separation class we mean one of separation classes defined by $u$ and $w$.
    Clearly, each separation class has to have an edge incident to $u$ or $w$ (otherwise, since the graph is connected, it would not be maximal).
    In fact, since $G$ is biconnected, each separation class has edges incident both to $u$ and $w$.
    If a separation class had edges incident only to one of the two vertices, this vertex would be an articulation point.
     
    Denote the edges incident to $u$ in circular order by $e_1, \ldots, e_k$.
    For convenience, let $e_{k+1} := e_1$ and $e_{0} := e_k$.
	Moreover, assume that $e_i \in F$ iff $a \leq i \leq b$, where $1 \leq a \leq b \leq k$.
    Note that by Lemma~\ref{lem:classescircular}, $a$ and $b$ are well-defined (for some way of breaking the circular ordering into a sequence $e_1, \ldots, e_k$).

    Let $f_1$ be the face that comes in the circular order between $e_{a-1}$ and $e_a$ and $f_2$ be the face that comes between $e_b$ and $e_{b+1}$.
    Note that $f_1 \neq f_2$.
    
    We now show that there is a $4$-cycle in $\fv{G}$ that contains $u$, $w$, $f_1$ and $f_2$.
    To that end, we prove that both  $u$ and $w$ lie on $f_1$.
    Indeed, the cycle bounding $f_1$ is simple and contains edges from two separation classes.
    Thus, it has to contain both $u$ and $w$.
    Similarly, both $u$ and $w$ lie on $f_2$.
    
    This implies that $\fv{G}$ contains a $4$-cycle $C_F$ going through $u$, $w$, $f_1$ and $f_2$, and by construction, $C_F$ separates the faces corresponding to $F$ from all other edges.
\end{proof}

\begin{restatable}{lemma}{seppair}\label{lem:seppair}
  Let $G$ be loopless biconnected graph.
  If $v_1,v_2$ is a separation pair in $G$, then there exists a separation pair $f_1,f_2$ in $\dual{G}$ such that $(v_1,f_1,v_2,f_2)$ is a separating cycle in $\fv{G}$.
\end{restatable}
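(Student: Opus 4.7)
The plan is to construct an explicit separating $4$-cycle $(v_1,f_1,v_2,f_2)$ in $\fv{G}$ and then invoke Lemma~\ref{lem:sep1} to conclude that $\{f_1,f_2\}$ is a separation pair of $\dual{G}$. The main obstacle will be choosing $f_1,f_2$ so that the cycle is genuinely separating in $\fv{G}$ rather than merely bounding a face of it.

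First I would invoke Definition~\ref{def:separationpair} to list the separation classes $E_1,\ldots,E_k$ of $G$ with respect to $\{v_1,v_2\}$, knowing that $k\geq 2$ and that the excluded configurations (i) and (ii) do not occur. I then split the argument into two cases based on whether any single separation class has at least two edges.

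In the main case, some class $E_i$ satisfies $\abs{E_i}\geq 2$. I apply Lemma~\ref{lem:delimitF} with $F=E_i$ to obtain faces $f_1,f_2$ such that $(v_1,f_1,v_2,f_2)$ is a $4$-cycle in $\fv{G}$ separating the faces of $\fv{G}$ corresponding to edges of $E_i$ from those corresponding to edges of $E\setminus E_i$. We also have $\abs{E\setminus E_i}\geq 2$ (by exclusion (i) when $k=2$, and trivially when $k\geq 3$), and since by Fact~\ref{fact:fvEdges} edges of $G$ are in bijection with faces of $\fv{G}$, both sides of the cycle contain at least two faces of $\fv{G}$; this forces the $4$-cycle to be separating rather than face-bounding.

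The remaining case is when every $E_i$ is a singleton. Exclusions (i) and (ii) then force $k\geq 4$; furthermore, since a singleton separation class in a biconnected graph must consist of a single edge directly between $v_1$ and $v_2$ (otherwise the non-$\{v_1,v_2\}$ endpoint would share a length-two path avoiding $\{v_1,v_2\}$ with another incident edge, merging their classes), $G$ is just $k$ parallel $v_1v_2$-edges. In this subcase $\fv{G}$ is the planar $K_{2,k}$ with the $k$ face-vertices cyclically arranged around $v_1$ and $v_2$, and I pick any two non-adjacent face-vertices $f_1,f_2$; the resulting $4$-cycle $(v_1,f_1,v_2,f_2)$ has at least one face-vertex of $\fv{G}$ on each side, hence is separating. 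In either case, Lemma~\ref{lem:sep1} applied to the separating cycle delivers the desired separation pair $\{f_1,f_2\}$ of $\dual{G}$. The delicate point is precisely this all-singleton subcase, since Lemma~\ref{lem:delimitF} applied to a singleton class only produces a face-bounding $4$-cycle; the direct $K_{2,k}$ construction is needed to sidestep this.
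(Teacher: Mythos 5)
Your proof is correct and follows essentially the same route as the paper: apply Lemma~\ref{lem:delimitF} to a separation class $F$ with at least two edges (noting, via the exclusions in Definition~\ref{def:separationpair}, that $E(G)\setminus F$ also has at least two edges, so the delimiting $4$-cycle cannot bound a face of $\fv{G}$) and conclude with Lemma~\ref{lem:sep1}, handling the all-singleton case---where $G$ is two vertices joined by at least four parallel edges---separately. The paper dismisses that degenerate case as trivial while you spell it out; the only nitpick is that ``non-adjacent'' face-vertices should read ``non-consecutive in the cyclic order around $v_1,v_2$,'' which is what your argument actually uses.
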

\begin{proof}
	If every separation class with respect to $v_1,v_2$ consists of a single edge, then $G$ consists of two vertices connected by multiple edges and the lemma is trivial. It suffices to use the fact that since $v_1,v_2$ is a separation pair, there are at least $4$ edges in $G$.
	Otherwise there is a separation class $F$ with at least two edges, such there are at least two edges in $E(G)\setminus F$.  Now apply Lemma~\ref{lem:delimitF}, to get a delimiting cycle $C$ in $\fv{G}$ that separates faces corresponding to $F$ from all other faces.
Denote the vertices of $C$ by $v_1,f_1,v_2,f_2$.
	Since both $F$ and $E(G)\setminus F$ are nontrivial (both contain more than one edge), $C$ is a separating cycle in $\fv{G}$. It then follows from Lemma~\ref{lem:sep1} that $f_1,f_2$ form a separation pair in $\dual{G}$.
\end{proof}

\DeleteContractSeq*
\begin{proof}
For each neighbor $y$ of $x$ in the SPQR-tree $T$ we proceed as follows.
Let $a,b$ be the separation pair corresponding to the edge in $T$ between $x$ and $y$.
Consider all nodes reachable from $y$ in $T$ with a path that does not contain $x$.
Let $D$ be the set of non-virtual edges in all these nodes. While there is an edge $e$ in $D$ that is not a self-loop and not an edge between $a$ and $b$, contract it. Then if there are any self-loops delete them. When all edges in $D$ go between $a$ and $b$, delete edges until there is only one left.
\end{proof}

\spqrpath*

\begin{proof}
Let us first prove that $H$ is indeed a path.
Since $G-e$ is biconnected, there exist two internally vertex-disjoint paths between $u$ and $w$.
  No separation pair in $G - e$ can have both vertices on the same of these paths, since otherwise it would be a separation pair in $G$.
Moreover, observe that each separation pair defines at most two separation classes that consist of more than one edge (otherwise, it is also a separation pair in $G$).
Thus, we can split $G-e$ into two subgraphs by using an arbitrary separation pair in $G-e$.
By repeating the same reasoning on both subgraphs, we get that $H$ is a path.
Observe that $u$ and $w$ belong to the nodes at the opposite ends of $H$.

Note that since we know the edges belonging to $4$-cycles in $\fv{(G-e)}$, by using Lemma~\ref{lem:fvCycleEdgeclasses} we also know all separation pairs in $G-e$.
We now describe how to compute all components of $H$ (i.e. the skeleton graphs stored in the nodes) except the largest one.
Consider an algorithm that starts from one end of $H$ and discovers the components one by one, each in linear time.
Observe that if each component of $H$ has size at most $|E(G)|/2$, we can afford to detect all components of $H$, without affecting the total running time.
However, to prepare for the opposite case, we need to do the search in parallel, starting from both ends of $H$.
Let $y$ be the largest component of $H$.
Observe that only one search can start exploring edges of $y$.
As soon as the other search reaches $y$, we have discovered all separation pairs (that is why we need to know all separation pairs upfront), and both searches can stop.
Thus, one of the searches only uses time that is at most the total size of all non-largest components of $H$.
Since the other search runs in parallel, it runs in the same asymptotic time.

To complete the proof it remains to describe how the search procedure works.
Recall that by Lemma~\ref{lem:classescircular}, for each separation pair $u, w$ of $G - e$, the edges belonging to each separation class come in consecutive order around $u$ and $w$.
Observe that the edges of the $4$-cycles of $\fv{(G-e)}$ correspond to the corners of $G-e$ that lie between edges belonging to distinct separation classes.
Thus, once we mark these corners in $G-e$, we can run a DFS-search that, once started from an edge belonging to a skeleton graph of an $S$- or $R$-node, explores all edges of this graph (and only those).

Observe moreover, that from our earlier analysis it follows that the endpoints $u$ and $w$ of $e$ do not belong to any separation pair.
This implies that both $u$ and $w$ are contained in $S$- or $R$- nodes.
Let us focus on the search starting from $u$.
Note that it discovers the entire component containing $u$ and the separation pairs that separates it from the rest of the SPQR-tree.
If the skeleton graph of this component is a path connecting the vertices of the separation pair, we have found an $S$-node.
Otherwise, we have found an $R$-node.

Now assume we have found some prefix of the SPQR-path that ends at a separation pair $a, b$.
If there is an edge $ab$ (this edge comes next in the circular ordering after the edges we have visited, so it is easy to find), the next node on the SPQR-path is a $P$-node.
After we have processed all edges between $a$ and $b$, we insert a virtual edge between $a$ and $b$ and continue the search starting from this edge in a similar way to the search that has discovered the first node on the SPQR-path.
Clearly, the algorithm runs in linear time.
\end{proof}

We now show that algorithms~\ref{alg:removep}, \ref{alg:remover}, \ref{alg:removes} are correct.
In each proof, the goal is to show that after the procedure the tree $T$ satisfies Definition~\ref{def:SPQR}.

\begin{lemma}\label{lem:removep}
Algorithm~\ref{alg:removep} is correct.
\end{lemma}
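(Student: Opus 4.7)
The plan is to verify that after Algorithm~\ref{alg:removep} terminates, the tree $T$ satisfies every bullet of Definition~\ref{def:SPQR} for the updated biconnected graph $G' = G - e$. I would proceed by case analysis mirroring the branches of the pseudocode, parameterizing on $\sizeof{E(\Gamma(x))}$ after the removal, and, when it equals $2$, on the number of virtual edges remaining in $\Gamma(x)$. Throughout, let $\{a,b\} = V(\Gamma(x))$.

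In the (implicit) first case, where $\Gamma(x)$ still has at least $3$ edges after removing $e$, the node $x$ stays a valid $P$-node and no tree edge of $T$ is touched. Because $e$ was non-virtual, the pair $\{a,b\}$ still has at least two separation classes in $G'$ (one per remaining edge of $\Gamma(x)$, each either a true edge or a nontrivial subgraph represented by a virtual edge), so $T$ is trivially still an SPQR-tree for $G'$.

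For the case $\sizeof{E(\Gamma(x))} = 2$ I would handle the three sub-branches in turn. If $\Gamma(x)$ has no virtual edges, then $x$ is isolated in $T$, meaning $G = \Gamma(x)$ consisted of exactly three parallel edges; after the deletion, $G'$ has only two edges, so by the convention in Theorem~\ref{thm:SPQR} no SPQR-tree needs to be maintained, and deleting $T$ is correct. If $\Gamma(x)$ has exactly one virtual edge $e_y$, the subtree of $T - \{(x,y)\}$ on the $x$-side contains only $x$, whose lone non-virtual edge is the surviving edge $e'$; hence replacing the virtual edge $e_x$ of $\Gamma(y)$ by $e'$ correctly absorbs $e'$ into $\Gamma(y)$. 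The skeleton $\Gamma(y)$ changes only by relabeling one edge from virtual to non-virtual, so $y$'s type (S, P, or R) and the structure of the induced graph are preserved, and the no-two-same-type-neighbors invariant survives because no adjacency in $T$ was altered apart from deleting $x$.

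The main case requiring care is when $\Gamma(x)$ has two virtual edges, to neighbors $y$ and $z$. The key claims are: (i) $\{a,b\}$ remains a separation pair of $G'$ with two nontrivial separation classes, justifying the new tree edge $(y,z)$; and (ii) when $y$ and $z$ are both $S$-nodes, merging them by identifying their virtual edges $ab$ and gluing the two $a$-$b$ paths yields a valid $S$-node skeleton. For (i) I would use the fact that since $x$ is a $P$-node, neither $y$ nor $z$ is a $P$-node, so by Definition~\ref{def:SPQR} each of $\Gamma(y), \Gamma(z)$ contains at least $3$ edges, whence the corresponding separation class of $\{a,b\}$ in $G'$ contains more than one edge. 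For (ii) I would argue that the two $a$-$b$ paths in $\Gamma(y) \setminus \{ab\}$ and $\Gamma(z) \setminus \{ab\}$ are internally vertex-disjoint: any shared internal vertex would contradict $\{a,b\}$ being a separation pair in $G$. Finally, the no-two-$S$- and no-two-$P$-neighbor properties are preserved, since the only newly created adjacencies in $T$ involve $y$ or $z$, and the non-$x$ neighbors of $y$ and $z$ in the original tree were, by the same invariant, not of the same type as $y$ or $z$ respectively.

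The step I expect to be most delicate is (ii): formally showing that the merged skeleton is a single simple cycle requires tracking the virtual-edge correspondences carefully. Everything else amounts to direct verification against Definition~\ref{def:SPQR}.
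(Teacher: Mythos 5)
Your proposal is correct and follows essentially the same route as the paper's proof: a case analysis on the number of edges remaining in $\Gamma(x)$ and, when two remain, on the number of virtual edges, with the only nontrivial step being the merge of two neighboring $S$-nodes into one simple cycle (which the paper simply asserts as easy, while you justify it via disjointness of the separation classes of $\set{a,b}$). The extra detail you supply, including the observation that the neighbors of a $P$-node cannot be $P$-nodes, matches the paper's reasoning.
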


\begin{proof}
If after the edge deletion, $\Gamma(x)$ sill has at least $3$ edges, then clearly $T$ is a valid SPQR-tree.
Otherwise, $\Gamma(x)$ has exactly two edges and we consider three cases.
Recall that the number of virtual edges in a node is equal to the number of the node's neighbors in the SPQR-tree.
If $\Gamma(x)$ has no virtual edges, then $x$ is the only node of $T$, and thus this biconnected component now only has $2$ edges, so we should delete the entire SPQR-tree.
If $\Gamma(x)$ has exactly one virtual edge, $x$ has exactly one neighbor.
In this case, simply $x$ represents one edge of the graph, so it has to be merged with its only neighbor and the virtual edge in the neighbor becomes non-virtual.
If $\Gamma(x)$ has two virtual edges we remove $x$ and the neighbors of $x$ become neighbors.
Note that the neighbors of $x$ cannot be $P$-nodes.
Thus, unless $x$ has two neighboring $S$-nodes, we obtain a valid SPQR-tree.
In the remaining case, it is easy to see that the two $S$-nodes can be merged into one $S$-node.
\end{proof}

\begin{lemma}\label{lem:remover}
Algorithm~\ref{alg:remover} is correct.
\end{lemma}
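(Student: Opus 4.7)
My plan is to split the argument by whether $\Gamma(x)-e$ remains triconnected, and then use Lemma~\ref{lem:spqrpath} to handle the interesting case, closing with an argument that the merging loop restores the S/P-adjacency invariant.

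First, I would dispose of the trivial branch: if $\Gamma(x)-e$ has no separation pair then, because any R-node skeleton is simple and triconnected (hence contains $K_4$ with at least $6$ edges), $\Gamma(x)-e$ is still a simple triconnected graph on the same vertex set. Every condition of Definition~\ref{def:SPQR} for the node $x$ continues to hold; all other nodes of $T$ are untouched, so $T$ remains a valid SPQR-tree of $G-e$. Note also that removing a single edge from a triconnected graph leaves it biconnected, so the biconnected component containing $x$ is preserved.

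Next, assume $\Gamma(x)-e$ has a separation pair. I would apply Lemma~\ref{lem:spqrpath} to $\Gamma(x)$ and $e$: the SPQR-tree $X'$ of the stand-alone graph $\Gamma(x)-e$ is a path, and all of its nodes except $x_{big}$ can be computed from the $4$-cycles of $\fv{(\Gamma(x)-e)}$ in time linear in their total size, so the computation carried out on lines 4--7 is correct; Lemma~\ref{lem:DeleteContractSeq} (when $x_{big}$ is an R-node) or direct construction from the known structure (when $x_{big}$ is an S- or P-node) justifies the final ``remove and contract to obtain $\Gamma(x_{big})$'' step. The crucial structural observation is that every edge of $\Gamma(x)-e$, including every virtual edge of $\Gamma(x)$, appears in exactly one skeleton of $X'$ (since $X'$ is a valid SPQR-tree of $\Gamma(x)-e$). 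Consequently, for each neighbour $y$ of $x$ in $T$, the virtual edge of $\Gamma(x)$ that identifies $y$ lives in a unique node of $X'$, so the reconnection on line 9 is well-defined; the resulting graph of skeletons is a tree because we replaced one node by a path and redirected each dangling neighbour to exactly one node of the path. Since no edge of $G-e$ was duplicated or lost, the ``edge-in-exactly-one-skeleton'' property of Definition~\ref{def:SPQR} holds; and for every new tree-edge $(z,y)$ with $z\in X'$ and $y\notin X'$, $V(\Gamma(z))\cap V(\Gamma(y))$ are still the endpoints of the shared virtual edge, hence a separation pair of $G-e$ by the same argument that worked for $T$ before the update.

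Finally, I would verify the last bullet of Definition~\ref{def:SPQR} (no two S-nodes adjacent, and no two P-nodes adjacent). Internally along $X'$ the property holds because $X'$ is itself a valid SPQR-tree of $\Gamma(x)-e$. The only possible violations are between an endpoint or internal node $z$ of $X'$ and an external neighbour $z'$ (a former neighbour of $x$) of the same type; these are precisely the pairs handled by the loop on lines 10--12. Here I need to check that merging preserves the SPQR-tree property: if $z,z'$ are both S-nodes, their shared virtual edge lies inside both cycle skeletons, so splicing the two cycles along this edge (and deleting the virtual copy in each) yields a longer simple cycle, still a valid S-skeleton; the analogous argument for two P-nodes yields a bundle of parallel edges. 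Lemma~\ref{lem:classescircular} guarantees that the cyclic orderings agree so the merge is well-defined embedding-wise. The merging process terminates because each iteration strictly reduces the node count of $T$, and after it terminates every S-node (resp.\ P-node) in $X'$ has only non-S (resp.\ non-P) neighbours both inside and outside $X'$.

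The main obstacle I anticipate is the last step: carefully enumerating which adjacencies between $X'$ and the rest of $T$ can violate the S/P-adjacency invariant, and showing that the greedy merging loop handles them all, including the case where a single node $z\in X'$ has \emph{several} external neighbours of the same type (these were distinct neighbours of $x$ before the update, so were pairwise non-adjacent, and now end up as neighbours of the same merged node, one after another). All of the routine skeleton-update bookkeeping rests on Lemma~\ref{lem:classescircular} (for consistent circular orderings) and on the explicit virtual-edge correspondence maintained by the data structure.
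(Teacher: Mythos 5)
Your proposal is correct and follows essentially the same route as the paper's proof: dispose of the case where $\Gamma(x)-e$ stays triconnected, invoke Lemma~\ref{lem:spqrpath} to replace $x$ by the SPQR-path $X'$, observe that any violation of the S/P-adjacency rule must pair a node of $X'$ with a former neighbour of $x$ outside $X'$ (since $X'$ is internally a valid SPQR-tree), and check that merging such same-type pairs restores validity. Your treatment is somewhat more detailed than the paper's (e.g.\ the explicit splicing argument for merged skeletons and the case of several external same-type neighbours, which the paper handles by a brief count of virtual edges), but it is the same argument in substance.
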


\begin{proof}
If after removing the edge, $\Gamma(x)$ is triconnected, clearly the tree is a valid SPQR-tree.
Otherwise, by Lemma~\ref{lem:spqrpath}, $\Gamma(x)$ is represented by a SPQR-path.
It is easy to see that after replacing $x$ by the SPQR-path $X'$, we obtain a valid SPQR-tree, unless there are two neighboring $S$-nodes or $P$-nodes.
Since the SPQR-path is a SPQR-tree, each such pair contains exactly one node $z$ from the SPQR-path. If it is a $P$ node, it can not be the end of the path, and so has at least $2$ virtual edges to its neighbors on the path, and at most one more virtual edge to a neighbor $z'$ outside the path. If it is an $S$-node it may have up to $2$ virtual edges to a neighbor $z'$ outside the path.
Clearly, if $z$ and $z'$ have the same type, they can be merged into one node, and this yields a valid SPQR-tree.
\end{proof}

\begin{lemma}\label{lem:removes}
Algorithm~\ref{alg:removes} is correct.
\end{lemma}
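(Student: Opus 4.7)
The plan is to verify that after Algorithm~\ref{alg:removes} terminates, the resulting SPQR-forest together with the updated BC-tree satisfies Definition~\ref{def:SPQR} for each biconnected component of the graph obtained by deleting $e$. First I would pin down the structural picture: since $x$ is an $S$-node, $\Gamma(x)$ is a simple cycle with $k\ge 3$ edges, each either non-virtual (an edge of $G$) or virtual (corresponding to an SPQR-edge to some neighbor $y$). Removing $e$ turns this cycle into a path, and in $G-e$ the two endpoints of $e$ (and all the virtual-edge separation pairs along the former cycle) become articulation points. Consequently the old block $B$ falls apart into a chain of new biconnected components $B_1,\ldots,B_{k-1}$, one per remaining edge $e_i'$ of $\Gamma(x)$.

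Next I would identify each $B_i$ concretely. If $e_i'$ is non-virtual, then $B_i$ is a trivial two-vertex block consisting of a single edge; no SPQR-tree is required and it suffices to create a fresh BC-node, which is exactly what the algorithm does. If $e_i'$ is virtual with neighbor $y_i$ in $T$, then by Definition~\ref{def:SPQR} the subtree $T_i$ of $T$ rooted at $y_i$ (after deleting the node $x$) contains precisely the skeleton data for the edges of $G$ that lay on the $y_i$-side of the separation pair recorded by $e_i'$, and these are exactly the edges of $B_i$ in $G-e$. Thus removing $x$ from $T$ produces a forest whose trees are, up to one spurious virtual edge $e_i'$ still living in $\Gamma(y_i)$, the desired SPQR-trees of the $B_i$.

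The third step is to dispose of this phantom virtual edge. Since no two $S$-nodes can be neighbors in an SPQR-tree, $y_i$ is either a $P$-node or an $R$-node, so the dispatch to \removep\ or \remover\ is well-defined, and by Lemmas~\ref{lem:removep} and~\ref{lem:remover} those subroutines correctly remove $e_i'$ from $\Gamma(y_i)$ and repair any consequent degeneracies (merging or contracting degenerate nodes, coalescing same-type neighbors, etc.). I would finish by noting that the BC-tree is updated in the natural way: the single block node for $B$ is replaced by a path of $k-1$ new block nodes joined at the articulation points read off from the endpoints of the path $\Gamma(x)-e$, and the sub-BC-trees that were attached to $B$ via one of these articulation vertices are reattached to the correct new block node.

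The main obstacle I expect is making the second step watertight, i.e.\ showing that the subtree $T_i$ together with $\Gamma(y_i)-e_i'$ is exactly the SPQR-decomposition of $B_i$ rather than merely a valid one. This requires arguing that (a) the vertex set and edge set of $B_i$ are faithfully represented by $T_i$, (b) removing $x$ creates no new intra-$T_i$ adjacencies that could violate the ``no two $S$- or $P$-nodes adjacent'' rule (those would have to be incident to $x$, and handling them is precisely the job delegated to \removep/\remover), and (c) any such clean-ups needed at the top of $T_i$ are exactly the ones performed by those subroutines, which reduces the claim to Lemmas~\ref{lem:removep} and~\ref{lem:remover}.
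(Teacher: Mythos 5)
Your proposal follows essentially the same route as the paper's proof: the block splits into one biconnected component per remaining edge of $\Gamma(x)$, non-virtual edges become trivial blocks, and for each virtual edge the leftover virtual edge in the neighbor's skeleton is removed by dispatching to \removep{} or \remover{} (well-defined since no two $S$-nodes are adjacent), with correctness delegated to Lemmas~\ref{lem:removep} and~\ref{lem:remover}. One small slip: the endpoints $u,w$ of the deleted edge do \emph{not} become articulation points --- exactly the vertices of $\Gamma(x)$ other than $u$ and $w$ (the internal vertices of the path $\Gamma(x)-e$) do --- but this side remark is not used in your argument, so the proof stands.
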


\begin{proof}
Observe that after removing an edge $e = uw$, each vertex of $\Gamma(x)$ distinct from $u$ and $w$ is an articulation point.
Thus, each neighbor of $x$ now belongs to a different biconnected component.
Thus, we update $T$ by deleting $x$, which breaks $T$ into a piece for each neighbor $y$. For each piece we create a new BC-node $z$.

Each non-virtual edge of $\Gamma(x)$ becomes a biconnected component by itself, so we could simply ignore these edges from now on.
For each virtual edge of $\Gamma(x)$, we delete the corresponding edge in the neighbor of $x$ using an appropriate function.
From Lemmas~\ref{lem:removep} and~\ref{lem:remover} it follows that the SPQR-trees are updated correctly.
\end{proof}

\bibliographystyle{plain}
\bibliography{references}

\begin{thebibliography}{10}

\bibitem{AD16}
Amir Abboud and S{\o}ren Dahlgaard.
\newblock Popular conjectures as a barrier for dynamic planar graph algorithms.
\newblock In {\em {IEEE} 57th Annual Symposium on Foundations of Computer
  Science, {FOCS} 2016, 9-11 October 2016, Hyatt Regency, New Brunswick, New
  Jersey, {USA}}, pages 477--486, 2016.

\bibitem{AW14}
Amir Abboud and Virginia~Vassilevska Williams.
\newblock Popular conjectures imply strong lower bounds for dynamic problems.
\newblock In {\em 55th {IEEE} Annual Symposium on Foundations of Computer
  Science, {FOCS} 2014, Philadelphia, PA, USA, October 18-21, 2014}, pages
  434--443, 2014.

\bibitem{Abraham2012}
Ittai Abraham, Shiri Chechik, and Cyril Gavoille.
\newblock Fully dynamic approximate distance oracles for planar graphs via
  forbidden-set distance labels.
\newblock In {\em Proceedings of the 44th Symposium on Theory of Computing
  Conference, {STOC} 2012, New York, NY, USA, May 19 - 22, 2012}, pages
  1199--1218, 2012.

\bibitem{Angelini2013}
Patrizio Angelini, Thomas Bl{\"{a}}sius, and Ignaz Rutter.
\newblock Testing mutual duality of planar graphs.
\newblock {\em Int. J. Comput. Geometry Appl.}, 24(4):325--346, 2014.

\bibitem{ARCHDEACON199237}
Dan Archdeacon and R~Bruce Richter.
\newblock The construction and classification of self-dual spherical polyhedra.
\newblock {\em Journal of Combinatorial Theory, Series B}, 54(1):37 -- 63,
  1992.

\bibitem{bgs2015}
Surender Baswana, Manoj Gupta, and Sandeep Sen.
\newblock Fully dynamic maximal matching in ${O}(\log n)$ update time.
\newblock {\em {SIAM} J. Comput.}, 44(1):88--113, 2015.

\bibitem{doi:10.1137/0406017}
Graham~R. Brightwell and Edward~R. Scheinerman.
\newblock Representations of planar graphs.
\newblock {\em SIAM Journal on Discrete Mathematics}, 6(2):214--229, 1993.

\bibitem{Brinkmann:2005:GSQ:2651845.2652314}
Gunnar Brinkmann, Sam Greenberg, Catherine Greenhill, Brendan~D. Mckay, Robin
  Thomas, and Paul Wollan.
\newblock Generation of simple quadrangulations of the sphere.
\newblock {\em Discrete Math.}, 305(1-3):33--54, December 2005.

\bibitem{CHILP16}
Shiri Chechik, Thomas~Dueholm Hansen, Giuseppe~F. Italiano, Jakub {{\L}{\c
  a}cki}, and Nikos Parotsidis.
\newblock Decremental single-source reachability and strongly connected
  components in $\widetilde{O}(m\sqrt{n}\,)$ total update time.
\newblock In {\em {IEEE} 57th Annual Symposium on Foundations of Computer
  Science, {FOCS} 2016, 9-11 October 2016, Hyatt Regency, New Brunswick, New
  Jersey, {USA}}, pages 315--324, 2016.

\bibitem{DI04}
Camil Demetrescu and Giuseppe~F. Italiano.
\newblock A new approach to dynamic all pairs shortest paths.
\newblock {\em J. {ACM}}, 51(6):968--992, 2004.

\bibitem{DI08}
Camil Demetrescu and Giuseppe~F. Italiano.
\newblock Mantaining dynamic matrices for fully dynamic transitive closure.
\newblock {\em Algorithmica}, 51(4):387--427, 2008.

\bibitem{Battista:96}
Giuseppe {Di Battista} and Roberto Tamassia.
\newblock On-line maintenance of triconnected components with {SPQR}-trees.
\newblock {\em Algorithmica}, 15(4):302--318, 1996.

\bibitem{BT96}
Giuseppe {Di Battista} and Roberto Tamassia.
\newblock On-line planarity testing.
\newblock {\em {SIAM} J. Comput.}, 25(5):956--997, 1996.

\bibitem{Diks2007}
Krzysztof Diks and Piotr Sankowski.
\newblock Dynamic plane transitive closure.
\newblock In {\em Algorithms - {ESA} 2007, 15th Annual European Symposium,
  Eilat, Israel, October 8-10, 2007, Proceedings}, pages 594--604, 2007.

\bibitem{EGIN97}
David Eppstein, Zvi Galil, Giuseppe~F. Italiano, and Amnon Nissenzweig.
\newblock Sparsification - a technique for speeding up dynamic graph
  algorithms.
\newblock {\em J. {ACM}}, 44(5):669--696, 1997.

\bibitem{Eppstein96}
David Eppstein, Zvi Galil, Giuseppe~F. Italiano, and Thomas~H. Spencer.
\newblock Separator based sparsification {I}: {P}lanarity testing and minimum
  spanning trees.
\newblock {\em J. Comput. Syst. Sci.}, 52(1):3--27, 1996.

\bibitem{Eppstein:1998}
David Eppstein, Zvi Galil, Giuseppe~F. Italiano, and Thomas~H. Spencer.
\newblock Separator-based sparsification {II:} {E}dge and vertex connectivity.
\newblock {\em {SIAM} J. Comput.}, 28(1):341--381, 1998.
\newblock Announced at STOC '93.

\bibitem{Eppstein92}
David Eppstein, Giuseppe~F. Italiano, Roberto Tamassia, Robert~Endre Tarjan,
  Jeffery Westbrook, and Moti Yung.
\newblock Maintenance of a minimum spanning forest in a dynamic plane graph.
\newblock {\em J. Algorithms}, 13(1):33--54, 1992.

\bibitem{FR06}
Jittat Fakcharoenphol and Satish Rao.
\newblock Planar graphs, negative weight edges, shortest paths, and near linear
  time.
\newblock {\em J. Comput. Syst. Sci.}, 72(5):868--889, 2006.

\bibitem{Giammarresi:96}
Dora Giammarresi and Giuseppe~F. Italiano.
\newblock Decremental 2- and 3-connectivity on planar graphs.
\newblock {\em Algorithmica}, 16(3):263--287, 1996.
\newblock Announced at SWAT 1992.

\bibitem{Gustedt98}
Jens Gustedt.
\newblock Efficient union-find for planar graphs and other sparse graph
  classes.
\newblock {\em Theor. Comput. Sci.}, 203(1):123--141, 1998.

\bibitem{Gutwenger2001}
Carsten Gutwenger and Petra Mutzel.
\newblock {\em A Linear Time Implementation of SPQR-Trees}, pages 77--90.
\newblock Springer Berlin Heidelberg, Berlin, Heidelberg, 2001.

\bibitem{Harary69}
Frank Harary.
\newblock {\em Graph Theory}.
\newblock Addison-Wesley Series in Mathematics. Addison Wesley, 1969.

\bibitem{HeTh97}
Monika~Rauch Henzinger and Mikkel Thorup.
\newblock Sampling to provide or to bound: With applications to fully dynamic
  graph algorithms.
\newblock {\em Random Struct. Algorithms}, 11(4):369--379, 1997.

\bibitem{HoLiTh01}
Jacob Holm, Kristian de~Lichtenberg, and Mikkel Thorup.
\newblock Poly-logarithmic deterministic fully-dynamic algorithms for
  connectivity, minimum spanning tree, 2-edge, and biconnectivity.
\newblock {\em J. {ACM}}, 48(4):723--760, 2001.

\bibitem{2017arXiv170610228H}
Jacob {Holm}, Giuseppe~F. {Italiano}, Adam {Karczmarz}, Jakub {{\L}{\c a}cki},
  Eva {Rotenberg}, and Piotr {Sankowski}.
\newblock {Contracting a Planar Graph Efficiently}.
\newblock {\em ArXiv e-prints}, June 2017.
\newblock \url{https://arxiv.org/abs/1706.10228v1} Accepted for ESA 2017.

\bibitem{Holm2017}
Jacob Holm and Eva Rotenberg.
\newblock Dynamic planar embeddings of dynamic graphs.
\newblock {\em Theory of Computing Systems}, Apr 2017.

\bibitem{Holm15}
Jacob Holm, Eva Rotenberg, and Christian Wulff{-}Nilsen.
\newblock Faster fully-dynamic minimum spanning forest.
\newblock In {\em Algorithms - {ESA} 2015 - 23rd Annual European Symposium,
  Patras, Greece, Sept. 14-16, 2015, Proceedings}, pages 742--753, 2015.

\bibitem{hopcroft1973dividing}
John~E. Hopcroft and Robert~Endre Tarjan.
\newblock Dividing a graph into triconnected components.
\newblock {\em SIAM Journal on Computing}, 2(3):135--158, 1973.

\bibitem{Hopcroft73}
John~E. Hopcroft and Robert~Endre Tarjan.
\newblock A {V} log {V} algorithm for isomorphism of triconnected planar
  graphs.
\newblock {\em J. Comput. Syst. Sci.}, 7(3):323--331, 1973.

\bibitem{Hopcroft74}
John~E. Hopcroft and J.~K. Wong.
\newblock Linear time algorithm for isomorphism of planar graphs (preliminary
  report).
\newblock In {\em Proceedings of the 6th Annual {ACM} Symposium on Theory of
  Computing, April 30 - May 2, 1974, Seattle, Washington, {USA}}, pages
  172--184, 1974.

\bibitem{IKLS17}
Giuseppe~F. Italiano, Adam Karczmarz, Jakub {{\L}{\c a}cki}, and Piotr
  Sankowski.
\newblock Decremental single-source reachability in planar digraphs.
\newblock In {\em Proceedings of the 49th Annual {ACM} {SIGACT} Symposium on
  Theory of Computing, {STOC} 2017, Montreal, QC, Canada, June 19-23, 2017},
  pages 1108--1121, 2017.

\bibitem{INSW11}
Giuseppe~F. Italiano, Yahav Nussbaum, Piotr Sankowski, and Christian
  Wulff{-}Nilsen.
\newblock Improved algorithms for min cut and max flow in undirected planar
  graphs.
\newblock In {\em Proceedings of the 43rd {ACM} Symposium on Theory of
  Computing, {STOC} 2011, San Jose, CA, USA, 6-8 June 2011}, pages 313--322,
  2011.

\bibitem{kant2001}
Goossen Kant.
\newblock Algorithms for drawing planar graphs.
\newblock 2001.

\bibitem{KMNS12}
Haim Kaplan, Shay Mozes, Yahav Nussbaum, and Micha Sharir.
\newblock Submatrix maximum queries in {M}onge matrices and {M}onge partial
  matrices, and their applications.
\newblock In {\em Proceedings of the Twenty-Third Annual {ACM-SIAM} Symposium
  on Discrete Algorithms, {SODA} 2012, Kyoto, Japan, January 17-19, 2012},
  pages 338--355, 2012.

\bibitem{KaKiMo13}
Bruce~M. Kapron, Valerie King, and Ben Mountjoy.
\newblock Dynamic graph connectivity in polylogarithmic worst case time.
\newblock In {\em Proceedings of the Twenty-Fourth Annual {ACM-SIAM} Symposium
  on Discrete Algorithms, {SODA} 2013, New Orleans, Louisiana, USA, January
  6-8, 2013}, pages 1131--1142, 2013.

\bibitem{KR16}
Casper Kejlberg{-}Rasmussen, Tsvi Kopelowitz, Seth Pettie, and Mikkel Thorup.
\newblock Faster worst case deterministic dynamic connectivity.
\newblock In {\em 24th Annual European Symposium on Algorithms, {ESA} 2016,
  August 22-24, 2016, Aarhus, Denmark}, pages 53:1--53:15, 2016.

\bibitem{K99}
Valerie King.
\newblock Fully dynamic algorithms for maintaining all-pairs shortest paths and
  transitive closure in digraphs.
\newblock In {\em 40th Annual Symposium on Foundations of Computer Science,
  {FOCS} '99, 17-18 October, 1999, New York, NY, {USA}}, pages 81--91, 1999.

\bibitem{K05}
Philip~N. Klein.
\newblock Multiple-source shortest paths in planar graphs.
\newblock In {\em Proceedings of the Sixteenth Annual {ACM-SIAM} Symposium on
  Discrete Algorithms, {SODA} 2005, Vancouver, BC, Canada, January 23-25,
  2005}, pages 146--155, 2005.

\bibitem{Klein:book}
Philip~N. Klein and Shay Mozes.
\newblock Optimization algorithms for planar graphs, 2017.

\bibitem{Klein:13}
Philip~N. Klein, Shay Mozes, and Christian Sommer.
\newblock Structured recursive separator decompositions for planar graphs in
  linear time.
\newblock In {\em Symposium on Theory of Computing Conference, STOC'13, Palo
  Alto, CA, USA, June 1-4, 2013}, pages 505--514, 2013.

\bibitem{scc-decomposition}
Jakub {{\L}{\c a}cki}.
\newblock Improved deterministic algorithms for decremental reachability and
  strongly connected components.
\newblock {\em {ACM} Trans. Algorithms}, 9(3):27:1--27:15, 2013.

\bibitem{steiner-tree}
Jakub {{\L}{\c a}cki}, Jakub O\'{c}wieja, Marcin Pilipczuk, Piotr Sankowski,
  and Anna Zych.
\newblock The power of dynamic distance oracles: Efficient dynamic algorithms
  for the {S}teiner tree.
\newblock In {\em Proceedings of the Forty-Seventh Annual {ACM} on Symposium on
  Theory of Computing, {STOC} 2015, Portland, OR, USA, June 14-17, 2015}, pages
  11--20, 2015.

\bibitem{Lacki2011}
Jakub {{\L}{\c a}cki} and Piotr Sankowski.
\newblock Min-cuts and shortest cycles in planar graphs in ${O}(n\log\log{n})$
  time.
\newblock In {\em Algorithms - {ESA} 2011 - 19th Annual European Symposium,
  Saarbr{\"{u}}cken, Germany, September 5-9, 2011. Proceedings}, pages
  155--166, 2011.

\bibitem{decremental-connectivity}
Jakub {{\L}{\c a}cki} and Piotr Sankowski.
\newblock Optimal decremental connectivity in planar graphs.
\newblock In {\em 32nd International Symposium on Theoretical Aspects of
  Computer Science, {STACS} 2015, March 4-7, 2015, Garching, Germany}, pages
  608--621, 2015.

\bibitem{Menger:1927}
Karl Menger.
\newblock Zur allgemeinen kurventheorie.
\newblock {\em Fund. Math.}, 10:96--115, 1927.

\bibitem{NS17}
Danupon Nanongkai and Thatchaphol Saranurak.
\newblock Dynamic spanning forest with worst-case update time: adaptive, {L}as
  {V}egas, and {O}(n\({}^{\mbox{1/2 - {\(\epsilon\)}}}\))-time.
\newblock In {\em Proceedings of the 49th Annual {ACM} {SIGACT} Symposium on
  Theory of Computing, {STOC} 2017, Montreal, QC, Canada, June 19-23, 2017},
  pages 1122--1129, 2017.

\bibitem{NSW17}
Danupon Nanongkai, Thatchaphol Saranurak, and Christian Wulff{-}Nilsen.
\newblock Dynamic minimum spanning forest with subpolynomial worst-case update
  time.
\newblock In {\em Proceedings of the 58th Annual Symposium on Foundations of
  Computer Science, {FOCS} 2017}, 2017.
\newblock To appear.

\bibitem{RZ08}
Liam Roditty and Uri Zwick.
\newblock Improved dynamic reachability algorithms for directed graphs.
\newblock {\em {SIAM} J. Comput.}, 37(5):1455--1471, 2008.

\bibitem{NYAS:NYAS340}
Pierre Rosenstiehl.
\newblock Embedding in the plane with orientation constraints: The angle graph.
\newblock {\em Annals of the New York Academy of Sciences}, 555(1):340--346,
  1989.

\bibitem{S04}
Piotr Sankowski.
\newblock Dynamic transitive closure via dynamic matrix inverse (extended
  abstract).
\newblock In {\em 45th Symposium on Foundations of Computer Science {FOCS}
  2004, 17-19 October 2004, Rome, Italy, Proceedings}, pages 509--517, 2004.

\bibitem{S16}
Shay Solomon.
\newblock Fully dynamic maximal matching in constant update time.
\newblock In {\em {IEEE} 57th Annual Symposium on Foundations of Computer
  Science, {FOCS} 2016, 9-11 October 2016, Hyatt Regency, New Brunswick, New
  Jersey, {USA}}, pages 325--334, 2016.

\bibitem{Sub-ESA-93}
Sairam Subramanian.
\newblock A fully dynamic data structure for reachability in planar digraphs.
\newblock In {\em Algorithms - {ESA} '93, First Annual European Symposium, Bad
  Honnef, Germany, September 30 - October 2, 1993, Proceedings}, pages
  372--383, 1993.

\bibitem{Thorup00}
Mikkel Thorup.
\newblock Near-optimal fully-dynamic graph connectivity.
\newblock In {\em Proceedings of the Thirty-Second Annual {ACM} Symposium on
  Theory of Computing, May 21-23, 2000, Portland, OR, {USA}}, pages 343--350,
  2000.

\bibitem{T05}
Mikkel Thorup.
\newblock Worst-case update times for fully-dynamic all-pairs shortest paths.
\newblock In {\em Proceedings of the 37th Annual {ACM} Symposium on Theory of
  Computing, Baltimore, MD, USA, May 22-24, 2005}, pages 112--119, 2005.

\bibitem{Nilsen13}
Christian Wulff{-}Nilsen.
\newblock Faster deterministic fully-dynamic graph connectivity.
\newblock In {\em Proceedings of the Twenty-Fourth Annual {ACM-SIAM} Symposium
  on Discrete Algorithms, {SODA} 2013, New Orleans, Louisiana, USA, January
  6-8, 2013}, pages 1757--1769, 2013.

\bibitem{W17}
Christian Wulff{-}Nilsen.
\newblock Fully-dynamic minimum spanning forest with improved worst-case update
  time.
\newblock In {\em Proceedings of the 49th Annual {ACM} {SIGACT} Symposium on
  Theory of Computing, {STOC} 2017, Montreal, QC, Canada, June 19-23, 2017},
  pages 1130--1143, 2017.

\end{thebibliography}
\end{document}